\documentclass[12pt]{article}
\usepackage{mathtools,amssymb,amsthm}
\usepackage{newtxtext,newtxmath,bm,pifont}
\usepackage{cases}
\usepackage{mathrsfs}
\usepackage{graphicx}
\usepackage[bookmarks=true,bookmarksnumbered=true,colorlinks=true]{hyperref}
\usepackage{color}
\usepackage[tight]{subfigure}
\usepackage{booktabs}
\usepackage{multirow}
\usepackage{makecell}
\AtBeginDvi{}
\setlength{\textheight}{23cm}
\setlength{\textwidth}{16.5cm}
\setlength{\oddsidemargin}{0pt}
\setlength{\evensidemargin}{0pt}
\setlength{\topmargin}{-1cm}
\theoremstyle{plain}
\newtheorem{thm}{Theorem}[section]
\newtheorem{prop}[thm]{Proposition}

\newtheorem{cor}[thm]{Corollary}

\newtheorem{rem}[thm]{Remark}

\newcommand*{\tr}[2][-3mu]{\ensuremath{\mskip1mu\prescript{\smash{\mathrm
        t\mkern#1}}{}{\mathstrut#2}}}


\makeatletter
\@addtoreset{equation}{section} 

\makeatother

\usepackage[normalem]{ulem}
\usepackage{cancel}

\makeatletter
\newcommand*{\diff}{\@ifnextchar^{\DIfF}{\DIfF^{}}}
\def\DIfF^#1{\mathop{\mathrm{\mathstrut d}}\nolimits^{#1}\gobblespace}
\def\gobblespace{\futurelet\diffarg\opspace}
\def\opspace{\let\DiffSpace\!\ifx\diffarg(\let\DiffSpace\relax\else\ifx\diffarg[\let\DiffSpace\relax\else\ifx\diffarg\{\let\DiffSpace\relax\fi\fi\fi\DiffSpace}

\newcommand*{\pderiv}[3][]{\frac{\partial^{#1}#2}{\partial #3^{#1}}}

\newcommand*{\R}[1][] {\mathbb{R}^{#1}}
\newcommand*{\N}[1][] {\mathbb{N}^{#1}}
\newcommand*{\norm}[1]{\left\lVert#1\right\rVert}

\newcommand*{\ord}[1]{\mathcal{O}(#1)}
\newcommand*{\order}[1][]{\ord{h^{#1}}}
\DeclareMathOperator*{\argmin}{arg\,min}
\DeclareMathOperator{\sn}{sn}
\DeclareMathOperator{\cn}{cn}
\DeclareMathOperator{\dn}{dn}

\DeclareMathOperator{\st}{|}
\newcommand*{\braket}[2]{\left<{#1},{#2}\right>}

\begin{document}
\begin{center}
\begin{large}
Fairing of Discrete Planar Curves by Integrable Discrete Analogue of Euler's Elasticae \\[5mm]  
\end{large}
\begin{normalsize}
Sebasti\'an El\'ias {\sc Graiff Zurita}\\
Graduate School of Mathematics
Kyushu  University\\
744 Motooka, Nishi-ku, Fukuoka 819-0935, Japan\\[2mm] 
Kenji {\sc Kajiwara}\\
Institute of Mathematics for Industry, Kyushu University\\
744 Motooka, Fukuoka 819-0395, Japan\\[2mm]
Toshitomo {\sc Suzuki}\\
Department of Architecture, Mukogawa Woman's University\\
1-13 Tozaki-cho, Nishinomiya, Hyogo 663-8121
\end{normalsize}
\end{center}

\begin{abstract}
  We construct a method to fair a given discrete planar curve by using the integrable discrete
  analogue of Euler's elastica, which is a discrete version of the approximation algorithm presented
  by D. Brander, et al.  We first give a brief review of the integrable discrete analogue of Euler's
  elastica proposed by A. I. Bobenko and Yu. B. Suris, then we present a detailed account of the
  fairing algorithm, and we apply this method to an architectural problem of characterizing the
  keylines of Japanese handmade pantiles.
\end{abstract}

\noindent Keywords: Euler's elastica, integrable systems, discrete curve,  discrete differential geometry

\section{Introduction} \label{sec:cont.planar}
The Euler's elastica (elastic curve) is a class of planar curves characterized as the solutions to
the variational problem of minimizing the elastic energy under a certain boundary condition.  It has
been regarded as one of the most important class of planar curves because it is endowed with rich
mathematical structure:  exact solutions, integrability, geometry of elliptic curves, and so on,
while it serves as a simple but realistic model of thin inextensible elastic rod (see, for example,
\cite{Matsutani:elastica,Singer:Lecture}). Brander et al. \cite{brander} have proposed an algorithm
to fair a given planar curve segment by an Euler's elastica, motivated mainly by the development of
the robotic hot-blade cutting technology.  In this work, motivated by a problem of architecture to
characterize the keylines of Japanese handmade pantiles, where the curve data is obtained in the
form of discrete point data, we aim to construct a fairing method of discrete planar curves by using
the integrable discrete analogue of the Euler's elastica proposed by Bobenko and Suris
\cite{bobenko.1999}, which is referred to as the {\em discrete elastica} in this paper.

This paper is organized as follows. We give a brief review of the Euler's elastica and the discrete
elastica in Sections \ref{sec:elastica} and \ref{sec:disc.planar}, collecting the information on
variational formulations, exact solutions and continuum limits with proofs.  We present a detailed
account of the fairing of a given discrete planar curve by the discrete elastica in Section
\ref{sec:approx.elastica}. Finally, application to the characterization of the keylines of Japanese
handmade pantiles is discussed in Section \ref{sec:application}. For various formulas of the
Jacobi elliptic functions used in this paper, the readers may refer to \cite{DLMF}, for example.
%
\section{Euler's elastica}\label{sec:elastica}
Let $\gamma(s) \in \R[2]$ ($s \in \R$) be an arc length parameterized planar curve.  By definition,
it holds that $\norm{\gamma'(s)} = 1$, where ${}' = \frac{d}{ds}$.  The tangent and normal vectors
are defined by $T(s) = \gamma'(s)$ and $ N(s) = R(\pi/2)T(s)$, respectively, where
\begin{equation}
  R(\varphi)=\begin{bmatrix}\cos\varphi & -\sin\varphi\\\sin\varphi &\cos\varphi\end{bmatrix}.
\end{equation}
Due to $\norm{\gamma'(s)} = 1$, it is possible to parameterize
the tangent vector as
\begin{equation}
 T(s) = \begin{bmatrix} \cos\theta(s)\\\sin\theta(s)\end{bmatrix},
\end{equation}
where the {angle function} $\theta(s)$ is the angle of $T(s)$ measured from the horizontal axis
in the counterclockwise direction.  Introducing the {Frenet frame} $\Phi(s)$ by
\begin{equation}
 \Phi(s) =[T(s),N(s)]\in\text{SO}(2),
\end{equation}
we have the {Frenet formula},
\begin{equation}\label{eqn:Frenet_formula}
\Phi'(s) = \Phi(s) L(s),\quad L(s)=\begin{bmatrix}0 & -\kappa(s)\\ \kappa(s) & 0 \end{bmatrix},
\end{equation}
where $\kappa(s) = \theta'(s)$ is the (signed) curvature.
%
\begin{figure}[h]
\begin{center}
\includegraphics[scale=0.6]{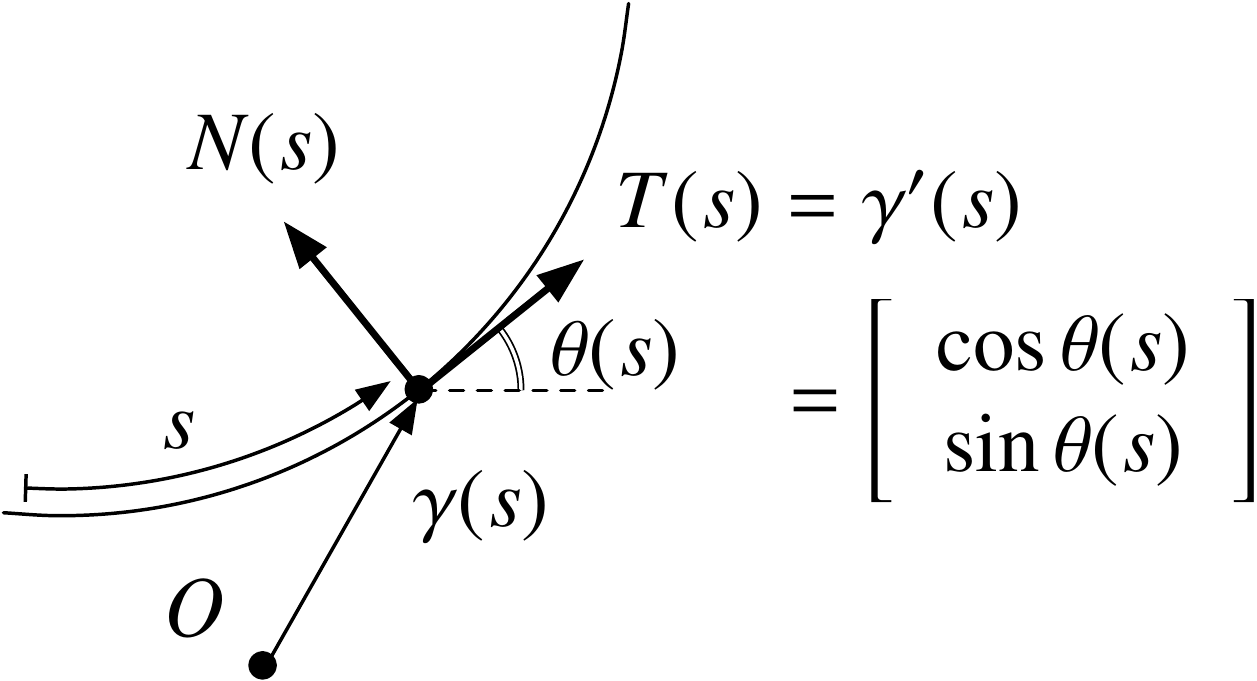}
\caption{Smooth planar curve and the Frenet frame.}  
\end{center}
\end{figure}
%
The {Euler's elastica} (or simply referred to as the {elastica}) is defined as a critical
point of the {elastic energy}
\begin{equation} \label{eqn:elastic.energy}
 E = \int_0^L \big(\kappa(s)\big)^2\,ds,
\end{equation}
with respect to variations with fixed endpoints and tangent vectors at the endpoints, under the
condition of preserving the total length. The Euler-Lagrange equation yields the following
differential equations for the curvature and the angle function:
%
\begin{prop}\label{prop:Euler-Lagrange}
The curvature $\kappa$ of the Euler's elastica satisfies
\begin{equation}\label{eqn:elastica_ODE_kappa}
 \kappa'' + \frac{1}{2}\kappa^3 - \lambda\kappa = 0,
\end{equation} 
where $\lambda\in\mathbb{R}$ is a constant. Moreover, the angle function $\theta$ satisfies
\begin{equation}\label{eqn:elastica_ODE_theta}
 \theta'' +\mu\sin\theta=0,
\end{equation}
where $\mu>0$ is a constant.
\end{prop}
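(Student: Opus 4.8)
The plan is to reduce the two–dimensional variational problem to a one–dimensional one for the angle function $\theta$, to derive the pendulum equation \eqref{eqn:elastica_ODE_theta} first, and then to obtain the curvature equation \eqref{eqn:elastica_ODE_kappa} from it. Since the curve is arc length parameterized, it is recovered from $\theta$ up to translation by $\gamma(s)=\gamma(0)+\int_0^s T(u)\,du$, and the energy becomes $E=\int_0^L (\theta')^2\,ds$ because $\kappa=\theta'$. The key advantage is that working with $\theta$ automatically encodes $\norm{\gamma'}=1$: any variation $\theta\mapsto\theta+\epsilon\eta$ keeps $\gamma'=(\cos,\sin)\circ(\theta+\epsilon\eta)$ of unit norm on the fixed domain $[0,L]$, so the total length $\int_0^L ds=L$ is preserved for free. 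What survives of the boundary data is that $\theta(0),\theta(L)$ are fixed (fixed tangents) and that the displacement $\gamma(L)-\gamma(0)=\left(\int_0^L\cos\theta\,ds,\,\int_0^L\sin\theta\,ds\right)$ is fixed (fixed endpoints).

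First I would encode the two scalar endpoint constraints $\int_0^L\cos\theta\,ds=c_1$ and $\int_0^L\sin\theta\,ds=c_2$ by Lagrange multipliers $\lambda_1,\lambda_2$, forming the augmented Lagrangian $\mathcal{L}(\theta,\theta')=(\theta')^2-\lambda_1\cos\theta-\lambda_2\sin\theta$. Because admissible variations vanish at the endpoints, the Euler--Lagrange equation $\frac{d}{ds}\mathcal{L}_{\theta'}-\mathcal{L}_\theta=0$ holds with no residual boundary terms and reads $2\theta''=\lambda_1\sin\theta-\lambda_2\cos\theta$. Rewriting the right-hand side as a single sinusoid $\sqrt{\lambda_1^2+\lambda_2^2}\,\sin(\theta-\delta)$ and exploiting the freedom to choose the reference direction from which $\theta$ is measured (a global rotation of the curve shifts $\theta$ by a constant), I would absorb the phase $\delta$ and fix the sign, obtaining \eqref{eqn:elastica_ODE_theta} with $\mu=\tfrac12\sqrt{\lambda_1^2+\lambda_2^2}>0$; the degenerate case $\lambda_1=\lambda_2=0$ gives $\theta''=0$, i.e.\ a straight line or a circular arc, and is excluded.

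To reach \eqref{eqn:elastica_ODE_kappa} I would then use $\kappa=\theta'$ together with the first integral of the pendulum equation. Multiplying \eqref{eqn:elastica_ODE_theta} by $\theta'$ and integrating gives $\tfrac12(\theta')^2-\mu\cos\theta=\lambda$ for a constant $\lambda$, that is, $\mu\cos\theta=\tfrac12\kappa^2-\lambda$. Differentiating \eqref{eqn:elastica_ODE_theta} once more yields $\kappa''=\theta'''=-\mu\cos\theta\,\theta'=-\mu\cos\theta\,\kappa$, and substituting the expression for $\mu\cos\theta$ produces $\kappa''=-\left(\tfrac12\kappa^2-\lambda\right)\kappa$, which is exactly \eqref{eqn:elastica_ODE_kappa}. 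Thus the constant $\lambda$ in the curvature equation is identified with the conserved energy of the pendulum.

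I expect the main obstacle to be the careful setup of the constrained problem rather than the computation: one must argue that fixing the endpoints and tangents while preserving length is equivalent, in the angle variable, to the two integral constraints above, and that the arc length parameterization makes the length constraint automatic so that exactly two multipliers suffice. An alternative I would keep in reserve is the fully geometric route: perform a normal variation $\delta\gamma=b\,N$ of $E+\lambda\int ds$ using the standard first–variation formulas for $\delta\kappa$ and $\delta(ds)$, which delivers \eqref{eqn:elastica_ODE_kappa} directly with $\lambda$ now appearing as the length multiplier; its cost is the less elementary derivation of those variation formulas, which is why I favour the angle-function computation above.
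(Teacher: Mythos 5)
Your proof is correct, and it takes a genuinely different route from the paper's. The paper varies the tangent vector $T$ itself: it forms $S=\int_0^L\big(\langle T',T'\rangle+c(s)\langle T,T\rangle+\langle a,T\rangle\big)\,ds$ with a pointwise multiplier $c(s)$ for the unit-tangent constraint and a vector multiplier $a\in\mathbb{R}^2$ for the fixed displacement $\gamma(L)-\gamma(0)$, obtains the vector Euler--Lagrange equation $(\kappa^2+c)T-\kappa'N+\tfrac{a}{2}=0$, and extracts \eqref{eqn:elastica_ODE_kappa} by projecting onto $T$ and $N$ and eliminating $\langle a,T\rangle$; the pendulum equation \eqref{eqn:elastica_ODE_theta} is never derived variationally there (the paper's remark only establishes the reverse implication, pendulum $\Rightarrow$ curvature equation). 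You instead work in the angle variable, where unit speed and total length are preserved automatically, impose the two scalar closure constraints with multipliers $\lambda_1,\lambda_2$, get $2\theta''=\lambda_1\sin\theta-\lambda_2\cos\theta$, and absorb the phase by a rotation to reach \eqref{eqn:elastica_ODE_theta} with $\mu=\tfrac12\sqrt{\lambda_1^2+\lambda_2^2}$; then \eqref{eqn:elastica_ODE_kappa} follows by exactly the computation the paper relegates to its remark (multiply by $\theta'$, integrate, differentiate again). Your route buys a cleaner constraint structure (two scalars instead of a function plus a vector) and, unlike the paper's proof, it actually derives the $\theta$-equation asserted in the proposition; the paper's route buys the curvature equation directly for \emph{all} critical points. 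That points to the one small caveat in your argument: in the degenerate case $\lambda_1=\lambda_2=0$ you simply ``exclude'' circular arcs, but the first assertion of the proposition carries no non-degeneracy hypothesis, and a circle ($\kappa\equiv\mathrm{const}\neq0$) does satisfy \eqref{eqn:elastica_ODE_kappa} with $\lambda=\kappa^2/2$ while satisfying \eqref{eqn:elastica_ODE_theta} for no $\mu>0$. You should therefore add the one-line observation that when $\theta''\equiv0$ the curvature is constant and \eqref{eqn:elastica_ODE_kappa} holds trivially, so that the curvature equation is established in full generality, with the pendulum equation holding under the (implicit, unavoidable) non-degeneracy assumption.
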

Derivation of equation \eqref{eqn:elastica_ODE_kappa} is given in various literatures such as
\cite{Singer:Lecture}.  Here, we show a concise derivation using the variation of the tangent
vector.  Consider the functional
\begin{equation}
  S = \int_0^L \Big(\langle T',T'\rangle + c\langle T,T\rangle + \langle a,T\rangle\Big)\, ds,
\end{equation}
where the first term is the elastic energy, and the second and third terms correspond to the
preservation $\norm{T}$ and $\gamma(L) - \gamma(0)$, respectively, with $c = c(s)\in\mathbb{R}$ and
$a\in\mathbb{R}^2$ being the Lagrange multipliers. The variation of $S$ is calculated by using
the Frenet formula \eqref{eqn:Frenet_formula} as
\begin{align}
  \delta S &= \int_0^L \Big(2\langle T',\delta T'\rangle + 2c(s) \langle T,\delta T\rangle + \langle a,\delta T\rangle\Big)\, ds \nonumber\\
           &=2\langle T',\delta T\rangle\Big|_0^L + 
             2\int_0^L \big\langle (\kappa^2+c(s))T  -\kappa'N + \frac{a}{2}, \delta T\big\rangle\, ds.
\end{align}
The first term is the boundary term that vanishes due to the boundary condition, and the second
term gives the Euler-Lagrange equation,
\begin{equation}
(\kappa^2+c)\,T  -\kappa'N + \frac{a}{2}=0.
\end{equation}
Taking the scalar product with $T$ and $N$, we have 
\begin{gather}\label{eqn:EL_1}
\kappa^2 + c + \frac{1}{2}\langle a,T\rangle=0 \quad\text{and}\quad  -\kappa' + \frac{1}{2} \langle a,N\rangle=0,
\end{gather}
respectively.  Multiplying $\kappa'$ to both sides of the second equation of equation
\eqref{eqn:EL_1}, we find that it is integrated to give
\begin{equation}\label{eqn:EL_2}
\frac{\kappa^2}{2} - \lambda =  \frac{1}{2}\langle a,T\rangle,
\end{equation}
where $\lambda \in \R$ is a constant of integration.  Eliminating $\langle a,T\rangle$ from the
first equation of \eqref{eqn:EL_1} and equation \eqref{eqn:EL_2}, $c$ is determined consistently
as $c=-\frac{3}{2}\kappa^2+\lambda$. Differentiating the second equation of equation
\eqref{eqn:EL_1} gives
\begin{equation}\label{eqn:EL_3}
 \kappa''=-\frac{\kappa}{2}\langle a,T\rangle.
\end{equation}
Then eliminating $\langle a,T\rangle$ from equations \eqref{eqn:EL_2} and \eqref{eqn:EL_3} yields
\begin{equation}
 \kappa'' + \kappa\Big(\frac{\kappa^2}{2}-\lambda\Big)=0,
\end{equation}
which is nothing but equation \eqref{eqn:elastica_ODE_kappa}.
%
\begin{rem}\hfill
\begin{enumerate}
 \item Equation \eqref{eqn:elastica_ODE_kappa} is derived from equation \eqref{eqn:elastica_ODE_theta} as follows.
Multiplying $\theta'$ on both sides of equation \eqref{eqn:elastica_ODE_theta}, we see that 
equation \eqref{eqn:elastica_ODE_theta} is integrated to give
\begin{equation}\label{eqn:1st_integral_theta}
 \frac{1}{2}(\theta')^2 = \mu\cos\theta+\lambda,
\end{equation}
where $\lambda$ is a constant of integration.  Then differentiating equation
\eqref{eqn:elastica_ODE_theta} and using equation \eqref{eqn:1st_integral_theta} yields
\begin{equation}
 \kappa'' = -\mu\cos\theta\times\theta' = \Big(-\frac{1}{2}\kappa^2 + \lambda\Big)\kappa,
\end{equation}
which is nothing but equation \eqref{eqn:elastica_ODE_kappa}. 
 \item Equations \eqref{eqn:elastica_ODE_kappa} and \eqref{eqn:elastica_ODE_theta} can be seen as
       travelling-wave reductions of the (focusing) modified KdV equation 
\begin{equation}
 \frac{\partial \kappa}{\partial t} + \frac{3}{2}\kappa^2\frac{\partial \kappa}{\partial s} 
+ \frac{\partial^3 \kappa}{\partial s^3}=0
\end{equation}
and the sine-Gordon equation, 
\begin{equation}
 \frac{\partial^2\theta}{\partial s\partial y}=\sin\theta,
\end{equation}
respectively, where the former describes the integrable deformation of planar curves
\cite{Goldstein-Petrich,kajiwara}.
\end{enumerate}
\end{rem}
%
It is known that the differential equations \eqref{eqn:elastica_ODE_kappa} and
\eqref{eqn:elastica_ODE_theta} can be solved in terms of the Jacobi elliptic functions. This is
well-known, but we present the solutions for the readers' convenience. In the literature, the
solutions are often constructed from the the first integral of \eqref{eqn:elastica_ODE_kappa} given
by
\begin{equation}
 (\kappa^\prime)^2 + \frac{\kappa^4}{2}-\lambda\kappa^2=C,
\end{equation}
where $C$ is a conserved quantity (constant). Here we present those solutions and verify them by the
differential equations for the Jacobi elliptic functions.  
%
\begin{prop}
The curvature $\kappa$ and the angle function $\theta$ can be expressed in terms of the Jacobi
elliptic functions as follows:\\[2mm] 
(i)
\begin{align}
&\kappa = 2 k^{-1} \sqrt{\mu}\,\dn\,\Big(k^{-1}\sqrt{\mu}s; k\Big), \label{eqn:kappa_dn}\\[2mm]
&\sin\frac{\theta}{2} = \sn\Big(k^{-1}\sqrt{\mu}s;k\Big), \quad
\mu=\frac{\lambda}{2k^{-2}-1}.\label{eqn:theta_dn}
\end{align}
(ii) 
\begin{align}
&\kappa = 2 k \sqrt{\mu}\,\cn\,\Big(\sqrt{\mu}s; k\Big),\label{eqn:kappa_cn}\\[2mm]
& \sin\frac{\theta}{2} = k\sn\Big(\sqrt{\mu}s; k\Big), \quad \mu=\frac{\lambda}{2k^2-1}.
\label{eqn:theta_cn}
\end{align}
\end{prop}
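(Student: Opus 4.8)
The plan is to prove the proposition by \emph{direct verification}: I take the four displayed expressions \eqref{eqn:kappa_dn}--\eqref{eqn:theta_cn} as an ansatz and confirm that they satisfy the Euler--Lagrange equations \eqref{eqn:elastica_ODE_kappa} and \eqref{eqn:elastica_ODE_theta}, rather than integrating the first integral $(\kappa')^2 + \frac{1}{2}\kappa^4 - \lambda\kappa^2 = C$ by separation of variables. The only tools needed are the standard differentiation rules $\frac{d}{du}\sn = \cn\dn$, $\frac{d}{du}\cn = -\sn\dn$, $\frac{d}{du}\dn = -k^2\sn\cn$ and the quadratic identities $\sn^2+\cn^2 = 1$ and $\dn^2 + k^2\sn^2 = 1$ (all at modulus $k$, with arguments suppressed). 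This route sidesteps the square roots and branch choices that arise when integrating directly.

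First I would handle the curvature equation \eqref{eqn:elastica_ODE_kappa}. In case (i), writing $v = k^{-1}\sqrt{\mu}\,s$ and differentiating $\kappa = 2k^{-1}\sqrt{\mu}\,\dn(v)$ twice by the chain rule gives $\kappa'' = -2k^{-1}\mu^{3/2}\,\dn\,(\cn^2 - \sn^2)$; eliminating $\cn^2$ and $\sn^2$ in favour of $\dn$ via the quadratic identities rewrites this as a linear combination of $\dn$ and $\dn^3$. Comparing with $\frac{1}{2}\kappa^3 = 4k^{-3}\mu^{3/2}\,\dn^3$ and $-\lambda\kappa = -2k^{-1}\lambda\sqrt{\mu}\,\dn$, the $\dn^3$ contributions cancel identically, while the coefficient of $\dn$ vanishes precisely when $\mu(1-2k^{-2}) + \lambda = 0$, i.e. $\mu = \lambda/(2k^{-2}-1)$ as claimed in \eqref{eqn:theta_dn}. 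Case (ii) is entirely parallel: with $u = \sqrt{\mu}\,s$ one finds $\kappa'' = -2k\mu^{3/2}(\cn\dn^2 - k^2\sn^2\cn)$, the $\cn^3$ terms cancel against $\frac{1}{2}\kappa^3 = 4k^3\mu^{3/2}\cn^3$, and the surviving $\cn$ coefficient forces $\mu = \lambda/(2k^2-1)$ as in \eqref{eqn:theta_cn}.

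Next I would verify the angle function, where two things must be checked: consistency $\theta' = \kappa$ with the curvature formula, and the equation \eqref{eqn:elastica_ODE_theta} itself. Differentiating $\sin\frac{\theta}{2} = \sn$ in case (i) gives $\frac{1}{2}\cos\frac{\theta}{2}\,\theta' = k^{-1}\sqrt{\mu}\,\cn\dn$; since $\cos^2\frac{\theta}{2} = 1 - \sn^2 = \cn^2$ yields $\cos\frac{\theta}{2} = \cn$ on the relevant branch, the common factor $\cn$ cancels and $\theta' = 2k^{-1}\sqrt{\mu}\,\dn = \kappa$, matching \eqref{eqn:kappa_dn}. Equation \eqref{eqn:elastica_ODE_theta} then follows from $\sin\theta = 2\sin\frac{\theta}{2}\cos\frac{\theta}{2} = 2\sn\cn$ together with $\theta'' = -2\mu\sn\cn$. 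In case (ii) the same computation applies with $\cos\frac{\theta}{2} = \dn$ (now from $1 - k^2\sn^2 = \dn^2$), giving $\theta' = 2k\sqrt{\mu}\,\cn = \kappa$ and again $\theta'' + \mu\sin\theta = 0$.

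No single step is genuinely hard; the main care is bookkeeping. One must keep track of the chain-rule factors produced by the rescaled arguments $k^{-1}\sqrt{\mu}\,s$ and $\sqrt{\mu}\,s$, select the correct sign branch for $\cos\frac{\theta}{2}$ (which is $\cn$ in case (i) but $\dn$ in case (ii)), and apply the quadratic identities in the order that exposes the cancellation of the cubic term. The reward for this bookkeeping is that the constraints $\mu = \lambda/(2k^{-2}-1)$ and $\mu = \lambda/(2k^2-1)$ are not imposed by hand but \emph{emerge} as the exact conditions eliminating the residual linear term; the two cases correspond to the two sign regimes of the relevant discriminant, which is why two distinct elliptic representations are required.
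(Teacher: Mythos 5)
Your proposal is correct and follows essentially the same route as the paper: both verify the given expressions directly against equations \eqref{eqn:elastica_ODE_kappa} and \eqref{eqn:elastica_ODE_theta}, the paper by citing the second-order differential equations $y''=(2-k^2)y-2y^3$ for $\dn$ and $y''=(2k^2-1)y-2k^2y^3$ for $\cn$ with suitable scalings, you by deriving those same relations on the fly from the first-derivative formulas and quadratic identities. Your additional checks (the consistency $\theta'=\kappa$ and the emergence of the constraints on $\mu$ from the cancellation of the linear term) only make explicit what the paper leaves implicit.
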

%
\begin{proof}
It can be easily verified that equations \eqref{eqn:kappa_dn} and \eqref{eqn:kappa_cn} satisfy
equation \eqref{eqn:elastica_ODE_kappa} from the differential equations for the $\dn$ and $\cn$
functions \cite{DLMF}
\begin{align}
& y=\dn(x,k),\quad \frac{d^2y}{dx^2} = (2-k^2)y - 2y^3,\\[2mm]
& y=\cn(x,k),\quad \frac{d^2y}{dx^2} = (2k^2-1)y - 2k^2y^3,
\end{align}
respectively, by applying suitable scale transformations. Also, equations \eqref{eqn:theta_dn} and
\eqref{eqn:theta_cn} are shown to satisfy equation \eqref{eqn:elastica_ODE_theta} in a similar manner.
\end{proof}
%
\begin{rem}
  The Jacobi elliptic functions can be extended to modules $k>1$, see, for example,
  \cite{lawden,DLMF}.  Thus, there exists an analytic continuation for all Jacobi elliptic functions
  in the range $k \geq 0$.  In this way, cases (i) and (ii) can be regarded as one.  As it is known,
  the case (i) (resp. (ii)) yields the elastica without (resp. with) inflection points; see Figure
  \ref{fig:smooth_discrete_elasticae}.
\end{rem}
%
\section{Integrable discrete Euler's elastica} \label{sec:disc.planar}
\subsection{Basic framework of discrete planar curves}
We first introduce the basic framework for discrete planar curves
\cite{hoffmann,IKMO:space_curve,Matsuura:mKdV}.  Let $\gamma_n \in \R[2]$ ($n\in\mathbb{Z}$) be a
discrete planar curve with $|\gamma_{n+1}-\gamma_n| = h$, where $h>0$ is a constant. We also assume
$\det{\big(\gamma_{n+1} - \gamma_n, \gamma_n - \gamma_{n-1}\big)} \neq 0$, i.e., not three
consecutive points are collinear.  Then $\gamma_n$ is called a discrete planar curve with segment length $h$.  We define the discrete tangent and normal vectors by
\begin{equation}
T_n = \frac{\gamma_{n + 1} - \gamma_n}{h}
= \left[\begin{array}{c}\cos\Theta_n \\\sin\Theta_n\end{array}\right]\quad\text{and}\quad
N_n = R\Big(\frac{\pi}{2}\Big)T_n,
\end{equation}
respectively, where the discrete angle function $\Theta_n$ is the angle of $T_n$ measured from the
horizontal axis in the counterclockwise direction.  The discrete Frenet frame is defined by
\begin{equation}
 \Phi_n = [T_n,N_n]\in{\rm SO}(2),
\end{equation}
and the discrete Frenet formula by
\begin{equation}\label{eqn:discrete_Frenet}
 \Phi_{n+1} = \Phi_nL_n,\quad
L_n = R(K_{n+1}),
\end{equation}
where $K_n = \Theta_{n}-\Theta_{n-1}$ is the angle between two adjacent tangent vectors.  Equation
\eqref{eqn:discrete_Frenet} is the discrete version of the Frenet formula
\eqref{eqn:Frenet_formula}; see Figure \ref{fig:discrete_Frenet}.  The {discrete curvature}
$\kappa_n$ can be defined as the reciprocal of the radius $\rho_n$ of the osculating circle touching
two adjacent segments at their midpoints \cite{hoffmann} (see Figure \ref{fig:discrete_curvature}),
\begin{equation}\label{eqn:discrete_curvature}
 \kappa_n = \frac{1}{\rho_n} = \frac{2}{h} \tan\frac{K_n}{2}.
\end{equation}
We note that the discrete Frenet formula \eqref{eqn:discrete_Frenet} can be written in terms of
$\kappa_n$ as
\begin{equation} \label{eqn:discrete_Frenet.kappa}
 \frac{\Phi_{n}-\Phi_{n-1}}{h} = \frac{\Phi_{n}+\Phi_{n-1}}{2}
\left[\begin{array}{cc} 0 &-\kappa_n \\ \kappa_n & 0\end{array}\right].
\end{equation}
%
\begin{figure}[h]
\centering
\begin{minipage}[b]{0.4\textwidth}
\centering
\includegraphics[scale=0.5]{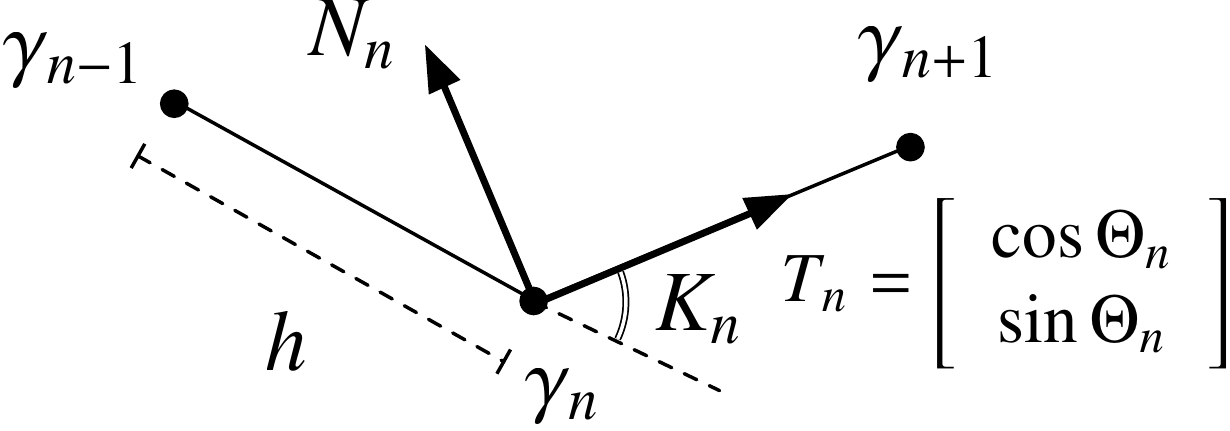}
\caption{Discrete planar curve and the Frenet frame.}  \label{fig:discrete_Frenet}  
\end{minipage}
\qquad
\begin{minipage}[b]{0.4\textwidth}
\centering
\includegraphics[scale=0.5]{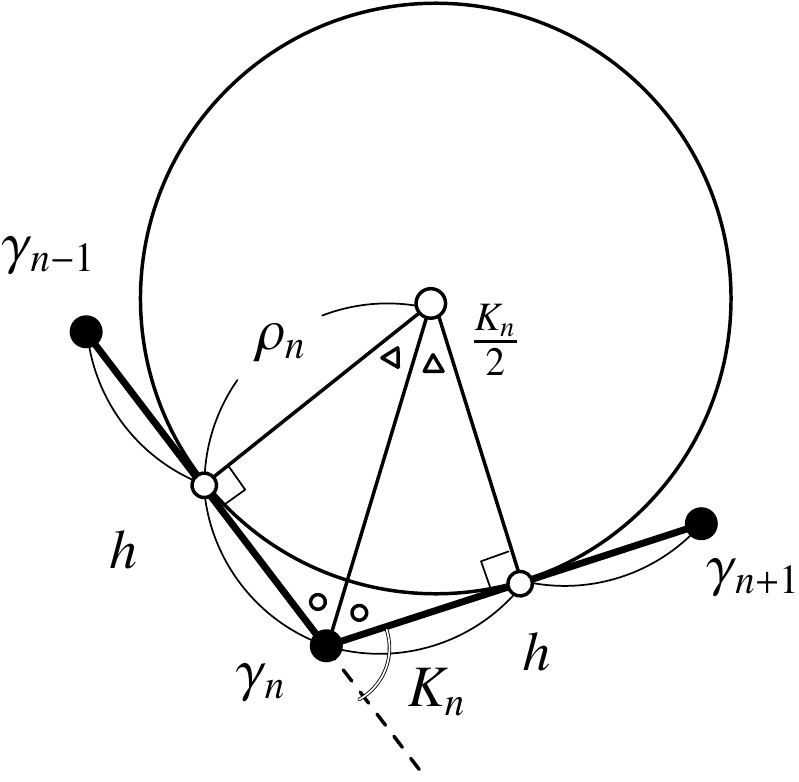}
\caption{Discrete curvature of discrete arc length planar curve.}  
\label{fig:discrete_curvature} 
\end{minipage}
\end{figure}
%
\subsection{Discrete Euler's elastica}
The discrete Euler's elastica \cite{bobenko.1999,Bobenko:LectureNote2007,Hoffmann_Kutz:2007} is
defined as a discrete planar curve $\gamma_n$ with segment length $h$ that is a critical point
of the functional
\begin{equation}
 E_d = \sum_{n=1}^{N-1} \frac{2}{h}\log\left(1+\frac{h^2}{4}\kappa_n^2\right)\cong
\sum_{n=1}^{N-1} \frac{2}{h}\log\big(1+\langle T_{n-1},T_n\rangle),
\end{equation}
with respect to variation with fixed endpoints and end edges. Note that $\cong$ means that the two
functionals yield the same critical points. As mentioned in \cite{Bobenko:LectureNote2007}, $E_d$
can be regarded as a discrete analogue of the elastic energy \eqref{eqn:elastic.energy}, in a sense
that $2/h\log(1+h^2\kappa_n^2/4)$ is the potential energy of the bending force proportional to the
discrete curvature $\kappa_n$ at each vertex.

Taking into account the preservation of $\norm{\gamma_{n} - \gamma_{n-1}}$ and $\gamma_N-\gamma_0$
by introducing the Lagrange multipliers $c_n\in\mathbb{R}$ and $a\in\mathbb{R}^2$, respectively,
consider the functional
\begin{equation}\label{eqn:Discrete_Energy_Functional2}
 S_d = \sum_{n=1}^{N-1} \frac{2}{h}\log\big(1+\langle T_{n-1},T_{n}\rangle)
+ \sum_{n=0}^{N-1}\big(c_{n}\langle T_{n},T_{n}\rangle + h\langle a,T_{n}\rangle\big).
\end{equation}
%
\begin{prop}\label{prop:Euler-Lagrange}
 The Euler-Lagrange equation for the functional
 \eqref{eqn:Discrete_Energy_Functional2} is given by
\begin{equation}\label{eqn:discrete_Euler-Lagrange}
 \frac{2}{h}\frac{T_{n-1}}{1+\langle T_{n-1},T_n\rangle}
 +\frac{2}{h}\frac{T_{n+1}}{1 + \langle T_{n},T_{n+1}\rangle} 
 +2c_nT_n + ha=0,
\end{equation}
or, equivalently, in terms of the discrete curvature $\kappa_n$ by
\begin{equation}\label{eqn:kappa}
 \kappa_{n+1}+\kappa_{n-1} = \frac{\alpha\kappa_n}{1+\frac{h^2}{4}\kappa_n^2},
\end{equation}
where $\alpha \in \R$ is a constant.
\end{prop}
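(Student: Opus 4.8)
The plan is to obtain \eqref{eqn:discrete_Euler-Lagrange} by a direct first variation, and then to reduce it to \eqref{eqn:kappa} by imitating, difference by difference, the smooth derivation of \eqref{eqn:elastica_ODE_kappa} carried out above. For the first part I would treat $T_0,\dots,T_{N-1}$ as the independent variables, with the unit-length constraint carried by the multiplier $c_n$, and impose $\partial S_d/\partial T_n=0$. Since $T_n$ enters only the two logarithmic terms indexed by $n$ and $n+1$ in the first sum of \eqref{eqn:Discrete_Energy_Functional2}, differentiation produces $\tfrac{2}{h}T_{n-1}/(1+\langle T_{n-1},T_n\rangle)$ and $\tfrac{2}{h}T_{n+1}/(1+\langle T_n,T_{n+1}\rangle)$, while the multiplier terms contribute $2c_nT_n$ and $ha$; this is exactly \eqref{eqn:discrete_Euler-Lagrange}.

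For the reduction I would first record the half-angle dictionary $\langle T_{n-1},T_n\rangle=\cos K_n$, so that $1+\langle T_{n-1},T_n\rangle=2\cos^2(K_n/2)$, together with $\tan(K_n/2)=\tfrac{h}{2}\kappa_n$ from \eqref{eqn:discrete_curvature}, whence $1+\tfrac{h^2}{4}\kappa_n^2=\sec^2(K_n/2)$. Projecting \eqref{eqn:discrete_Euler-Lagrange} onto $N_n$ annihilates the $c_nT_n$ term, and using $\langle T_{n-1},N_n\rangle=-\sin K_n$, $\langle T_{n+1},N_n\rangle=\sin K_{n+1}$ and $\tfrac{2}{h}\sin K_n/(1+\cos K_n)=\kappa_n$ collapses it to $\kappa_{n+1}-\kappa_n=-h\langle a,N_n\rangle$, equivalently $\kappa_{n+1}-\kappa_n=h\det(a,T_n)$; this is the discrete analogue of the normal equation in \eqref{eqn:EL_1}. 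I then introduce the unit bisector $B_n=\cos(K_n/2)T_{n-1}+\sin(K_n/2)N_{n-1}$ of $T_{n-1},T_n$ and $B_n^\perp=R(\pi/2)B_n$, and the elementary identities $T_n+T_{n-1}=2\cos(K_n/2)B_n$, $T_n-T_{n-1}=2\sin(K_n/2)B_n^\perp$, and $N_{n-1}-N_n=2\sin(K_n/2)B_n$.

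Two parallel computations then finish the proof. Taking the second difference of the normal relation and substituting $N_{n-1}-N_n=2\sin(K_n/2)B_n$, together with $\langle a,T_n+T_{n-1}\rangle=2\cos(K_n/2)\langle a,B_n\rangle$, yields $\kappa_{n+1}-2\kappa_n+\kappa_{n-1}=\tfrac{h^2}{2}\kappa_n\big(\langle a,T_n\rangle+\langle a,T_{n-1}\rangle\big)$, the discrete counterpart of \eqref{eqn:EL_3}. Independently, the first difference $\langle a,T_n\rangle-\langle a,T_{n-1}\rangle=2\sin(K_n/2)\langle a,B_n^\perp\rangle$ is evaluated from the determinant form, since $\langle a,B_n^\perp\rangle=-\det(a,B_n)$ and $\det(a,T_n+T_{n-1})=2\cos(K_n/2)\det(a,B_n)=\tfrac{1}{h}(\kappa_{n+1}-\kappa_{n-1})$, giving the clean relation $\langle a,T_n\rangle-\langle a,T_{n-1}\rangle=-\tfrac{1}{2}\kappa_n(\kappa_{n+1}-\kappa_{n-1})$. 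Its right-hand side equals $-\tfrac{1}{2}\kappa_n\kappa_{n+1}+\tfrac{1}{2}\kappa_{n-1}\kappa_n$, a perfect difference, so $\langle a,T_n\rangle+\tfrac{1}{2}\kappa_n\kappa_{n+1}$ is independent of $n$; call it $\mu_0$. This conserved quantity is the discrete analogue of the first integral \eqref{eqn:EL_2}. Substituting $\langle a,T_n\rangle+\langle a,T_{n-1}\rangle=2\mu_0-\tfrac{1}{2}\kappa_n(\kappa_{n+1}+\kappa_{n-1})$ into the second-difference identity and solving for $\kappa_{n+1}+\kappa_{n-1}$ gives $(\kappa_{n+1}+\kappa_{n-1})(1+\tfrac{h^2}{4}\kappa_n^2)=(2+h^2\mu_0)\kappa_n$, which is \eqref{eqn:kappa} with $\alpha=2+h^2\mu_0$.

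The main obstacle is not the first variation but this reduction. One must set up the bisector identities and the substitution $\tan(K_n/2)=\tfrac{h}{2}\kappa_n$ carefully, switching between the scalar and determinant projections of \eqref{eqn:discrete_Euler-Lagrange} so that both $\langle a,B_n\rangle$ and $\langle a,B_n^\perp\rangle$ become expressions in the $\kappa_n$ alone. Above all, the crux is to recognize that the first difference of $\langle a,T_n\rangle$ is an exact difference, so that it sums to the conserved quantity $\mu_0$; this telescoping replaces the single integration that produces \eqref{eqn:EL_2} in the smooth case, and tracking its sign—hence the precise value of $\alpha$—is where the care is needed.
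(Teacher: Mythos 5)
Your proposal is correct and is essentially the paper's own proof: the first variation is identical, and your reduction to \eqref{eqn:kappa} follows the same three steps---projection of \eqref{eqn:discrete_Euler-Lagrange} onto $N_n$ giving $\kappa_n-\kappa_{n+1}=h\langle a,N_n\rangle$, a telescoping first difference of $\langle a,T_n\rangle$ producing a conserved quantity, and a second-difference identity that combines with it to yield $\alpha=2+h^2\lambda$. Your bisector identities for $T_n\pm T_{n-1}$ and $N_{n-1}-N_n$ are precisely the discrete Frenet formulas \eqref{eqn:discrete_Frenet2} that the paper instead quotes and takes inner products with $a$, so the intermediate relations coincide exactly (your $\mu_0$ is the paper's $\lambda$).
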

%
\begin{proof}
  The variation of $S_d$ is calculated as follows:
\begin{align}
 \delta S_d &=
\sum_{n=1}^{N-1} 
\Big(\frac{2}{h}\frac{\langle \delta T_{n-1},T_n\rangle}{1+\langle T_{n-1},T_n\rangle}
+ \frac{2}{h}\frac{\langle T_{n-1},\delta T_n\rangle}{1+\langle T_{n-1},T_n\rangle}\Big)
+ \sum_{n=0}^{N-1}
\Big( 2c_n\langle T_n,\delta T_n\rangle + h\langle a,\delta T_n\rangle\Big)\nonumber\\
&= R + \sum_{n=1}^{N-2} \Big\langle
 \frac{2}{h}\frac{T_{n-1}}{1+\langle T_{n-1},T_n\rangle}
+\frac{2}{h}\frac{T_{n+1}}{1+\langle T_{n},T_{n+1}\rangle}
+2c_nT_n + ha,\delta T_n\Big\rangle,
\end{align}
where $R$ is the boundary term given by
\begin{displaymath}
R = \Big\langle \frac{2}{h}
\frac{T_{1}}{1+\langle T_{0},T_{1}\rangle}
+ 2c_0 T_0 + ha,\delta T_0\Big\rangle
+
\Big\langle\frac{2}{h}
\frac{T_{N-2}}{1+\langle T_{N-2},T_{N-1}\rangle}
+ 2c_{N-1}T_{N-1} + ha,\delta T_{N-1}\Big\rangle,
\end{displaymath}
which vanishes due to the boundary condition.  Setting $\delta S_d=0$ gives the Euler-Lagrange
equation,
\begin{equation}
 \frac{2}{h}\frac{T_{n-1}}{1+\langle T_{n-1},T_n\rangle}
+\frac{2}{h}\frac{T_{n+1}}{1+\langle T_{n},T_{n+1}\rangle}
+2c_nT_n + ha=0,
\end{equation}
which proves the first part. For the second part we use the discrete Frenet
formula \eqref{eqn:discrete_Frenet.kappa} written explicitly in terms of the tangent and normal
vectors,
\begin{equation}\label{eqn:discrete_Frenet2}
\frac{T_n - T_{n-1}}{h} = \kappa_n \frac{N_n + N_{n-1}}{2}, \qquad%
  \frac{N_n - N_{n-1}}{h} = -\kappa_n \frac{T_n + T_{n-1}}{2}. 
\end{equation}
From the discrete Frenet formula \eqref{eqn:discrete_Frenet} and equation
\eqref{eqn:discrete_curvature} we have that
\begin{equation}
 \frac{2}{h}\frac{\langle T_{n-1},N_n\rangle}{1+\langle T_{n-1},T_n\rangle} = -\kappa_n,\quad
 \frac{2}{h}\frac{\langle T_{n},N_n\rangle}{1+\langle T_{n},T_{n+1}\rangle} = \kappa_{n+1},
\end{equation}
thus taking the inner product of both hand sides of equation \eqref{eqn:discrete_Euler-Lagrange}
with $N_n$ gives
\begin{equation}\label{eqn:kappa_aN}
 \kappa_n - \kappa_{n+1}=h\langle a,N_n\rangle.
\end{equation}
Then, taking the inner product of the first equation in the discrete Frenet formula
\eqref{eqn:discrete_Frenet2} with $a$ gives
\begin{equation}
\braket{a}{T_n} - \braket{a}{T_{n-1}} = -\frac{1}{2} \kappa_n \kappa_{n+1} + \frac{1}{2} \kappa_{n-1} \kappa_{n}, 
\end{equation}
which implies that exists a constant $\lambda \in \R$ such that
\begin{equation}\label{eqn:Ta2}
\braket{a}{T_n} = -\frac{1}{2} \kappa_n \kappa_{n+1} + \lambda. 
\end{equation}
Finally taking the inner product of the second equation in the discrete Frenet formula
\eqref{eqn:discrete_Frenet2} with $a$, and using equations \eqref{eqn:kappa_aN} and \eqref{eqn:Ta2},
yields
\begin{equation}
 (\kappa_{n+1} + \kappa_{n-1})\left(1 + \frac{h^2}{4} \kappa_n^2\right) = (2 + h^2\lambda) \kappa_n,
\end{equation}
which is exactly equation \eqref{eqn:kappa} with $\alpha = 2 + h^2\lambda$.
\end{proof}
%
By using a technique similar to the one shown in \cite{takahashi}, we construct explicit solutions
to equation \eqref{eqn:kappa} from its discrete first integral,
\begin{equation}
 \kappa_{n+1}^2+\kappa_n^2 - \alpha\kappa_{n+1}\kappa_n + \frac{h^2}{4}\kappa_{n+1}^2\kappa_n^2=C,
\end{equation}
where $C \in \R$ is a constant (conserved quantity).  Here, we avoid the long computation required
and simply present the solutions corresponding to equations \eqref{eqn:kappa_dn} and
\eqref{eqn:kappa_cn}, and verify them by using the addition formulas for the Jacobi elliptic
functions.
%
\begin{prop}\label{eqn:discrete_solution_kappa}
Let $z\in\mathbb{R}$ be a constant. Then the following
functions satisfy equation \eqref{eqn:kappa}.\\ (i)
\begin{equation}\label{eqn:discrete_kappa_dn} 
\begin{split}
&\kappa_n = \frac{2}{h}\frac{\sn (k^{-1}z;k)}{\cn (k^{-1}z;k)}\,\dn\,(k^{-1}zn; k), \\[2mm]
& \alpha = 2\frac{\dn (k^{-1}z;k)}{\cn^2 (k^{-1}z;k)}.
\end{split}
\end{equation}
(ii) 
\begin{equation}\label{eqn:discrete_kappa_cn}
\begin{split}
&\kappa_n = \frac{2}{h}\frac{k\sn (z;k)}{\dn (z;k)}\,\cn(zn; k),\\[2mm]
& \alpha = 2\frac{\cn (z;k)}{\dn^2 (z;k)}.
\end{split}
\end{equation}
\end{prop}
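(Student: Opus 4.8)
The plan is to substitute the candidate formulas \eqref{eqn:discrete_kappa_dn} and \eqref{eqn:discrete_kappa_cn} directly into the recurrence \eqref{eqn:kappa} and verify it as an identity in $n$, using the addition formulas for the Jacobi elliptic functions together with the Pythagorean relations $\sn^2 + \cn^2 = 1$ and $\dn^2 + k^2\sn^2 = 1$. To keep the two cases parallel, I would introduce the shorthand $u = k^{-1}zn$, $w = k^{-1}z$ for case (i) and $u = zn$, $w = z$ for case (ii), so that in both cases the modulus is $k$ throughout and $\kappa_{n\pm 1}$ is a fixed prefactor times $\dn(u\pm w)$ (case (i)) or $\cn(u\pm w)$ (case (ii)). The whole verification then hinges on evaluating $\kappa_{n+1}+\kappa_{n-1}$, i.e. a symmetric combination $f(u+w)+f(u-w)$.

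The key step is the observation that in the standard addition formulas the cross terms are odd in $w$ and hence cancel in such a symmetric sum, leaving the clean identities
\begin{equation}
\dn(u+w)+\dn(u-w)=\frac{2\,\dn u\,\dn w}{1-k^2\sn^2 u\,\sn^2 w},\qquad
\cn(u+w)+\cn(u-w)=\frac{2\,\cn u\,\cn w}{1-k^2\sn^2 u\,\sn^2 w}.
\end{equation}
Multiplying by the respective prefactor gives $\kappa_{n+1}+\kappa_{n-1}$ in closed form, with the common denominator $1-k^2\sn^2 u\,\sn^2 w$.

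Next I would compute the right-hand side of \eqref{eqn:kappa} by inserting the prescribed value of $\alpha$ and the expression $1+\tfrac{h^2}{4}\kappa_n^2$, which equals $1+\tfrac{\sn^2 w}{\cn^2 w}\dn^2 u$ in case (i) and $1+\tfrac{k^2\sn^2 w}{\dn^2 w}\cn^2 u$ in case (ii). Clearing the small denominator, the whole identity reduces to a single algebraic relation between the two denominators, namely
\begin{equation}
\cn^2 w + \sn^2 w\,\dn^2 u = 1-k^2\sn^2 u\,\sn^2 w \quad\text{(case (i))},\qquad
\dn^2 w + k^2\sn^2 w\,\cn^2 u = 1-k^2\sn^2 u\,\sn^2 w \quad\text{(case (ii))}.
\end{equation}
Each of these is immediate from the Pythagorean identities: writing $\cn^2 w = 1-\sn^2 w$ and $1-\dn^2 u = k^2\sn^2 u$ settles case (i), and writing $\dn^2 w = 1-k^2\sn^2 w$ and $1-\cn^2 u = \sn^2 u$ settles case (ii).

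I do not expect a genuine conceptual obstacle here; the statement is a verification, and once the correct addition formulas are invoked the argument is essentially forced. The only point demanding care is that the prescribed constants $\alpha$ are not arbitrary: they are precisely the values for which the denominator $1+\tfrac{h^2}{4}\kappa_n^2$ produced by the right-hand side matches the factor $1-k^2\sn^2 u\,\sn^2 w$ coming out of the addition formula. Thus the main work is bookkeeping — tracking the prefactors, the scalings $u=k^{-1}zn$ versus $u=zn$, and keeping the two cases separate — rather than any deep difficulty, and I would present case (ii) in full and remark that case (i) follows mutatis mutandis.
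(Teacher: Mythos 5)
Your proposal is correct and follows essentially the same route as the paper: both verify the candidate solutions by taking the symmetric sum $\kappa_{n+1}+\kappa_{n-1}$, invoking the addition formulas for $\dn$ and $\cn$ (the cross terms cancel), and matching the resulting denominator with $1+\tfrac{h^2}{4}\kappa_n^2$ to pin down $\alpha$. The only cosmetic difference is that the paper quotes the combined identities \eqref{eqn:dn.sum} and \eqref{eqn:cn.sum} directly, whereas you reconstruct them from the standard addition formulas plus the Pythagorean relations $\cn^2 w + \sn^2 w\,\dn^2 u = 1-k^2\sn^2 u\,\sn^2 w$ and $\dn^2 w + k^2\sn^2 w\,\cn^2 u = 1-k^2\sn^2 u\,\sn^2 w$ --- a step the paper leaves implicit.
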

%
\begin{proof}
  These solutions are verified directly by the addition formulas for the $\dn$ and $\cn$ functions
\cite{DLMF}.
On the one hand, we use that
\begin{equation} \label{eqn:dn.sum}
 \dn (u+v) + \dn (u-v)
=\frac{\frac{2\dn v}{\cn^2 v}\dn u}
{1+\frac{\sn^2v}{\cn^2v}\dn^2u}.
\end{equation}
Putting $u=k^{-1}zn$, $v=k^{-1}z$ and $\kappa_n = a\dn (u;k)$, and comparing equation
\eqref{eqn:dn.sum} with equation \eqref{eqn:kappa}, we see that
\begin{equation}
 \frac{1}{a^2}\frac{\sn^2v}{\cn^2v} = \frac{h^2}{4},\qquad
\frac{2\dn v}{\cn^2 v} = \alpha,
\end{equation}
which proves (i). Similarly, on the other hand we use
\begin{equation} \label{eqn:cn.sum}
 \cn(u+v) + \cn(u-v) 
=\frac{\frac{2\cn v}{\dn^2 v}\cn u}
{1+\frac{k^2\sn^2v}{\dn^2v}\cn^2u}.
\end{equation}
Then putting $u=zn$, $v=z$ and $\kappa_n = b\cn(u;k)$, we get
equation \eqref{eqn:kappa} with
\begin{equation}
\frac{1}{b^2} \frac{k^2\sn^2v}{\dn^2v}=\frac{h^2}{4},\qquad \frac{2\cn v}{\dn^2 v}=\alpha,
\end{equation}
which proves (ii).
\end{proof}
%
\begin{rem}\label{rem:eqn_kappa_n}\hfill
  \begin{enumerate}
  \item Comparing equations \eqref{eqn:kappa_dn} and \eqref{eqn:discrete_kappa_dn}, we see that
    there exists a constant $\Omega$ such that $\kappa_n = \kappa(\Omega n)$. Indeed, we have
    \begin{equation}
      \Omega = \frac{v}{\sqrt{\frac{\lambda}{2-k^2}}}\quad\text{and}\quad 
      v = \sn^{-1}\sqrt{\frac{\frac{\lambda h^2}{2-k^2}}{1+\frac{\lambda h^2}{2-k^2}}}.
    \end{equation}
    There is also a similar relationship between equations \eqref{eqn:kappa_cn} and
    \eqref{eqn:discrete_kappa_cn}.  This implies that the discrete curvature $\kappa_n$ is an
    ``exact discretization'' of the smooth curvature $\kappa(s)$.
  \item By putting $\alpha=h^2\lambda+2$ and $nh=s$, equation \eqref{eqn:kappa} yields equation
    \eqref{eqn:elastica_ODE_kappa} in the continuum limit $h\to 0$. On the level of solutions, the
    following parametrizations of $z$,
    \begin{numcases}
      {z=}
      \sqrt{\frac{\lambda}{2k^{-2}-1}}\,h & \mbox{{\rm for (i)}},\label{eqn:z.limit.i}\\[2mm]
      \sqrt{\frac{\lambda}{2k^2-1}}\,h & \mbox{{\rm for (ii)}},\label{eqn:z.limit.ii}
    \end{numcases} 
    are consistent in the continuum limit to equations \eqref{eqn:theta_dn} and
    \eqref{eqn:theta_cn}, respectively.
  \item Equation \eqref{eqn:kappa} is also known as the McMillan map, which is a special case of the
    Quispel-Roberts-Thompson (QRT) map solved by elliptic functions \cite{QRT}. It can also be
    regarded as an autonomous version of a discrete Painlev\'e II equation
    \cite{Kajiwara:dPII,KNY:topical_review,RGH}
  \item It is known that position vectors of both smooth and discrete elasticae admit explicit
    formulas in terms of the elliptic theta functions
    \cite{Matsuura:discrete_elastica_MSJ2020,Mumford}.
  \end{enumerate}
\end{rem}
Figure \ref{fig:smooth_discrete_elasticae} illustrates typical examples of both smooth and discrete elasticae.
\begin{figure}[h]
\centering
\begin{tabular}{ccc}
\includegraphics[width=4.5cm]{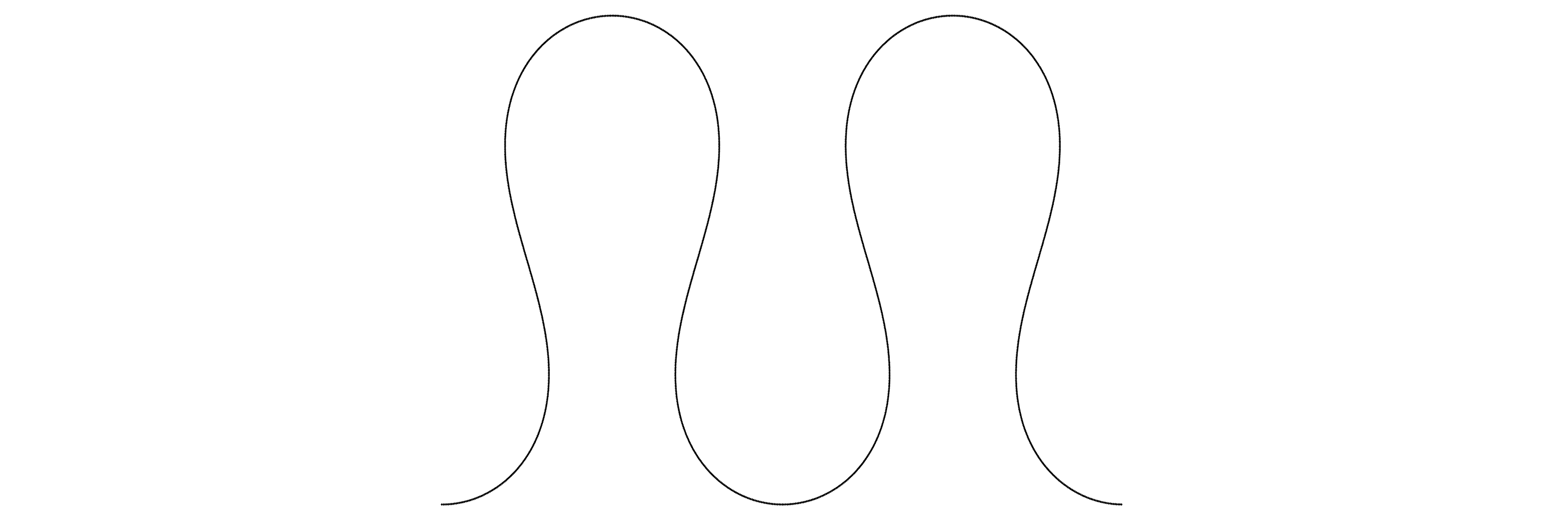} &\includegraphics[width=6cm]{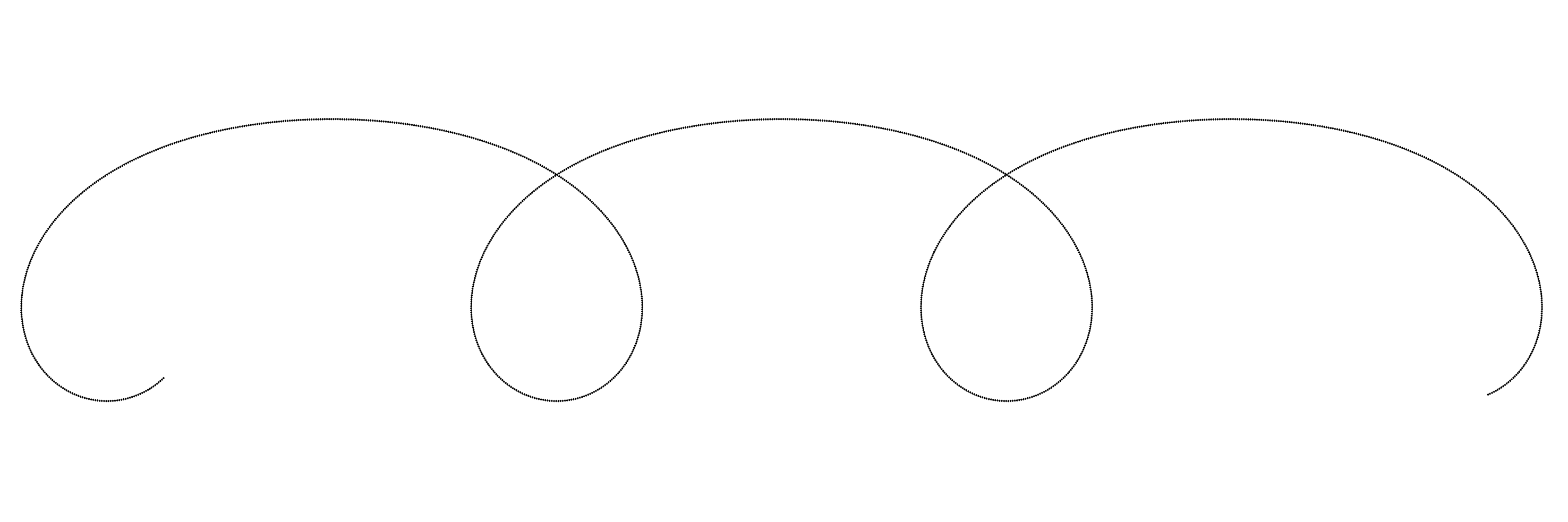} & \includegraphics[width=4.5cm]{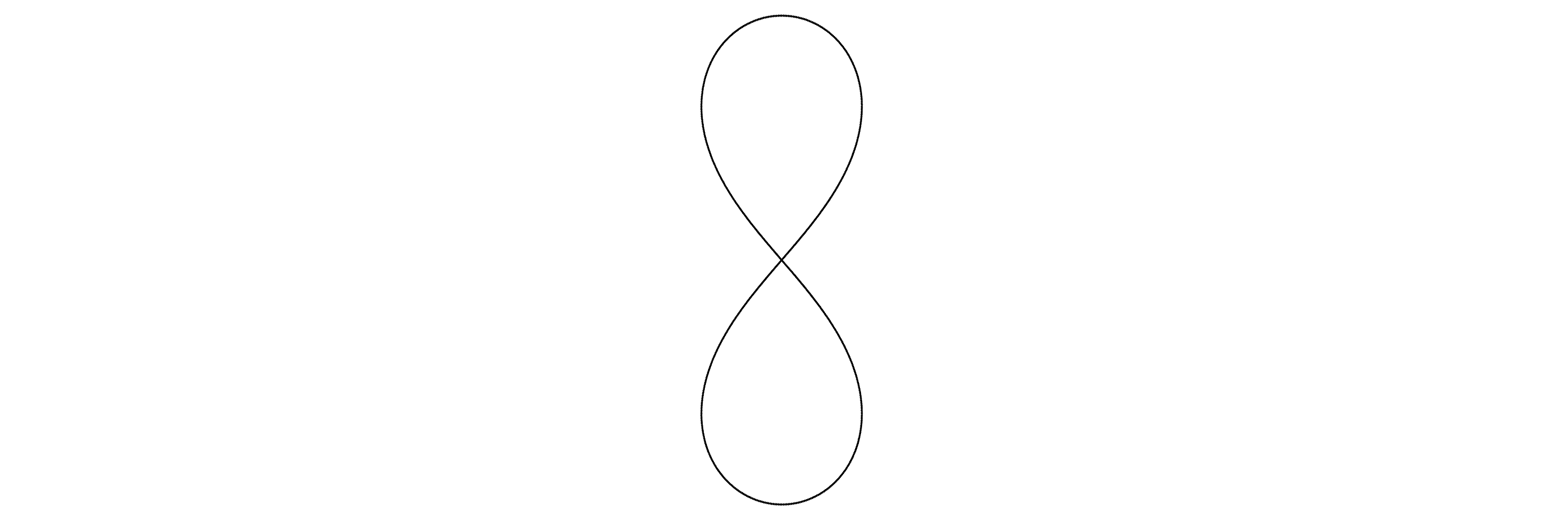} \\
\includegraphics[width=4.5cm]{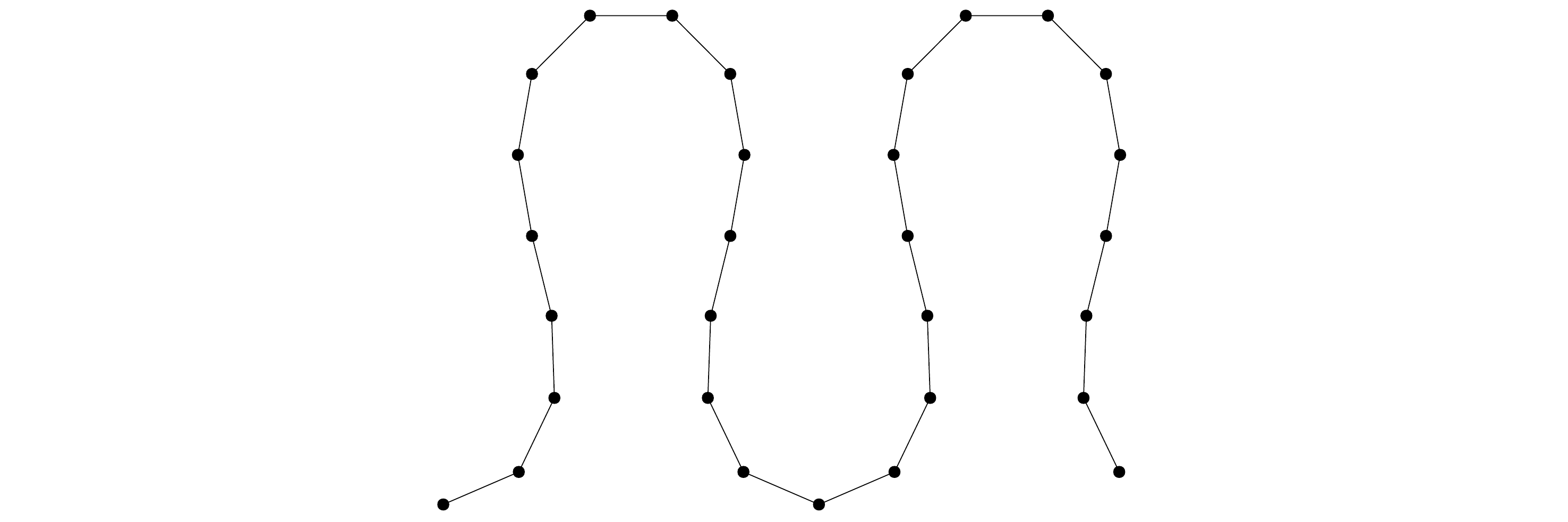} & \includegraphics[width=6cm]{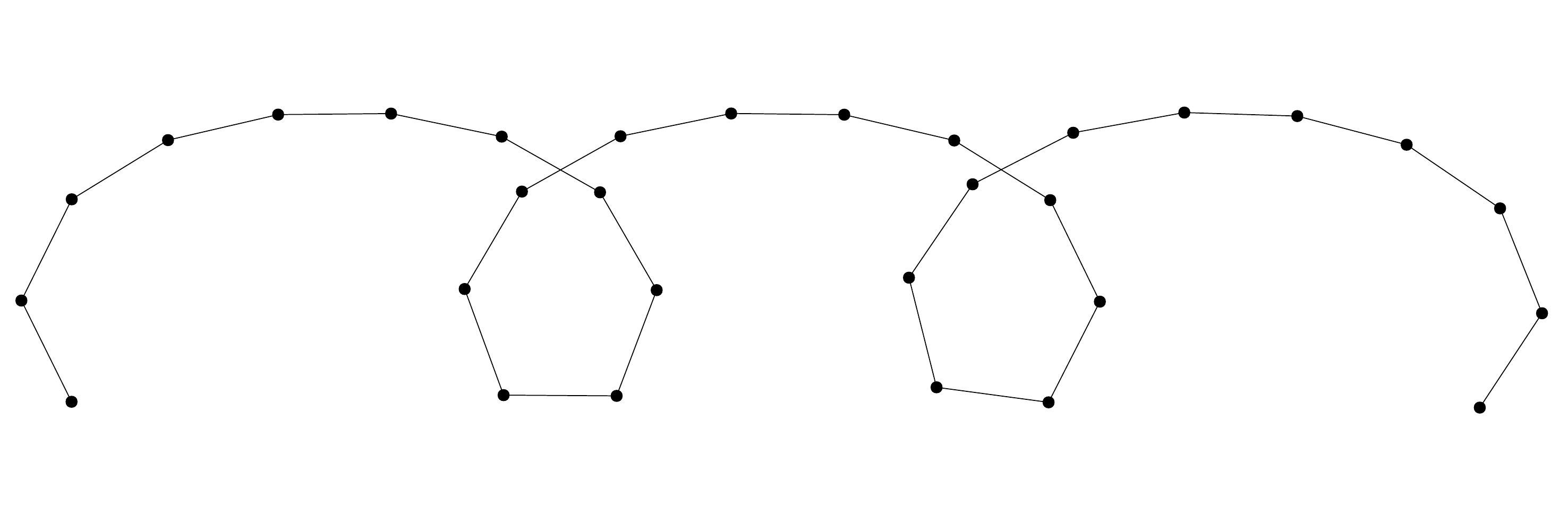} &\includegraphics[width=4.5cm]{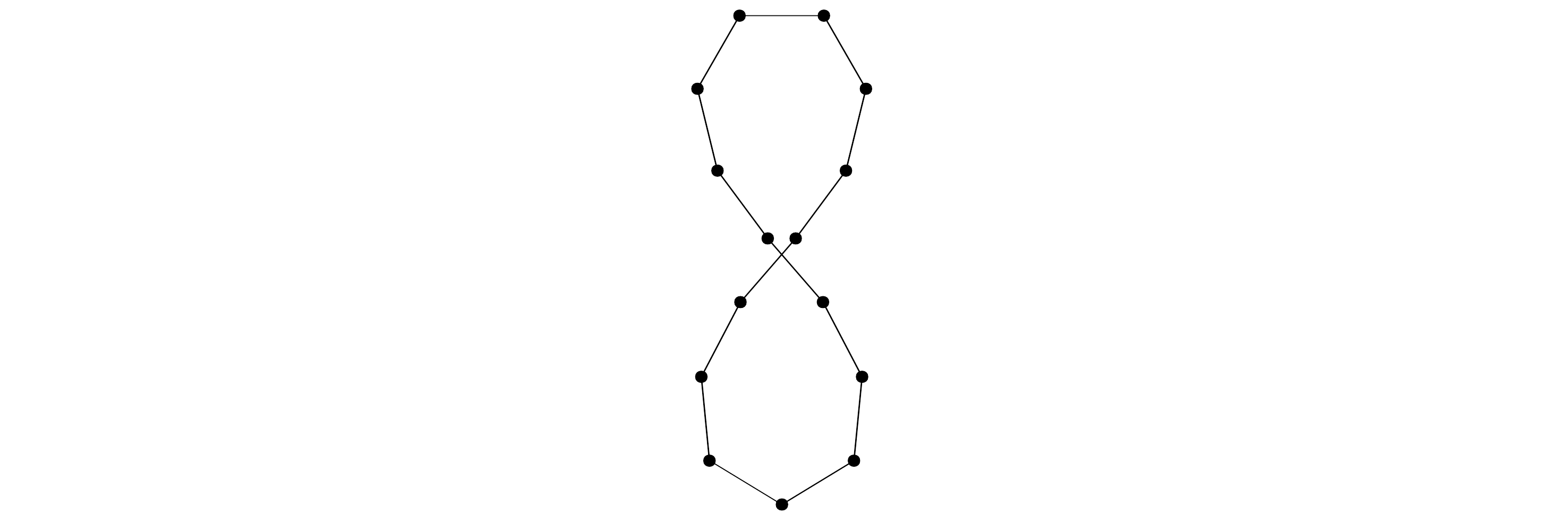} 
\end{tabular}
\caption{Typical examples of smooth and discrete elasticae. 
Left: (ii) $k=0.8$, middle: (i) $k=0.909$, right: (ii) $k=0.9089\ldots$ (smooth), 
$k=0.919\ldots$ (discrete).}  \label{fig:smooth_discrete_elasticae}  
\end{figure}
%
\subsection{Discrete Euler's elastica in terms of a potential function} \label{sec:disc.theta}
Following \cite{kajiwara}, we say that $\theta_n$ is a {\em potential function\/} if it is such that
\begin{equation}
 \Theta_n = \frac{\theta_{n+1}+\theta_n}{2}.
\end{equation}
In this context, the discrete curvature is written as
\begin{equation} \label{eqn:K&kappa_by_theta}
  \kappa_n = \frac{2}{h}\tan\left(\frac{\theta_{n+1}-\theta_{n-1}}{4}\right).
\end{equation}
%
\begin{prop}\label{prop:discrete_pendulum}
Suppose that $\theta_n$ satisfies
\begin{equation}\label{eqn:discrete_pendulum}
   \sin\left(\frac{\theta_{n+1} - 2\theta_n + \theta_{n-1}}{4}\right) 
+ \epsilon \sin\left(\frac{\theta_{n+1} + 2\theta_n + \theta_{n-1}}{4}\right) = 0,
\end{equation}
where $\epsilon\in\mathbb{R}$ is a constant. Then we have:
\begin{enumerate}
 \item  It holds that
\begin{equation}\label{eqn:conserved_quantity_theta}
\cos\left(\frac{\theta_{n+1}-\theta_{n}}{2}\right)
+ \epsilon\cos\left(\frac{\theta_{n+1}+\theta_{n}}{2}\right)
=\Lambda,
\end{equation}
where $\Lambda\in\mathbb{R}$ is a constant.
\item The discrete curvature satisfies equation \eqref{eqn:kappa} with
  $\alpha=2(1-\epsilon^2)/\Lambda^2$.
\end{enumerate}
\end{prop}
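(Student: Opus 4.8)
My plan is to obtain the first integral \eqref{eqn:conserved_quantity_theta} directly from \eqref{eqn:discrete_pendulum}, and then feed it into a computation that turns \eqref{eqn:discrete_pendulum} into the curvature recursion \eqref{eqn:kappa}. For (1) I would show that $Q_n:=\cos\!\left(\tfrac{\theta_{n+1}-\theta_n}{2}\right)+\epsilon\cos\!\left(\tfrac{\theta_{n+1}+\theta_n}{2}\right)$ is constant by computing $Q_n-Q_{n-1}$. Applying $\cos X-\cos Y=-2\sin\frac{X+Y}{2}\sin\frac{X-Y}{2}$ to the two telescoping differences, the common factor $\sin\frac{\theta_{n+1}-\theta_{n-1}}{4}$ comes out and leaves
\[
Q_n-Q_{n-1}=-2\sin\left(\frac{\theta_{n+1}-\theta_{n-1}}{4}\right)\left[\sin\left(\frac{\theta_{n+1}-2\theta_n+\theta_{n-1}}{4}\right)+\epsilon\sin\left(\frac{\theta_{n+1}+2\theta_n+\theta_{n-1}}{4}\right)\right].
\]
The bracket is the left-hand side of \eqref{eqn:discrete_pendulum}, so it vanishes and $Q_n\equiv\Lambda$; this is the only place the hypothesis is used in (1).

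For (2), I would first recast \eqref{eqn:discrete_pendulum} in tangent form: writing it as $\sin(A-B)+\epsilon\sin(A+B)=0$ with $A=\frac{\theta_{n+1}+\theta_{n-1}}{4}$, $B=\frac{\theta_n}{2}$ and expanding gives $\tan A=r\tan B$ with $r:=\frac{1-\epsilon}{1+\epsilon}$, that is $\tan P_n=r\tan\frac{\theta_n}{2}$ where $P_n:=\frac{\theta_{n+1}+\theta_{n-1}}{4}$. Put $u_n:=\frac{\theta_{n+1}-\theta_{n-1}}{4}$, so that $\kappa_n=\frac{2}{h}\tan u_n$. From the definitions alone, $\frac{\theta_{n\pm1}}{2}=P_n\pm u_n$ and $\frac{\theta_n}{2}=P_{n+1}-u_{n+1}=P_{n-1}+u_{n-1}$, hence $u_{n+1}=P_{n+1}-\frac{\theta_n}{2}$ and $u_{n-1}=\frac{\theta_n}{2}-P_{n-1}$. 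Substituting $\tan P_{n\pm1}=r\tan(P_n\pm u_n)$ and expanding $\tan u_{n+1}+\tan u_{n-1}$ by the tangent-addition formula, the two contributions independent of $\tan u_n$ enter with opposite signs and cancel, leaving (with $g:=\tan\frac{\theta_n}{2}$)
\[
\tan u_{n+1}+\tan u_{n-1}=\frac{2r\,(1+g^2)}{1+r^2g^2}\,\tan u_n=\frac{2(1-\epsilon^2)}{1+\epsilon^2+2\epsilon\cos\theta_n}\,\tan u_n,
\]
where the last equality uses $1+g^2=\frac{2}{1+\cos\theta_n}$ and $\frac{1-g^2}{1+g^2}=\cos\theta_n$. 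This step uses \eqref{eqn:discrete_pendulum} alone.

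To eliminate $\cos\theta_n$ I would invoke the first integral. Expanding \eqref{eqn:conserved_quantity_theta} by the angle-difference formulas gives the bilinear form $(1+\epsilon)\cos\frac{\theta_{n+1}}{2}\cos\frac{\theta_n}{2}+(1-\epsilon)\sin\frac{\theta_{n+1}}{2}\sin\frac{\theta_n}{2}=\Lambda$. Writing this at $n$ and at $n-1$, adding, factoring out $2\cos u_n$ via $\frac{\theta_{n\pm1}}{2}=P_n\pm u_n$, and simplifying the remaining bracket with $\tan P_n=r\tan\frac{\theta_n}{2}$, I expect
\[
\cos u_n=\frac{\Lambda}{\sqrt{1+\epsilon^2+2\epsilon\cos\theta_n}},\qquad\text{so that}\qquad 1+\frac{h^2}{4}\kappa_n^2=\sec^2u_n=\frac{1+\epsilon^2+2\epsilon\cos\theta_n}{\Lambda^2}.
\]
Combining with the recursion above,
\[
\kappa_{n+1}+\kappa_{n-1}=\frac{2}{h}\big(\tan u_{n+1}+\tan u_{n-1}\big)=\frac{2(1-\epsilon^2)}{1+\epsilon^2+2\epsilon\cos\theta_n}\,\kappa_n=\frac{\alpha\,\kappa_n}{1+\frac{h^2}{4}\kappa_n^2},
\]
with $\alpha=2(1-\epsilon^2)/\Lambda^2$, which is \eqref{eqn:kappa} with the asserted constant.

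The main obstacle is the bookkeeping in the tangent-addition step: organizing the expansion so that the cancellation of the $\tan u_n$-independent terms is transparent, and checking that the factor $1+\epsilon^2+2\epsilon\cos\theta_n$ produced there is \emph{exactly} the one appearing in the first-integral relation for $1+\frac{h^2}{4}\kappa_n^2$, so that the $\theta_n$-dependence cancels and a closed recursion in $\kappa$ alone survives. Some attention to the branches and signs of the half-angle quantities $u_n$ and $P_n$ is also needed, though it does not affect the final identity.
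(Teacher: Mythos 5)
Your proof is correct, and while your part (1) coincides with the paper's argument (the paper multiplies \eqref{eqn:discrete_pendulum} by $\sin\frac{\theta_{n+1}-\theta_{n-1}}{4}$ and applies product-to-sum, which is exactly your telescoping computation of $Q_n-Q_{n-1}$ read backwards), your part (2) takes a genuinely different route. The paper inserts the conserved quantity immediately: expanding \eqref{eqn:discrete_pendulum} in the edge variables $\varphi_n=\frac{\theta_{n+1}-\theta_n}{2}$, $\psi_n=\frac{\theta_{n+1}+\theta_n}{2}$ yields the two expressions \eqref{eqn:dp2} for $\tan\frac{K_n}{2}$ and $\tan\frac{K_{n+1}}{2}$ with $\Lambda$ in the denominator, and \eqref{eqn:kappa} then follows by multiplying the sum-to-product identity \eqref{eqn:dp4} against the auxiliary identity \eqref{eqn:aux02} and invoking the difference-of-squares relation \eqref{eqn:aux01}. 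You instead work with the vertex quantities $P_n=\frac{\theta_{n+1}+\theta_{n-1}}{4}$ and $u_n=\frac{\theta_{n+1}-\theta_{n-1}}{4}=\frac{K_n}{2}$, recast \eqref{eqn:discrete_pendulum} as the B\"acklund-type relation $\tan P_n=\frac{1-\epsilon}{1+\epsilon}\tan\frac{\theta_n}{2}$ (precisely the tangent form of the discrete sine-Gordon reduction mentioned in the paper's closing remark), and derive from it \emph{alone} the variable-coefficient recursion
\begin{equation*}
\tan u_{n+1}+\tan u_{n-1}=\frac{2(1-\epsilon^2)}{1+\epsilon^2+2\epsilon\cos\theta_n}\,\tan u_n,
\end{equation*}
the conserved quantity entering only at the end through $\sec^2 u_n=(1+\epsilon^2+2\epsilon\cos\theta_n)/\Lambda^2$, which converts the coefficient into $\alpha/(1+\frac{h^2}{4}\kappa_n^2)$ with constant $\alpha=2(1-\epsilon^2)/\Lambda^2$. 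I checked both of your key identities — the cancellation of the $\tan u_n$-independent terms $(r^2-1)g$ and $(1-r^2)g$ in the tangent-addition step, and the squared first-integral relation obtained by writing \eqref{eqn:conserved_quantity_theta} at $n$ and $n-1$ and adding — and they hold exactly as stated. Your decomposition has the conceptual advantage of separating the roles of the two hypotheses (the equation of motion produces the three-term recursion; the first integral makes its coefficient constant) and of exposing the link to the B\"acklund/discrete sine-Gordon structure; the paper's route buys freedom from branch and sign issues, since it never inverts or divides by tangents of half-angles, whereas your argument needs the caveats you acknowledge about branches of $P_n$, $u_n$ and nondegeneracy ($\epsilon\neq-1$, $\cos P_n\neq 0$, $\cos\frac{\theta_n}{2}\neq 0$) — harmless in the end, since only squared quantities survive in the final identity.
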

%
\begin{proof}
The first statement is shown as follows. Multiplying $\sin(\frac{\theta_{n+1}-\theta_{n-1}}{4})$ to
equation \eqref{eqn:discrete_pendulum}, using the product-to-sum formula, and rearranging terms gives:
\begin{equation}
 \cos\left(\frac{\theta_{n+1}-\theta_{n}}{2}\right)
+ \epsilon
\cos\left(\frac{\theta_{n+1}+\theta_{n}}{2}\right)
=
\cos\left(\frac{\theta_{n}-\theta_{n-1}}{2}\right)
+ \epsilon
\cos\left(\frac{\theta_{n}+\theta_{n-1}}{2}\right),
\end{equation}
which implies \eqref{eqn:conserved_quantity_theta}.  In order to show the second statement, we
introduce
\begin{equation}
\varphi_n = \frac{\theta_{n+1}-\theta_{n}}{2},\qquad
\psi_n = \frac{\theta_{n+1}+\theta_{n}}{2},
\end{equation}
for simplicity in the notation. We have that $K_n = \psi_{n}-\psi_{n-1} = \varphi_{n}+\varphi_{n-1}$,
and equation \eqref{eqn:discrete_pendulum} is rewritten as
\begin{equation}\label{eqn:dp1}
 \sin\left(\frac{\varphi_n-\varphi_{n-1}}{2}\right)
= - \epsilon\sin\left(\frac{\psi_n+\psi_{n-1}}{2}\right).
\end{equation}
We expand equation \eqref{eqn:dp1} as
\begin{equation}
  \sin\varphi_n\cos\frac{K_n}{2} - \cos\varphi_n\sin\frac{K_n}{2} =
  -\epsilon\sin\psi_n\cos\frac{K_n}{2} + \epsilon \cos\psi_n\sin\frac{K_n}{2},
\end{equation}
which gives
\begin{equation}\label{eqn:dp2}
  \tan\frac{K_n}{2}=\frac{\sin\varphi_n+\epsilon\sin\psi_n}{\Lambda}\quad\text{and}\quad
  \tan\frac{K_{n+1}}{2}=\frac{\sin\varphi_n-\epsilon\sin\psi_n}{\Lambda},
\end{equation}
where we used equation \eqref{eqn:conserved_quantity_theta}.
From equation \eqref{eqn:dp2} and the sum-to-product formula, we have
\begin{equation}\label{eqn:dp4}
  \Lambda\Big(\tan\frac{K_{n+1}}{2}+\tan\frac{K_{n-1}}{2}\Big) =2\sin\frac{K_n}{2}
  \Big[\cos\Big(\frac{\varphi_{n}-\varphi_{n-1}}{2}\Big)
  -\epsilon\cos\Big(\frac{\psi_{n}+\psi_{n-1}}{2}\Big)\Big].
\end{equation}
We use the following two expressions: On the one hand, from equation \eqref{eqn:discrete_pendulum},
\begin{equation} \label{eqn:aux01}
  \cos^2\Big(\frac{\varphi_{n}-\varphi_{n-1}}{2}\Big) - \epsilon^2\cos^2\Big(\frac{\psi_{n}+\psi_{n-1}}{2} \Big)
  = 1-\epsilon^2.
\end{equation}
On the other hand,
\begin{align} \label{eqn:aux02}
  \cos\Big(\frac{\varphi_{n}-\varphi_{n-1}}{2}\Big)+\epsilon\cos\Big(\frac{\psi_{n}+\psi_{n-1}}{2}\Big)
  &= \cos\frac{K_n}{2} \left( \cos\varphi_n + \epsilon \cos\psi_n \right) + \sin\frac{K_n}{2} \left(\sin\varphi_n + \epsilon \sin\psi_n \right) \nonumber\\
  &= \frac{\Lambda}{\cos\frac{K_n}{2}},
\end{align}
where we used equation \eqref{eqn:conserved_quantity_theta} and \eqref{eqn:dp2}.
Finally, we multiply equation \eqref{eqn:dp4} by equation \eqref{eqn:aux02} to obtain
\begin{equation}
  \Lambda\Big(\tan\frac{K_{n+1}}{2}
  + \tan\frac{K_{n-1}}{2}\Big){prop:sogo}
  \frac{\Lambda}{\cos\frac{K_n}{2}}
  = 2\sin\frac{K_n}{2}(1-\epsilon^2),
\end{equation}
where we used \eqref{eqn:aux01}, which is rewritten as
\begin{equation}\label{eqn:dp5}
 \tan\frac{K_{n+1}}{2}
+ \tan\frac{K_{n-1}}{2}
=
\frac{1-\epsilon^2}{\Lambda^2}
\frac{2\tan\frac{K_n}{2}}
{1+\tan^2\frac{K_n}{2}}.
\end{equation}
From the definition of discrete curvature, equation \eqref{eqn:dp5} is equivalent to equation
\eqref{eqn:kappa} with $\alpha=2(1-\epsilon^2)/\Lambda^2$, which proves the second statement.
\end{proof}
%
\begin{rem}
Proposition \ref{prop:discrete_pendulum} provides discrete analogues for equations
\eqref{eqn:elastica_ODE_theta} and \eqref{eqn:1st_integral_theta}.  In fact, by putting
\begin{equation} \label{eqn:lambda.limit}
  \Lambda = 1 - \frac{h^2}{4} \lambda,\quad \epsilon=\frac{h^2}{4}\mu,\quad s=nh,
\end{equation}
equations \eqref{eqn:discrete_pendulum} and \eqref{eqn:conserved_quantity_theta} yield equations
\eqref{eqn:elastica_ODE_theta} and \eqref{eqn:1st_integral_theta} in the continuum limit $h\to 0$.
\end{rem}
We next present explicit solutions for equation \eqref{eqn:discrete_pendulum}.  Part of Proposition \ref{prop:sol_discrete_pendulum} and \ref{prop:sogo} can be found in \cite{sogo}, in a slightly different context: in that work, the function $\theta_n$ is regarded as the angle function (here denoted as $\Theta_n$) instead of a potential function.
%
\begin{prop}\label{prop:sol_discrete_pendulum}
The following functions satisfy equation  \eqref{eqn:discrete_pendulum}:\\[2mm]
(i) 
\begin{equation}\label{eqn:sol_theta_1}
 \sin\frac{\theta_n}{2} = \sn(k^{-1}zn;k),\quad \dn(k^{-1}z;k)=\frac{1-\epsilon}{1+\epsilon}.
\end{equation}
(ii) 
\begin{equation}\label{eqn:sol_theta_2}
 \sin\frac{\theta_n}{2} = k\sn(zn;k),\quad \cn(z;k)=\frac{1-\epsilon}{1+\epsilon}.
\end{equation}
\end{prop}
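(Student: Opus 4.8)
The plan is to verify both cases by direct substitution into \eqref{eqn:discrete_pendulum}, exploiting the addition formulas for the Jacobi elliptic functions, exactly in the spirit of the proof of Proposition~\ref{eqn:discrete_solution_kappa}. First I would recast the pendulum equation into a convenient algebraic form. Writing $A=\frac{\theta_{n+1}+\theta_{n-1}}{4}$ and $B=\frac{\theta_n}{2}$, the two arguments appearing in \eqref{eqn:discrete_pendulum} are $A-B$ and $A+B$, so expanding $\sin(A-B)+\epsilon\sin(A+B)=0$ with the angle-sum identities and collecting terms yields the equivalent condition
\begin{equation}
  (1+\epsilon)\tan A = (1-\epsilon)\tan B,
  \qquad\text{i.e.}\qquad
  \frac{\tan B}{\tan A}=\frac{1+\epsilon}{1-\epsilon}.
\end{equation}
Thus it suffices to compute the two tangents for the proposed solutions and check that their ratio equals $\frac{1+\epsilon}{1-\epsilon}$.

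For the tangent of $B$ I would use the given ansatz directly: in case (i), $\sin\frac{\theta_n}{2}=\sn(k^{-1}zn;k)$ gives $\cos\frac{\theta_n}{2}=\cn(k^{-1}zn;k)$ (positive branch), so $\tan B=\sn u/\cn u$ with $u=k^{-1}zn$; in case (ii) the same reasoning with $\sin\frac{\theta_n}{2}=k\sn(zn;k)$ gives $\cos\frac{\theta_n}{2}=\dn(zn;k)$ and $\tan B = k\sn u/\dn u$ with $u=zn$. For the tangent of $A$ I would invoke the sum-to-product identity $\tan\frac{\alpha+\gamma}{2}=\frac{\sin\alpha+\sin\gamma}{\cos\alpha+\cos\gamma}$ with $\alpha=\frac{\theta_{n+1}}{2}$ and $\gamma=\frac{\theta_{n-1}}{2}$, which reduces $\tan A$ to $\frac{\sn(u+v)+\sn(u-v)}{\cn(u+v)+\cn(u-v)}$ in case (i) (with $v=k^{-1}z$), and to the analogous quotient with the numerator carrying a factor $k$ and $\dn$ replacing $\cn$ in the denominator in case (ii) (with $v=z$).

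The decisive step is the simplification of these symmetric sums by the addition formulas. In case (i) one has $\sn(u+v)+\sn(u-v)=\frac{2\sn u\,\cn v\,\dn v}{1-k^2\sn^2 u\,\sn^2 v}$ and $\cn(u+v)+\cn(u-v)=\frac{2\cn u\,\cn v}{1-k^2\sn^2 u\,\sn^2 v}$, whose ratio has the common denominator cancel, leaving $\tan A=\frac{\sn u\,\dn v}{\cn u}$; hence $\tan B/\tan A=1/\dn(k^{-1}z;k)$. Setting this equal to $\frac{1+\epsilon}{1-\epsilon}$ recovers exactly the constraint $\dn(k^{-1}z;k)=\frac{1-\epsilon}{1+\epsilon}$ of \eqref{eqn:sol_theta_1}. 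Case (ii) is identical in structure: using in addition $\dn(u+v)+\dn(u-v)=\frac{2\dn u\,\dn v}{1-k^2\sn^2 u\,\sn^2 v}$ one finds $\tan A=\frac{k\sn u\,\cn v}{\dn u}$, so $\tan B/\tan A=1/\cn(z;k)$, and equating to $\frac{1+\epsilon}{1-\epsilon}$ gives the condition $\cn(z;k)=\frac{1-\epsilon}{1+\epsilon}$ of \eqref{eqn:sol_theta_2}.

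I expect the only delicate points to be bookkeeping rather than conceptual: one must fix the correct (positive) branch of the square roots defining $\cos\frac{\theta_n}{2}$ so that the elliptic identities apply on the relevant range, and one must check that the half-angle identity for $\tan A$ is valid (i.e. $\cos\alpha+\cos\gamma\neq0$), which holds generically. Selecting the matching pair of addition formulas---$(\sn,\cn)$ for case (i) and $(\sn,\dn)$ for case (ii)---is what makes the common denominator $1-k^2\sn^2 u\,\sn^2 v$ cancel cleanly, and once that cancellation is observed the result follows immediately.
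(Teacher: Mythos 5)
Your proof is correct, and it reaches the result by a genuinely different route from the paper's. The paper never substitutes into \eqref{eqn:discrete_pendulum} directly: it instead checks that the ansatz satisfies the first integral \eqref{eqn:conserved_quantity_theta}, expanding $\cos\left(\frac{\theta_{n+1}\mp\theta_{n}}{2}\right)$ via the addition formulas for $\sn(u+v)$ and $\cn(u+v)$ and observing that the constraint $\dn(k^{-1}z;k)=\frac{1-\epsilon}{1+\epsilon}$ (resp.\ $\cn(z;k)=\frac{1-\epsilon}{1+\epsilon}$) annihilates the $n$-dependent term, leaving the constant $\Lambda=(1+\epsilon)\cn(k^{-1}z;k)$ (resp.\ $\Lambda=(1+\epsilon)\dn(z;k)$). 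You instead recast \eqref{eqn:discrete_pendulum} itself as the tangent-ratio condition $(1+\epsilon)\tan A=(1-\epsilon)\tan B$ and evaluate both tangents with the summed addition formulas $\sn(u+v)+\sn(u-v)$, $\cn(u+v)+\cn(u-v)$, $\dn(u+v)+\dn(u-v)$, in the style of the paper's proof of Proposition~\ref{eqn:discrete_solution_kappa}; all of your identities and the resulting constraints check out. The trade-off is as follows. Your argument is logically self-contained: it proves exactly the stated claim. The paper's verification of \eqref{eqn:conserved_quantity_theta} yields \eqref{eqn:discrete_pendulum} only through the converse of Proposition~\ref{prop:discrete_pendulum}(1) --- taking the difference of \eqref{eqn:conserved_quantity_theta} at $n$ and $n-1$ produces $\sin\frac{K_n}{2}$ times the left-hand side of \eqref{eqn:discrete_pendulum}, so constancy of $\Lambda$ forces the pendulum equation only where $\sin\frac{K_n}{2}\neq0$ --- and the paper leaves that step implicit. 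Conversely, the paper's route delivers the explicit value of $\Lambda$ as a by-product, which is exactly what Remark~\ref{rem:theta_kappa_limit} uses to identify $\alpha=2(1-\epsilon^2)/\Lambda^2$ with the parameters in \eqref{eqn:discrete_kappa_dn} and \eqref{eqn:discrete_kappa_cn}; from your proof, recovering $\Lambda$ would take a short extra computation. Both arguments share the same implicit branch choice $\cos\frac{\theta_n}{2}=\cn$ (resp.\ $\dn$) and the same generic non-degeneracy caveats (non-vanishing cosines), which you correctly flag as bookkeeping rather than substance.
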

%
\begin{proof}
For convenience, we write $u=k^{-1}zn$ and $v=k^{-1}z$.  For (i), we show that equation
 \eqref{eqn:sol_theta_1} satisfies equation
 \eqref{eqn:conserved_quantity_theta}.
 Note that $\cos(\theta_n/2) = \cn u$ and $\cos(\theta_{n+1}/2) = \cn (u + v)$.
 Then, the right hand side of equation \eqref{eqn:conserved_quantity_theta} is rewritten as
\begin{multline} \label{eqn:solt_1}
\cos\left(\frac{\theta_{n+1}-\theta_{n}}{2}\right)
+ \epsilon\cos\left(\frac{\theta_{n+1}+\theta_{n}}{2}\right) = \frac{1}{1-k^2\sn^2 u\sn^2 v} \\
\times \Big[\big\{(1+\epsilon)\cn^2 u + (1-\epsilon)\sn^2 u\dn v\big\}\cn v
+ \big\{-(1+\epsilon)\dn v+(1+\epsilon)\big\}\sn u\sn v\cn u\dn v\Big],
\end{multline}
where we used the addition formulas for the $\cn$ and $\sn$ functions.  Imposing
$\dn v = (1+\epsilon)/(1-\epsilon)$, we see that equation \eqref{eqn:conserved_quantity_theta} is
consistently reduced to $\Lambda=(1+\epsilon)\cn v$.  We prove (ii) in a similar manner.  In fact,
noticing that $\cos(\theta_n/2)=\dn u$ and imposing $\cn v = (1+\epsilon)(1-\epsilon)$, equation
\eqref{eqn:conserved_quantity_theta}, with $u=zn$ and $v=z$, is consistently reduced to
$\Lambda=(1+\epsilon)\dn v$.
\end{proof}
%
\begin{rem}\label{rem:theta_kappa_limit}\hfill
\begin{enumerate}
\item In case (i), the parameter $\alpha$ in equation \eqref{eqn:kappa} is given by
  \begin{equation}
    \alpha= 2\frac{(1-\epsilon^2)}{\Lambda^2} = 2 \frac{\dn (k^{-1}z;k)}{\cn^2 (k^{-1}z;k)},
  \end{equation}
  which implies that $\theta_n$ in equation \eqref{eqn:sol_theta_1} corresponds to $\kappa_n$ in
  equation \eqref{eqn:discrete_kappa_dn}.  In case (ii),
  \begin{equation}
    \alpha = 2 \frac{(1-\epsilon^2)}{\Lambda^2} = 2 \frac{\cn (z;k)}{\dn^2 (z;k)},
  \end{equation}
  so that $\theta_n$ in equation
  \eqref{eqn:sol_theta_2} corresponds to $\kappa_n$ in equation \eqref{eqn:discrete_kappa_cn}.
  These correspondences can be verified directly by computing
  $\kappa_n$ from equations
  \eqref{eqn:sol_theta_1} and \eqref{eqn:sol_theta_2}, respectively.
 \item Continuum limits of equations \eqref{eqn:sol_theta_1} and \eqref{eqn:sol_theta_2} 
to equations \eqref{eqn:discrete_kappa_dn} and \eqref{eqn:discrete_kappa_cn}, respectively, 
are obtained by putting $\epsilon=\frac{\mu}{4}h^2$ and taking the limit of $h\to 0$. This is 
consistent with Remark \ref{rem:eqn_kappa_n}.
\end{enumerate}
\end{rem}
%
We finally present the variational formulation for equation \eqref{eqn:discrete_pendulum}.
%
\begin{prop}[\cite{sogo}, Sec. 2] \label{prop:sogo}
  Equation \eqref{eqn:discrete_pendulum} is equivalent to the Euler-Lagrange equation of the functional
\begin{equation}
  \tilde{S}_d = \sum_{n=0}^{N-1} \cos\left(\frac{\theta_{n+1}-\theta_{n}}{2}\right)
  - \epsilon\cos\left(\frac{\theta_{n+1}+\theta_{n}}{2}\right),
\end{equation}
with respect to variations of the potential angle $\theta_n$ with fixed endpoints.
\end{prop}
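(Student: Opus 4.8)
The plan is to carry out a direct discrete variational computation. Since the endpoints $\theta_0$ and $\theta_N$ are held fixed, it suffices to vary an interior angle $\theta_n$ for $1\le n\le N-1$, and this variable enters $\tilde S_d$ through exactly two summands, the $(n-1)$-st and the $n$-th. First I would isolate these two terms and differentiate $\tilde S_d$ with respect to $\theta_n$ by the chain rule. Collecting the four resulting sine contributions and setting $\partial_{\theta_n}\tilde S_d=0$ yields, after clearing the factor $\tfrac12$, the relation $\sin\!\big(\tfrac{\theta_{n+1}-\theta_n}{2}\big)-\sin\!\big(\tfrac{\theta_n-\theta_{n-1}}{2}\big)+\epsilon\big[\sin\!\big(\tfrac{\theta_{n+1}+\theta_n}{2}\big)+\sin\!\big(\tfrac{\theta_n+\theta_{n-1}}{2}\big)\big]=0$.

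The second step is to recognize that each bracketed pair combines under the sum-to-product identities. Applying $\sin A-\sin B=2\cos(\tfrac{A+B}{2})\sin(\tfrac{A-B}{2})$ to the first pair and $\sin C+\sin D=2\sin(\tfrac{C+D}{2})\cos(\tfrac{C-D}{2})$ to the second, both pairs acquire the common factor $\cos\!\big(\tfrac{\theta_{n+1}-\theta_{n-1}}{4}\big)$, while the surviving factors are precisely $\sin\!\big(\tfrac{\theta_{n+1}-2\theta_n+\theta_{n-1}}{4}\big)$ and $\epsilon\sin\!\big(\tfrac{\theta_{n+1}+2\theta_n+\theta_{n-1}}{4}\big)$. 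Factoring out the common cosine then rewrites the Euler--Lagrange equation as $2\cos\!\big(\tfrac{\theta_{n+1}-\theta_{n-1}}{4}\big)$ multiplied by the left-hand side of \eqref{eqn:discrete_pendulum}.

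The final step, and the only one requiring care, is to justify dividing by this cosine factor. Under the potential-function convention one has $K_n=\Theta_n-\Theta_{n-1}=\tfrac{\theta_{n+1}-\theta_{n-1}}{2}$, so the common factor equals $\cos(K_n/2)$. This is exactly the quantity appearing in the denominator of the discrete curvature \eqref{eqn:K&kappa_by_theta}, and it is nonzero precisely because of the standing assumption that no three consecutive points are collinear (equivalently $K_n\neq\pi$). Hence the common cosine never vanishes on an admissible discrete curve, the division is legitimate, and the vanishing of the Euler--Lagrange expression is equivalent to equation \eqref{eqn:discrete_pendulum}.

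I expect the sign bookkeeping in the differentiation and the correct pairing of terms for the sum-to-product step to be the only places where slips can occur; the conceptual content lies entirely in the observation that the extracted factor is $\cos(K_n/2)$ and is controlled by the non-degeneracy hypothesis, which is what upgrades the computation from a one-sided implication to a genuine equivalence.
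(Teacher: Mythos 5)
Your proof is correct and follows essentially the same route as the paper's: differentiate the two summands of $\tilde{S}_d$ containing $\theta_n$, combine the four sine terms via sum-to-product identities, and factor the Euler--Lagrange expression as $\cos\left(\frac{\theta_{n+1}-\theta_{n-1}}{4}\right)$ times the left-hand side of \eqref{eqn:discrete_pendulum}. Your final step --- identifying that common factor as $\cos(K_n/2)$ and invoking the non-collinearity assumption to justify dividing by it, which is what makes the statement a genuine equivalence --- is left implicit in the paper's proof, so this is a sound (and slightly more careful) rendering of the same argument.
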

%
\begin{proof}
  Let $L(\theta_n,\theta_{n+1})= \cos\left((\theta_{n+1}-\theta_{n})/2\right) -
\epsilon\cos\left((\theta_{n+1}+\theta_{n})/2\right)$. Then, the Euler-Lagrange equation is
calculated as
\begin{align}
 0 &= \frac{\partial}{\partial \theta_n} L(\theta_n,\theta_{n+1})
+ \frac{\partial}{\partial \theta_{n}} L(\theta_{n-1},\theta_{n}) \nonumber\\
&=\cos\left(\frac{\theta_{n+1}-\theta_{n-1}}{4}\right)\left[
\sin\left(\frac{\theta_{n+1}-2\theta_n+\theta_{n-1}}{4}\right)
+ \epsilon\sin\left(\frac{\theta_{n+1}+2\theta_n+\theta_{n-1}}{4}\right)\right],
\end{align}
which gives equation \eqref{eqn:discrete_pendulum}.
\end{proof}
%
\begin{rem}
  Equation \eqref{eqn:discrete_pendulum} can be seen as a reduction of two well known equations:
  \begin{enumerate}
    \item The discrete sine-Gordon equation \cite{Bobenko_Pinkall:discreteK_surface,Hirota:dsG,IKMO:space_curve},
      \begin{equation}\label{eqn:dsG}
        \sin\left(\frac{\theta_{l+1}^{m+1}-\theta_{l}^{m+1} - \theta_{l+1}^{m}+\theta_l^m}{4}\right)
        = \frac{a}{b}
        \sin\left(\frac{\theta_{l+1}^{m+1}+\theta_{l}^{m+1} + \theta_{l+1}^{m}+\theta_l^m}{4}\right),
      \end{equation}
      where $a$, $b$ are lattice intervals. In fact, assuming that $\theta$ depends only on $n=l+m$,
      equation \eqref{eqn:dsG} is reduced to equation \eqref{eqn:discrete_pendulum} with $\epsilon=-\frac{b}{a}$.
    \item The discrete potential modified KdV equation \cite{Hirota:dpmKdV},
      \begin{equation}
        \tan\frac{\theta_{l+1}^{m+1} - \theta_l^m}{4}
        = \frac{b+a}{b-a}
        \tan\frac{\theta_{l}^{m+1} - \theta_{l+1}^m}{4},
      \end{equation}
      or equivalently 
      \begin{equation}
        \sin\left(\frac{\theta_{l+1}^{m+1}-\theta_{l}^{m+1} + \theta_{l+1}^{m}-\theta_l^m}{4}\right)
        = \frac{a}{b}
        \sin\left(\frac{\theta_{l+1}^{m+1}+\theta_{l}^{m+1} - \theta_{l+1}^{m}-\theta_l^m}{4}\right),
      \end{equation}
      which describes the isoperimetric and equidistant deformation of discrete planar curves \cite{kajiwara,Matsuura:mKdV},
      is transformed to the discrete sine-Gordon equation \eqref{eqn:dsG} by $\theta_l^m\to (-1)^m\theta_l^m$.
      In this sense, equation \eqref{eqn:discrete_pendulum} can also be regarded as a reduction of
      the discrete potential modified KdV equation.
    \end{enumerate}
\end{rem}
%
\section{Approximation of discrete curves} \label{sec:approx.elastica}
In this section, we construct an algorithm to approximate a given discrete planar curve to a
discrete elastica.  Among the many possible discretizations for the elastica, the advantages of
using the one shown in this work can be described as follows: First, the discrete elasticae are
endowed with the same integrable structure as in their smooth counterpart, i.e., they possess
several conserved quantities, can be obtained via a variational principle, and their explicit
solutions are expressed in terms of Jacobi elliptic functions. Moreover, it is known that
variational integrators have controlled error in their solutions \cite{fernandez,
  marsden,patrick}. In particular, the explicit expression for the discrete curvature $\kappa_n$ is
an ``exact discretization'' of the smooth curvature $\kappa(s)$ as discussed in Remark
\ref{rem:eqn_kappa_n}, and the potential function $\theta_n$ has the same functional shape as the
smooth angle function $\theta(s)$.  From these observations, we expect this discretization to have
good numerical properties.
%
\subsection{General discrete Euler's elastica segment}
To describe a general curve segment in the plane, we include the freedom of rotation in the
equations.  We do this by shifting the angle function and discrete angle function by a constant
$\phi \in \R$ in all the expressions.  In particular, equation \eqref{eqn:discrete_pendulum} goes to
\begin{equation}\label{eqn:discrete_pendulum_phi}
   \sin\left(\frac{\theta_{n+1} - 2\theta_n + \theta_{n-1}}{4}\right) 
+ \frac{\mu h^2}{4} \sin\left(\frac{\theta_{n+1} + 2\theta_n + \theta_{n-1}}{4} - \phi\right) = 0,
\end{equation}
where we put $\epsilon = \mu h^2/4$, with $\mu > 0$ a constant.  From Proposition
\ref{prop:sol_discrete_pendulum}, we have 
\\[2mm]
\emph{(i)}
\begin{equation}\label{eqn:sol_theta_phi_i}
    \cos\frac{\theta_n - \phi}{2} = \cn(k^{-1}(zn + q);k),\quad
    \sin\frac{\theta_n - \phi}{2} = \sn(k^{-1}(zn + q);k),    
\end{equation}
\emph{(ii)}
\begin{equation}\label{eqn:sol_theta_phi}
    \cos\frac{\theta_n - \phi}{2} = \dn(zn + q;k),\quad
    \sin\frac{\theta_n - \phi}{2} = k \sn(zn + q;k),    
\end{equation}
where $k > 0$, $q, z \in \R$ are constants. The parameter $k$ determines the shape of the elastica,
$q$ the initial point, and $z$ is related with the length and point aggregation of the curve
segment. Finally, from its starting point $\gamma_0 \in \R[2]$, a discrete elastica segment is
calculated recursively by
\begin{equation} \label{eqn:reconstruct_gamma}
  \gamma_n = \gamma_{n-1} + h %
  \left[\begin{array}{c}\cos\left(\frac{\theta_{n+1} - \phi}{2} + \frac{\theta_{n} - \phi}{2} - \phi\right) \\
\sin\left(\frac{\theta_{n+1} - \phi}{2} + \frac{\theta_{n} - \phi}{2} - \phi\right)\end{array}\right]\qquad\text{for all } n = 1, \dots, N-1.
\end{equation}
Note that we can expand the sine and cosine in equation \eqref{eqn:reconstruct_gamma} and make use
of equation \eqref{eqn:sol_theta_phi} to obtain an explicit expression in terms of the Jacobi
elliptic functions.  We conclude that a general discrete elastica segment can be characterized by
seven parameters:
\begin{equation} \label{eqn:param}
  p = (x_0, y_0, h, \phi, z, q, k),
\end{equation}
where $x_0, y_0 \in \R$ are the two components of the initial point $\gamma_0$. We write as
$\gamma_n(p)$ to the discrete elastica with parameters $p$.
%
\subsection{Fairing process}
Given a general discrete curve segment $\zeta_n \in \R[2]$ ($n = 0, \dots, N-1$), we look for a
discrete elastica $\gamma_n(p) \in \R[2]$ that is the closest, in a $L^2$-distance sense, to
$\zeta_n$.  Namely, we seek to find $p^*$ such that
\begin{equation} \label{eq:opt}
  p^* = \argmin_{p \in U} \left\{\sum_{n = 0}^{N-1} \frac{1}{2}\norm{\gamma_n(p) - \zeta_n}^2 \right\},
\end{equation}
where
\begin{equation}
 U = \left\{(x_0, y_0, h, \phi, z, q, k) \st x_0, y_0, \phi, z, q \in \R \wedge h, k > 0\right\}.
\end{equation}
We solve this non-convex problem using the Interior Point Optimizer (IPOPT) package, that for our
purpose can be seen as a gradient descent-like method for nonlinear optimizations \cite{ipopt}. For
its implementation we need to compute the gradient and the Hessian of the objective function
\begin{equation}
 \mathcal{L}(p) := \sum_{n = 0}^{N-1} \frac{1}{2}\norm{\gamma_n(p) - \zeta_n}^2.
\end{equation}
For the gradient, we have
\begin{equation}
 \pderiv{}{p_i}\mathcal{L}(p) = \sum_{n=0}^{N-1} \braket{\gamma_n(p) - \zeta_n}{\pderiv{}{p_i}\gamma_n(p)}, %
  \qquad p_i = x_0, y_0, h, \phi, z, q, k,
\end{equation}
which is computed recursively from equation \eqref{eqn:reconstruct_gamma},
\begin{equation}
 \pderiv{}{p_i}\gamma_n(p) = \pderiv{}{p_i}\gamma_{n-1}(p) +
  \begin{cases}
    0 & p_i = x_0, y_0,\\
    T_n & p_i = h,\\
    h \pderiv{}{p_i}T_n & \text{otherwise},
  \end{cases}\qquad \pderiv{}{p_i}\gamma_0 =
  \begin{cases}
    \left(\begin{smallmatrix}1\\0\end{smallmatrix}\right) & p_i = x_0,\\
    \left(\begin{smallmatrix}0\\1\end{smallmatrix}\right) & p_i = y_0,\\
    \left(\begin{smallmatrix}0\\0\end{smallmatrix}\right) & \text{otherwise},
  \end{cases}
\end{equation}
Then, the only non-trivial derivatives we need to compute are
\begin{equation}
  \pderiv{}{{p}_i} T_n = N_n \left(\frac{1}{C_{n+1}(p)} \pderiv{}{{p}_i}S_{n+1}(p) +
    \frac{1}{C_n(p)}\pderiv{}{{p}_i}S_n(p)\right),\qquad {p}_i = z, q, k,
\end{equation}
or equivalently,
\begin{equation}
  \pderiv{}{{p}_i} T_n = - N_n \left(\frac{1}{S_{n+1}(p)} \pderiv{}{{p}_i}C_{n+1}(p) +
    \frac{1}{S_n(p)}\pderiv{}{{p}_i}C_n(p)\right),\qquad {p}_i = z, q, k,
\end{equation}
where we denoted $S_n \equiv \sin\frac{\theta_n - \phi}{2}$ and
$C_n \equiv \cos\frac{\theta_n - \phi}{2}$.  Finally, we use equations \eqref{eqn:sol_theta_phi_i},
\eqref{eqn:sol_theta_phi} and the derivatives of the Jacobi elliptic functions with respect to their
argument and module to complete the computation.  For the Hessian, we use a numerical quasi-Newton
approximation, internally computed by the package.

The IPOPT method needs a starting point $\hat{p}$, that we refer as the {\em initial guess}.  In the
following subsection we describe the algorithm that we use to obtain the initial guess, which is a
discrete analogue of the one provided in \cite{brander}.
%
\subsection{Initial parameters} \label{sec:affine.kappa}

The initial guess, that starts the IPOPT method, can be obtained in a numerically stable manner
thanks to two geometric properties of the discrete elastica: Proposition \ref{prop:guess} and
Corollary \ref{prop:guess}. Remarkably, these are geometrically equivalent to the same properties
for the smooth elastica \cite{brander}.  Let
\begin{equation}
\mathrm{I}= \left[\begin{array}{c} \sin\phi\\ -\cos\phi\end{array}\right] \in \R[2],
\end{equation}
and define the projection of the curve $\gamma_n$ onto $\mathrm{I}$ as
\begin{equation}\label{eqn:def_u_n}
u_n = \braket{\mathrm{I}}{\gamma_n},
\end{equation}
and the angle measured from $\mathrm{I}$ as $\Psi_n = \frac{\pi}{2} + \Theta_n -\phi$ or,
equivalently, such that
\begin{equation}\label{eqn:def_Psi_n}
\left[\begin{array}{c} \cos\Psi_n\\ \sin\Psi_n\end{array}\right] = R\left(\frac{\pi}{2} - \phi\right) T_n. 
\end{equation}
%
\begin{prop} \label{prop:guess}
  The discrete curvature $\kappa_n$ is an affine function of the projection $u_n$, satisfying
  \begin{equation} \label{eqn:affine.kappa}
    \kappa_n = \frac{\mu}{\Lambda} u_n + A ,
  \end{equation}
  where $\Lambda \in \R$ satisfies equation \eqref{eqn:conserved_quantity_theta} and $A \in \R$ is a
  constant.
\end{prop}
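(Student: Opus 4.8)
The plan is to work entirely inside the potential-function picture of Section~\ref{sec:disc.theta}, since the statement is phrased through $\mu$, $\Lambda$ and $\phi$. Recall the abbreviations $\varphi_n=(\theta_{n+1}-\theta_n)/2$ and $\psi_n=(\theta_{n+1}+\theta_n)/2=\Theta_n$, so that $K_n=\psi_n-\psi_{n-1}=\varphi_n+\varphi_{n-1}$. First I would rederive, for the $\phi$-shifted equation \eqref{eqn:discrete_pendulum_phi}, the analogue of \eqref{eqn:dp2}. Writing $(\varphi_n-\varphi_{n-1})/2=\varphi_n-K_n/2$ and $(\psi_n+\psi_{n-1})/2-\phi=(\psi_n-\phi)-K_n/2$, expanding both sines in \eqref{eqn:discrete_pendulum_phi} and collecting the coefficients of $\cos(K_n/2)$ and $\sin(K_n/2)$, the bracket multiplying $\sin(K_n/2)$ becomes exactly $\cos\varphi_n+\epsilon\cos(\psi_n-\phi)$, which is the $\phi$-shifted conserved quantity of \eqref{eqn:conserved_quantity_theta}, equal to $\Lambda$. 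This yields
\[
\kappa_n=\frac{2}{h}\tan\frac{K_n}{2}=\frac{2}{h\Lambda}\bigl[\sin\varphi_n+\epsilon\sin(\psi_n-\phi)\bigr].
\]
Running the identical computation one step forward (applying \eqref{eqn:discrete_pendulum_phi} with $n$ replaced by $n+1$ and using $(\varphi_{n+1}-\varphi_n)/2=K_{n+1}/2-\varphi_n$, $(\psi_{n+1}+\psi_n)/2=\psi_n+K_{n+1}/2$) produces the companion identity
\[
\kappa_{n+1}=\frac{2}{h\Lambda}\bigl[\sin\varphi_n-\epsilon\sin(\psi_n-\phi)\bigr].
\]

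Subtracting the two displays eliminates $\sin\varphi_n$ and gives $\kappa_n-\kappa_{n+1}=\tfrac{4\epsilon}{h\Lambda}\sin(\psi_n-\phi)$. Independently, I would compute the increment of the projection directly from \eqref{eqn:def_u_n}: since $\gamma_{n+1}-\gamma_n=hT_n$ and $\mathrm{I}$ has components $(\sin\phi,-\cos\phi)$, we get $u_{n+1}-u_n=h\braket{\mathrm{I}}{T_n}=h\sin(\phi-\Theta_n)=-h\sin(\psi_n-\phi)$, using $\Theta_n=\psi_n$. Substituting $\sin(\psi_n-\phi)=-(u_{n+1}-u_n)/h$ together with $\epsilon=\mu h^2/4$ turns the curvature difference into $\kappa_{n+1}-\kappa_n=\tfrac{\mu}{\Lambda}(u_{n+1}-u_n)$. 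Hence $\kappa_n-\tfrac{\mu}{\Lambda}u_n$ is independent of $n$; naming this constant $A$ proves \eqref{eqn:affine.kappa}.

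The only delicate point I expect is the first step: one must carry out the $\phi$-shifted trigonometric expansion carefully so that the conserved quantity $\cos\varphi_n+\epsilon\cos(\psi_n-\phi)=\Lambda$ is recognized as the common denominator $\Lambda$ in both expressions for $\kappa_n$ and $\kappa_{n+1}$; everything after that is a one-line telescoping. It is worth noting that this is the exact discrete shadow of a smooth fact: along an elastica the second relation in \eqref{eqn:EL_1} gives $\kappa'=\tfrac12\braket{a}{N}=\tfrac12\braket{R(-\pi/2)a}{T}$, which integrates to make $\kappa$ an affine function of the coordinate along $R(-\pi/2)a$, i.e.\ perpendicular to the force $a$. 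The vector $\mathrm{I}$ is precisely this perpendicular direction, and the discrete Euler--Lagrange identity \eqref{eqn:kappa_aN} plays the role of the integrated relation, which explains why the discrete property is geometrically identical to the smooth one.
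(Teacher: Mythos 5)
Your proof is correct and follows essentially the same route as the paper: the paper likewise takes the two $\phi$-shifted expressions of equation \eqref{eqn:dp2} to get $\frac{\kappa_{n+1}-\kappa_n}{h}=-\frac{\mu}{\Lambda}\sin(\psi_n-\phi)$, computes $\frac{u_{n+1}-u_n}{h}=-\sin(\psi_n-\phi)$ from $u_{n+1}-u_n=h\braket{\mathrm{I}}{T_n}$, and concludes by the same telescoping comparison. Your explicit rederivation of the shifted \eqref{eqn:dp2} (and your cleaner handling of the angle, writing $\sin(\psi_n-\phi)$ where the paper's notation $\sin(\Psi_n-\phi)$ is slightly off) only spells out what the paper cites implicitly.
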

%
\begin{proof}
  In the context of the proof of Proposition \ref{prop:discrete_pendulum}, after incorporating
  $\phi$ and putting $\epsilon = \mu h^2/4$, from equation \eqref{eqn:dp2} we obtain
\begin{equation}\label{eqn:kappa_Psi}
 \frac{\kappa_{n+1} - \kappa_n}{h} = -\frac{\mu}{\Lambda}\sin(\Psi_n - \phi).
\end{equation}
Then, noticing that $u_{n+1} - u_n = h \braket{I}{T_n}$, we have
\begin{equation}\label{eqn:u_Psi}
\frac{u_{n+1} - u_n}{h} = -\sin(\Psi_n - \phi). 
\end{equation}
Hence, by comparing equations \eqref{eqn:affine.kappa} and \eqref{eqn:kappa_Psi}, we conclude that
there exists a constant $A \in \R$ such that, for all $n$,
\begin{equation}
  \kappa_n = \frac{\mu}{\Lambda} u_n + A.
\end{equation}
\end{proof}
Note that, by putting $\mu_1 = \mu \cos\phi$ and $\mu_2 = \mu\sin\phi$, equation
\eqref{eqn:affine.kappa} can be expressed as
\begin{equation} \label{eqn:affine.kappa.aux}
\kappa_n = \frac{1}{\Lambda} (\mu_2 x_n - \mu_1 y_n) + A,
\end{equation}
where $x_n, y_n \in \R$ are the two components of $\gamma_n$.
%
\begin{cor} \label{cor:guess}
It holds that
\begin{equation} \label{eqn:quad.sin}
  \sin\Psi_n = \frac{\mu}{2 \Lambda} u_{n+1} u_n +  A \frac{u_{n+1} + u_n}{2} + B,
\end{equation}
where $B \in \R$ is a constant.
\end{cor}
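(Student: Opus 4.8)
The plan is to prove \eqref{eqn:quad.sin} by a discrete ``integration'': I will exhibit a sequence $f_n$ equal to the right-hand side of \eqref{eqn:quad.sin} without the constant $B$, and then show that $f_n$ and $\sin\Psi_n$ have the same first difference in $n$. Since two sequences with equal consecutive differences differ only by a constant, this forces $\sin\Psi_n-f_n\equiv B$ for some $B\in\R$, which is the claim. This is the discrete counterpart of integrating $\frac{d}{ds}\sin\Psi=\kappa\,\frac{du}{ds}$ against the affine law $\kappa=\tfrac{\mu}{\Lambda}u+A$ furnished by Proposition \ref{prop:guess}.

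Set $f_n=\frac{\mu}{2\Lambda}u_{n+1}u_n+A\,\frac{u_{n+1}+u_n}{2}$. Factoring the telescoped difference gives $f_n-f_{n-1}=\frac{\mu}{2\Lambda}u_n(u_{n+1}-u_{n-1})+\frac{A}{2}(u_{n+1}-u_{n-1})=\tfrac12(u_{n+1}-u_{n-1})\bigl(\tfrac{\mu}{\Lambda}u_n+A\bigr)$, and the bracket is exactly $\kappa_n$ by the affine relation \eqref{eqn:affine.kappa}. Hence $f_n-f_{n-1}=\tfrac12(u_{n+1}-u_{n-1})\kappa_n$, and it remains only to prove the twin identity $\sin\Psi_n-\sin\Psi_{n-1}=\tfrac12(u_{n+1}-u_{n-1})\kappa_n$.

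For the twin identity I would work directly with the angle $\Psi_n$. From the definitions \eqref{eqn:def_u_n} and \eqref{eqn:def_Psi_n} one reads off $\braket{\mathrm{I}}{T_n}=\cos\Psi_n$, whence $u_{n+1}-u_n=h\braket{\mathrm{I}}{T_n}=h\cos\Psi_n$; moreover $\Psi_n-\Psi_{n-1}=\Theta_n-\Theta_{n-1}=K_n$. Applying the sum-to-product formulas then yields the two expansions $\sin\Psi_n-\sin\Psi_{n-1}=2\cos\bigl(\tfrac{\Psi_n+\Psi_{n-1}}{2}\bigr)\sin\tfrac{K_n}{2}$ and $u_{n+1}-u_{n-1}=h(\cos\Psi_n+\cos\Psi_{n-1})=2h\cos\bigl(\tfrac{\Psi_n+\Psi_{n-1}}{2}\bigr)\cos\tfrac{K_n}{2}$, which share the common factor $\cos\bigl(\tfrac{\Psi_n+\Psi_{n-1}}{2}\bigr)$. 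Substituting $\kappa_n=\tfrac2h\tan\tfrac{K_n}{2}$ from \eqref{eqn:discrete_curvature} into $\tfrac12(u_{n+1}-u_{n-1})\kappa_n$ reproduces $2\cos\bigl(\tfrac{\Psi_n+\Psi_{n-1}}{2}\bigr)\sin\tfrac{K_n}{2}$, matching $\sin\Psi_n-\sin\Psi_{n-1}$. Telescoping from the initial index then gives \eqref{eqn:quad.sin} with $B=\sin\Psi_0-f_0$.

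The single delicate point is this last matching: one must keep the half-angle $\tfrac{K_n}{2}$ bookkeeping straight so that the factor $\cos\tfrac{K_n}{2}$ coming from $u_{n+1}-u_{n-1}$ cancels precisely against the denominator of $\tan\tfrac{K_n}{2}=\sin\tfrac{K_n}{2}/\cos\tfrac{K_n}{2}$, leaving exactly the $\sin\tfrac{K_n}{2}$ needed to reproduce the sum-to-product form of $\sin\Psi_n-\sin\Psi_{n-1}$. Everything else is routine algebra, and no appeal to the explicit Jacobi-elliptic solutions is required --- only Proposition \ref{prop:guess} and the definition of the discrete curvature.
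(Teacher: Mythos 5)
Your proof is correct, and its skeleton is the same as the paper's: both hinge on the first-difference identity $\sin\Psi_n-\sin\Psi_{n-1}=\tfrac{1}{2}\kappa_n(u_{n+1}-u_{n-1})$, combine it with the affine relation \eqref{eqn:affine.kappa}, and finish by telescoping (the paper phrases the last step as ``expanding the right hand side and adding $\pm\tfrac{A}{2}u_n$'', which is exactly your factorization of $f_n-f_{n-1}$ run backwards). Where you genuinely differ is in how that difference identity is obtained. The paper writes $T_n=R(\phi)\bigl(\sin\Psi_n,\,-\tfrac{u_{n+1}-u_n}{h}\bigr)^{\mathrm{T}}$ and $N_n=R(\phi)\bigl(\tfrac{u_{n+1}-u_n}{h},\,\sin\Psi_n\bigr)^{\mathrm{T}}$ and reads the identity off as a component of the discrete Frenet formula \eqref{eqn:discrete_Frenet2}, reusing machinery already set up in Section 3. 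You instead derive it from scratch: $u_{n+1}-u_n=h\cos\Psi_n$, $\Psi_n-\Psi_{n-1}=K_n$, sum-to-product applied to $\sin\Psi_n-\sin\Psi_{n-1}$ and to $\cos\Psi_n+\cos\Psi_{n-1}$, and the half-angle definition \eqref{eqn:discrete_curvature} of $\kappa_n$ to cancel the common factor $\cos\tfrac{K_n}{2}$. This is in effect a re-derivation, in the rotated frame, of the tangential component of \eqref{eqn:discrete_Frenet2} --- unsurprising, since that formula encodes precisely this half-angle trigonometry. So your route is more self-contained, needing only \eqref{eqn:discrete_curvature}, the definitions of $u_n$ and $\Psi_n$, and Proposition \ref{prop:guess}, while the paper's is shorter because it leans on the established Frenet formalism; both are sound, and your half-angle bookkeeping at the delicate cancellation step is accurate.
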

%
\begin{proof}
From the definition of $u_n$ and $\Psi_n$ in equations \eqref{eqn:def_u_n} and \eqref{eqn:def_Psi_n}, respectively,
we obtain
\begin{equation}
  T_n =  R(\phi) \left[\begin{array}{c} \sin\Psi_n\\[2mm] -\frac{u_{n+1} - u_n}{h}\end{array}\right], \quad %
  N_n = R(\phi) \left[\begin{array}{c} \frac{u_{n+1} - u_n}{h} \\[2mm] \sin\Psi_n\end{array}\right].
\end{equation}
Then, putting this into the discrete Frenet formula $(T_n - T_{n-1})/h = \kappa_n (N_n + N_{n-1})/2$ gives
\begin{equation}
\sin\Psi_n - \sin\Psi_{n-1} = \frac{1}{2}\left(\frac{\mu}{\Lambda} u_n + A\right) \left(u_{n+1} - u_{n-1}\right), 
\end{equation}
where we used \eqref{eqn:affine.kappa}.  After expanding the right hand side of the previous
expression and then adding $\pm \frac{A}{2} u_n$, we conclude that there exists a constant
$B \in \R$ such that, for all $n$,
\begin{equation}
  \sin\Psi_n = \frac{\mu}{2 \Lambda} u_{n+1} u_n +  A \frac{u_{n+1} + u_n}{2} + B.
\end{equation}
\end{proof}
%
To estimate $(\phi, z, q, k)$ we use some results from the smooth elastica to avoid unnecessary
complexity in the discrete case. We use the following approximations: From equation
\eqref{eqn:conserved_quantity_theta} with $\epsilon = \mu h^2/4$, taking equation
\eqref{eqn:lambda.limit} into account, equations \eqref{eqn:affine.kappa.aux} and
\eqref{eqn:quad.sin} can be expanded in terms of $\sqrt{\mu}h$ as
\begin{equation} \label{eq:affine.order2}
  \kappa_n = \mu_2 x_n - \mu_1 y_n + A + \ord{\mu h^2},
\end{equation}
and
\begin{equation} \label{eqn:quad.sin.order2}
   \sin\Psi_n = \frac{1}{2} \mu u_n^2 + A u_n + B + \ord{\mu h^2},
 \end{equation}
respectively. For the discrete curvature, note that solutions \eqref{eqn:discrete_kappa_dn} and
\eqref{eqn:discrete_kappa_cn} can be written respectively as
\begin{equation} \label{eqn:kappa.withmax}
\left\{
\begin{array}{lll}
\text{(i)}&\kappa_n =  \kappa_{\mathrm{max}} \dn(k^{-1}(z n + q); k),
&\kappa_{\mathrm{max}} = \dfrac{2}{h}\, \dfrac{\sn(k^{-1}z;k)}{\cn(k^{-1}z;k)},\\[6mm]
\text{(ii)}&\kappa_n = \kappa_{\mathrm{max}}  \cn(z n + q; k),
& \kappa_{\mathrm{max}} = \dfrac{2}{h}\, \dfrac{k \sn(z;k)}{\dn(z;k)}.
\end{array}
\right.
\end{equation}
From Remark \ref{rem:theta_kappa_limit} we have $\alpha = 2 + \lambda h^2 + \order[4]$, and it
follows from equations \eqref{eqn:z.limit.i} and \eqref{eqn:z.limit.ii} that $z = \sqrt{\mu} h + \order[2]$.
Then we can approximate $\kappa_{\mathrm{max}}$ as
\begin{equation} \label{eqn:kappa.order2}
  \begin{cases}
    \text{(i)}& \kappa_{\mathrm{max}} =  2k^{-1}\sqrt{\mu} + \ord{\mu h^2},\\[4mm]
    \text{(ii)}& \kappa_{\mathrm{max}} =  2k\sqrt{\mu} + \ord{\mu h^2}.
  \end{cases}
\end{equation}
We obtain an approximation of the parameter $k$ from these expressions.  We see from
equation \eqref{eqn:quad.sin.order2} that $u$ must be bound from above and below, the upper bound
$u_{\mathrm{max}}$ being
\begin{equation} \label{eqn:umax}
  u_{\mathrm{max}} = \frac{-A + \Delta}{\mu} + \ord{h^2},\quad
\Delta = \sqrt{A^2 - 2 \mu (B - 1)}.
\end{equation}
Noticing that $u_{\textrm{max}}$ occurs at the same instance as $\kappa_{\textrm{max}}$, from
equation \eqref{eqn:affine.kappa} we have
\begin{equation} \label{eqn:kappamax}
  \kappa_{\mathrm{max}} = \Delta + \ord{\mu h^2}. 
\end{equation}
Hence, from equation \eqref{eqn:kappa.order2} we obtain
\begin{equation} \label{eqn:k.order2}
  \begin{cases}
    \text{(i)}& k = \dfrac{2\sqrt{\mu}}{\Delta} + \ord{\sqrt{\mu} h^2},\\[4mm]
    \text{(ii)}&k = \dfrac{\Delta}{2\sqrt{\mu}}  + \ord{\sqrt{\mu} h^2}.
  \end{cases}
\end{equation}
%
\noindent{\bf Pseudo code:}
Given a discrete curve $\gamma_n$ ($n = 0,1,\dots, N-1$) with segment length $h$, from their
definition, we compute $\Theta_n$ such that
\begin{equation}
  \Theta_n : \left[\begin{array}{c} \cos\Theta_n \\ \sin\Theta_n\end{array}\right] 
  = \frac{\gamma_{n+1} - \gamma_n}{h},\qquad\text{for all } n = 0, 1, \dots, N-2,
\end{equation}
and the discrete curvature $\kappa_n$ by
\begin{equation}
  \kappa_n := \frac{2}{h} \tan\frac{\Theta_n - \Theta_{n-1}}{2},\qquad\text{for all } n = 1, 2, \dots, N-2.
\end{equation}
Then, we obtain the initial guess $\hat{p}$ by solving equations \eqref{eq:affine.order2} and
\eqref{eqn:quad.sin.order2} in the least-square sense and using several of the equations mentioned
above.  We proceed as follows:

\begin{itemize}
\item (parameter $\phi$)
From equation \eqref{eq:affine.order2}, compute
\begin{equation}
 (\hat\mu_1, \hat\mu_2, \hat{A}) = \argmin_{(\mu_1,\mu_2,A)}\left\{\sum_{n=1}^{N-2}
\left(\kappa_n + \mu_1 y_n - \mu_2 x_n - A\right)^2\right\},
\end{equation}
then $\hat\mu = \sqrt{\hat\mu^2_1 + \hat\mu^2_2},$ and $\phi$ is such that $\cos\phi = \hat\mu_1/\hat\mu$ and $\sin\phi = \hat\mu_2/\hat\mu$.
\item (parameter $k$)   From equation \eqref{eqn:quad.sin.order2}, compute
\begin{equation}
\hat{B} = \argmin_{B}\left\{\sum_{n=0}^{N-2}\left(\sin\Psi_n - \frac{1}{2} \hat{\mu} u_n^2 - \hat{A} u_n - B\right)^2\right\}.
\end{equation}
From equation \eqref{eqn:k.order2}, if $\hat{B} < \frac{\hat{A}^2}{2\hat{\mu}} - 1$ we are in case (i) and
\begin{equation}
  \hat{k} = 2 \left(\frac{\hat{A}^2}{\hat{\mu}} - 2 (\hat{B} - 1)\right)^{-1/2},
\end{equation}
otherwise we are in case (ii) and
\begin{equation}
 \hat{k} = \frac{1}{2}\left(\frac{\hat{A}^2}{\hat{\mu}} - 2 (\hat{B} - 1)\right)^{1/2}.
\end{equation}
\item (parameter $q$ and $z$) For simplicity, let $s_n = z n + q$. Define $m \in \N$ as the number
of segments in which the function $u_n$ is monotone. We counted $m$ manually, although it
could also be estimated by, for example,
\begin{equation}
\begin{cases}\bigskip
\text{(i):}\quad \hat{m} = \left\lceil(N-1)h\,\dfrac{\sqrt{\hat{\mu}}}{K(\hat{k})}\right\rceil \,({}+1),\\
\text{(ii):}\quad \hat{m} = \left\lceil(N-1)h\,\dfrac{\sqrt{\hat{\mu}}}{2K(\hat{k})}\right\rceil \,({}+1),
\end{cases} 
\end{equation}
where $K$ is the complete elliptic integral of the first kind, and the term in brackets $(+1)$ is added
only if both $u_0$ and $u_{N-1}$ are simultaneously increasing or decreasing.
Now we can simply invert the Jacobi elliptic function at the endpoints $n=0$ and $n=N-1$ to obtain $q$ and $z$.
From equations \eqref{eqn:kappa.withmax}  and \eqref{eqn:kappa.order2}, we have the following:
\begin{itemize}
\item[(i)]
  \begin{equation}
    \dn(\hat{k}^{-1}s_n; \hat{k}) = \frac{\hat{\mu} u_n + \hat{A}}{2\hat{k}^{-1}\sqrt{\hat{\mu}}} 
  \end{equation}
  which can be rewritten as
  \begin{equation}
    \sn(\hat{k}^{-1}s_n; \hat{k}) = \hat{k}^{-1} 
    \sqrt{ 1 - \left(\frac{\hat{\mu} u_n + \hat{A}}{2\hat{k}^{-1}\sqrt{\hat{\mu}}}\right)^2} \equiv U_n.
  \end{equation}
  Hence,
  \begin{itemize}
  \item[--] If $u_n$ is decreasing on the first segment:
    \begin{equation} s_0 = \hat{k} F(\arcsin U_0; \hat{k}), \end{equation}
    and
    \begin{equation} s_{N-1} = (m-1) \hat{k} K(\hat{k}) + \hat{k} F(\arcsin U_{N-1}; \hat{k}), \end{equation}
    if $m$ is odd, or
    \begin{equation} s_{N-1} = m \hat{k} K(\hat{k}) - \hat{k} F(\arcsin U_{N-1}; \hat{k}), \end{equation}
    if $m$ is even.
  \item[--] If $u_n$ is increasing on the first segment:
    \begin{equation} s_0 = 2 \hat{k} K(\hat{k}) - \hat{k} F(\arcsin U_0; \hat{k}), \end{equation}
    and
    \begin{equation} s_{N-1} = (m+1) \hat{k} K(\hat{k}) - \hat{k} F(\arcsin U_{N-1}; \hat{k}), \end{equation}
    if $m$ is odd, or
    \begin{equation} s_{N-1} = m \hat{k} K(\hat{k}) + \hat{k} F(\arcsin U_{N-1}; \hat{k}), \end{equation}
    if $m$ is even.
  \end{itemize}
  
\item[(ii)]
  \begin{equation} \cn(zn + q; \hat{k}) = \frac{\hat{\mu} u_n + \hat{A}}{2\hat{k}\sqrt{\hat{\mu}}} \equiv U_n. \end{equation}
  Hence,
  \begin{itemize}
  \item[--] If $u_n$ is decreasing on the first segment:
    \begin{equation} s_0 = F(\arccos U_{0}; \hat{k}), \end{equation}
    and
    \begin{equation} s_{N-1} = 2(m-1) K(\hat{k}) + F(\arccos U_{N-1}; \hat{k}), \end{equation}
    if $m$ is odd, or
    \begin{equation} s_{N-1} = 2m K(\hat{k}) - F(\arccos U_{N-1}; \hat{k}), \end{equation}
    if $m$ is even.
  \item[--] If $u_n$ is increasing on the first segment:
    \begin{equation} s_0 = 4K(\hat{k}) - F(\arccos U_0; \hat{k}), \end{equation}
    and
    \begin{equation} s_{N-1} = 2(m+1)K(\hat{k}) - F(\arccos U_{N-1}; \hat{k}), \end{equation}
    if $m$ is odd, or
    \begin{equation} s_{N-1} = 2m K(\hat{k}) + F(\arccos U_{N-1}; \hat{k}), \end{equation}
    if $m$ is even.
  \end{itemize}
\end{itemize}
Here we used the fact that $\sn^{-1} = F \circ \arcsin$, $\cn^{-1} = F \circ \arccos$, 
and $F$ is the elliptic integral of the first kind. Finally,
\begin{equation}
\hat{q} = s_0, \qquad \hat{z} = \frac{1}{N-1} (s_{N-1} - s_0). 
\end{equation}
\item (parameters $x_0$ and $y_0$) From the previous steps, using all the recovered parameters,
construct a discrete elastica segment that starts at the origin, $\hat{\gamma}$. Then,
\begin{equation}
(\hat{x}_0, \hat{y}_0) = \argmin_{(x_0, y_0)}\left\{\sum_{n=0}^{N-1}\left(\gamma_n - \hat{\gamma}_n\right)^2\right\}. 
\end{equation}
\end{itemize}
%
Figure \ref{fig:fairing} illustrates typical examples of the fairing by the discrete elasticae
obtained by using the above algorithm.
\begin{figure}[h]
\centering
\includegraphics[width=6cm]{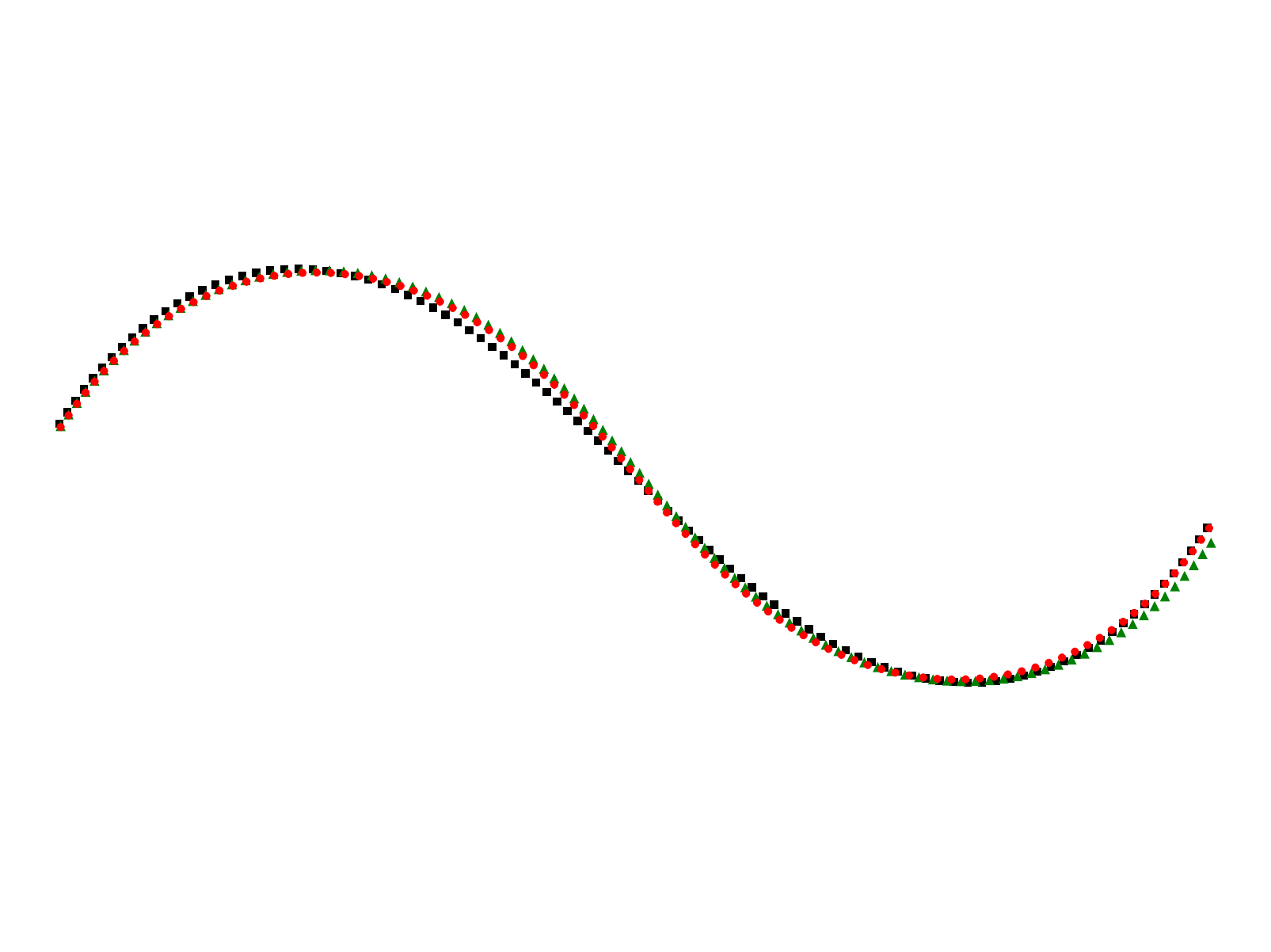}\includegraphics[width=6cm]{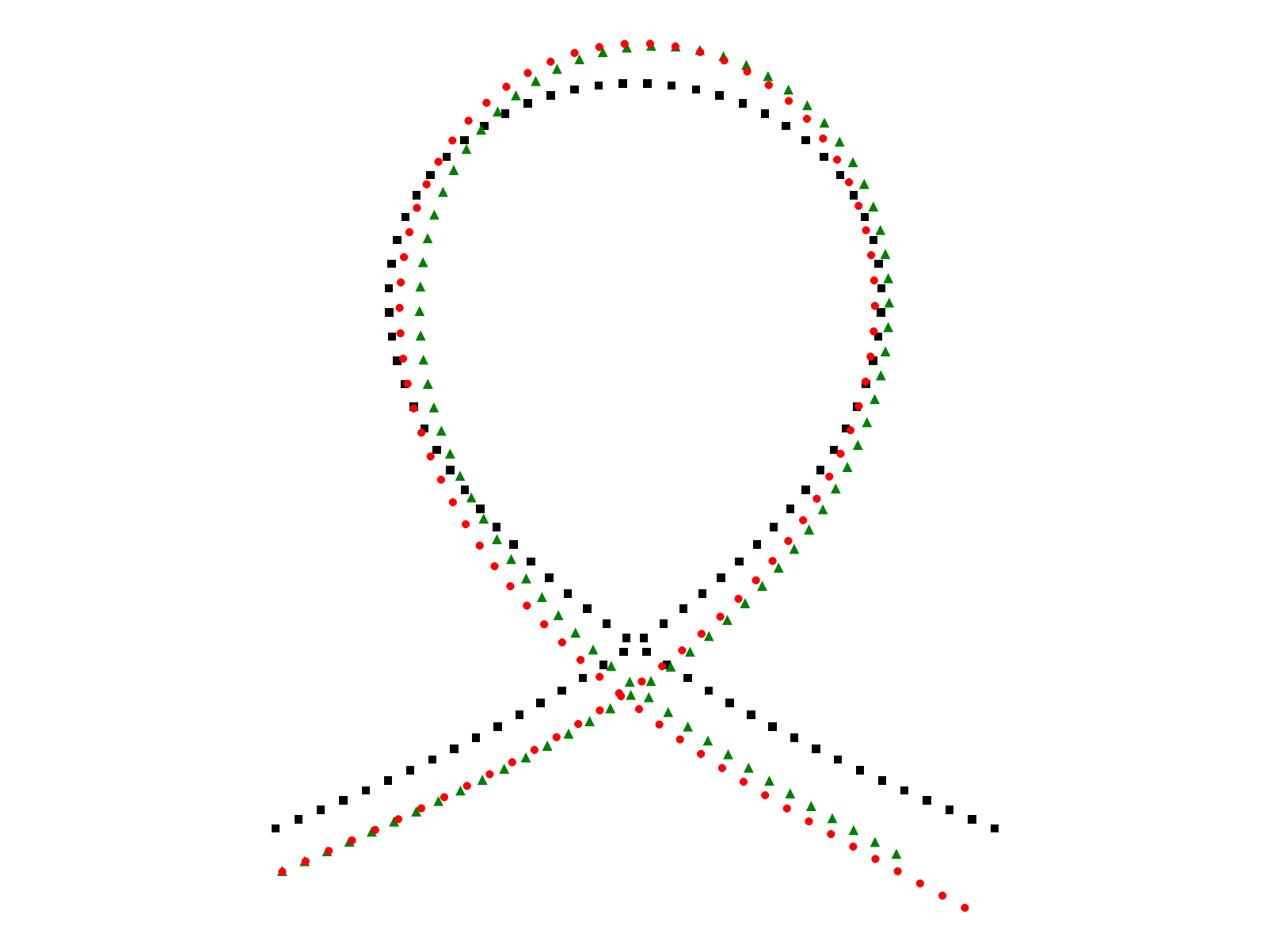}
\includegraphics[width=6cm]{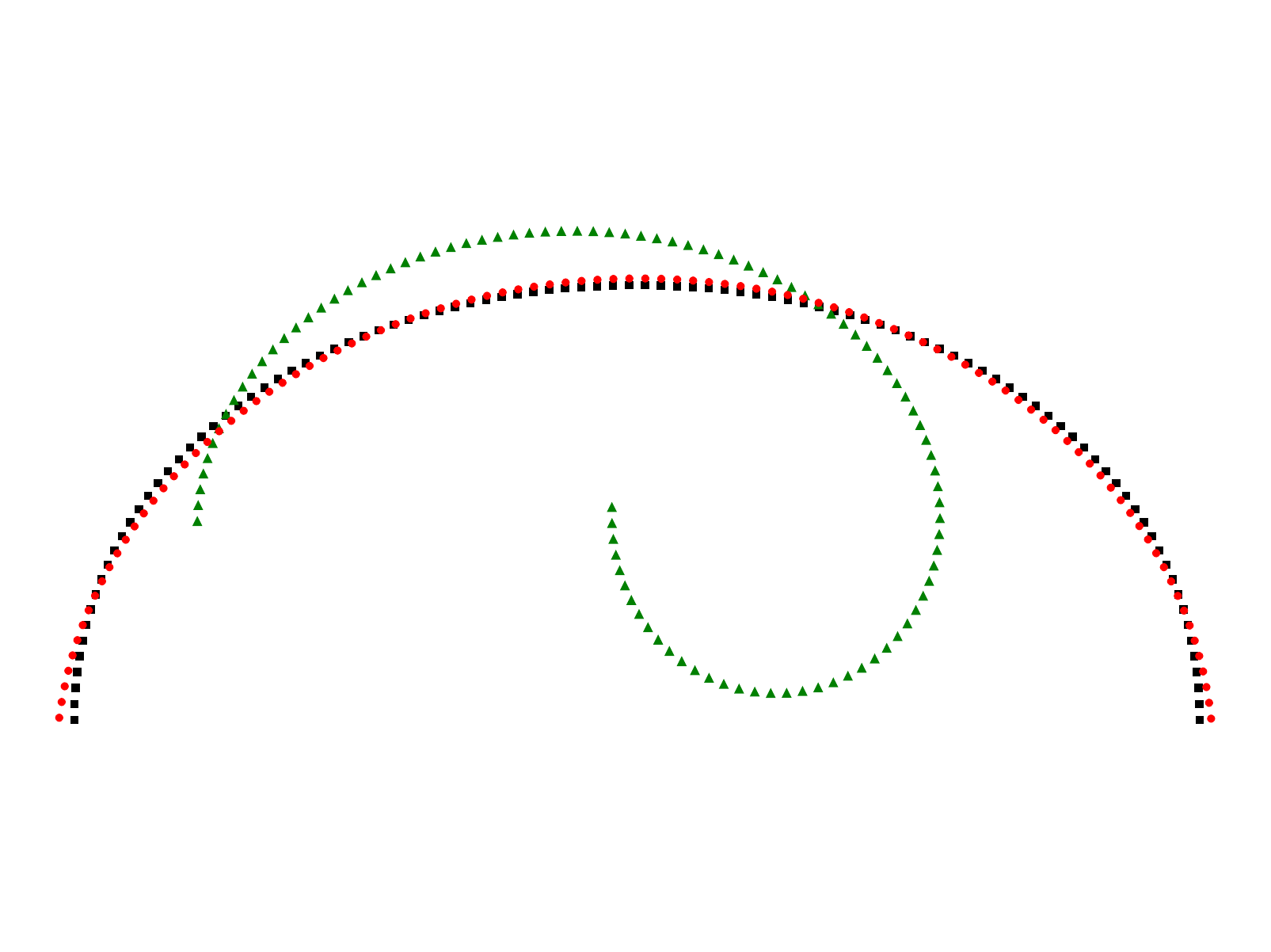} \\
\caption{Typical examples of the fairing by the discrete elasticae.  Black squares: input curve,
green triangles: initial guess, red circles: output elastica.}  \label{fig:fairing}
\end{figure}
%
\section{Application: characterization of keylines of Japanese handmade pantiles}\label{sec:application}
%
\subsection{Background and outline}\label{subsec:background}
\emph{Sangawara} (Japanese pantiles) are the most common type of roof tiles in Japan and are thought
to be unique to the country.  Although the number of buildings with sangawara roofs are decreasing,
the landscapes with the remaining constructions having sangawara roofs are consider by the community
to be one of the most beautiful scenes and culturally Japanese.  Traditionally, sangawara were
handmade from local clay by placing a clay plate on a wooden mold, beating it with a board called
\emph{tataki}, and stroking it with a board called \emph{nadeita} (Figure \ref{fig:sec5.fig6}). In
recent times, they are mass-produced by metal mold presses in limited areas. The mold shapes are
thought to be based on the shape of the sangawara in the handmade era, but companies keep their
designs a trade secret, so it is not clear.  We consider that it is important to characterize
aesthetically pleasing curves like sangawara with mathematical formulas to be used in architectural
design.  Because of this, and the fact that the process involves bending the clay plate, we thought
that the shape of sangawara could possibly be approximated by elasticae.  
In Section \ref{sec:5.2} we explain how the handmade sangawara (simply referred to as pantiles) were
collected, in Section \ref{sec:5.3} we obtain the keyline of each pantile, and in Section
\ref{sec:5.4} we approximate those keylines to discrete elasticae.
%
\begin{figure}[h]
\centering
\begin{tabular}{ccc} 
\includegraphics[width=6cm]{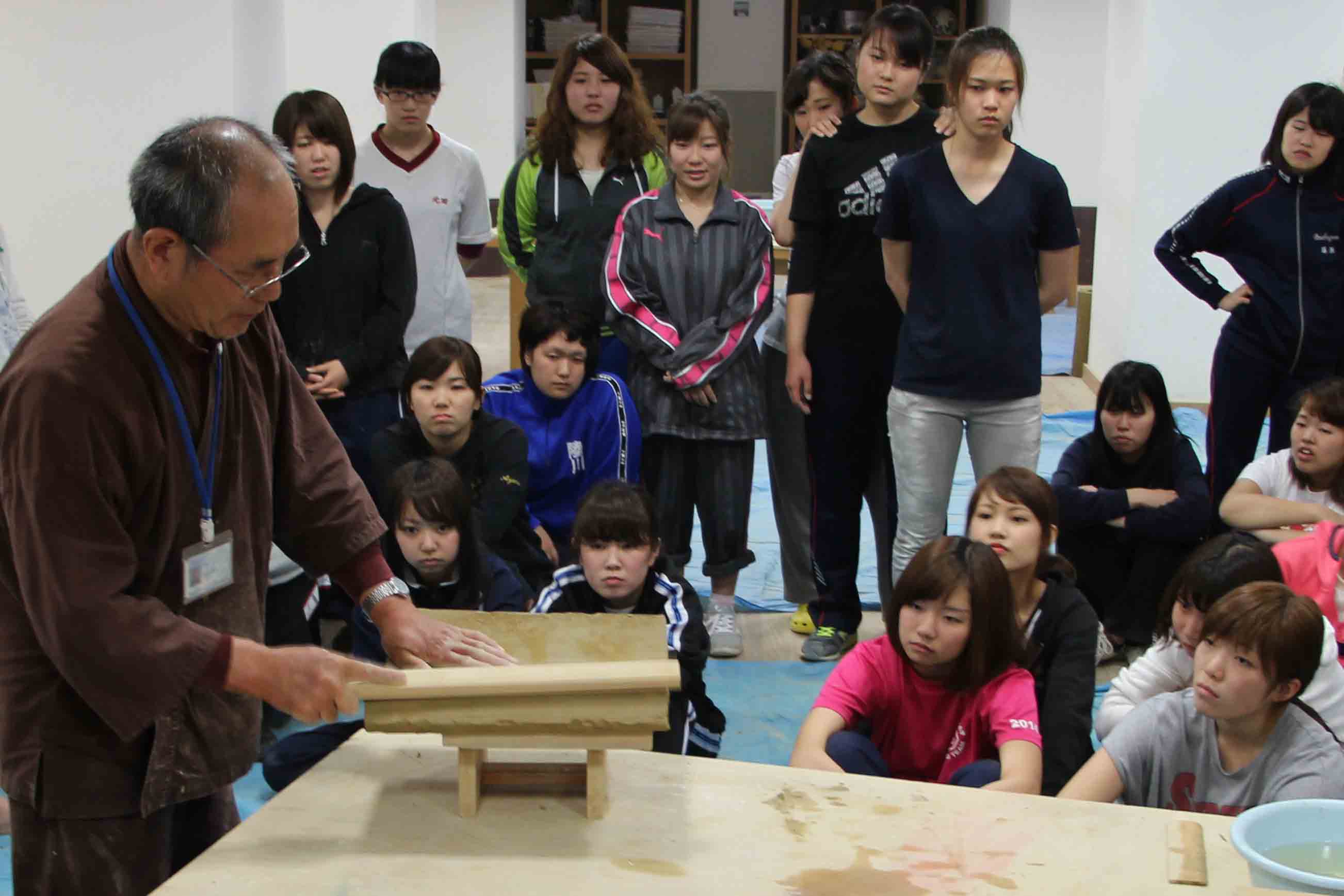} & \rule{1cm}{0cm} & \includegraphics[width=6cm]{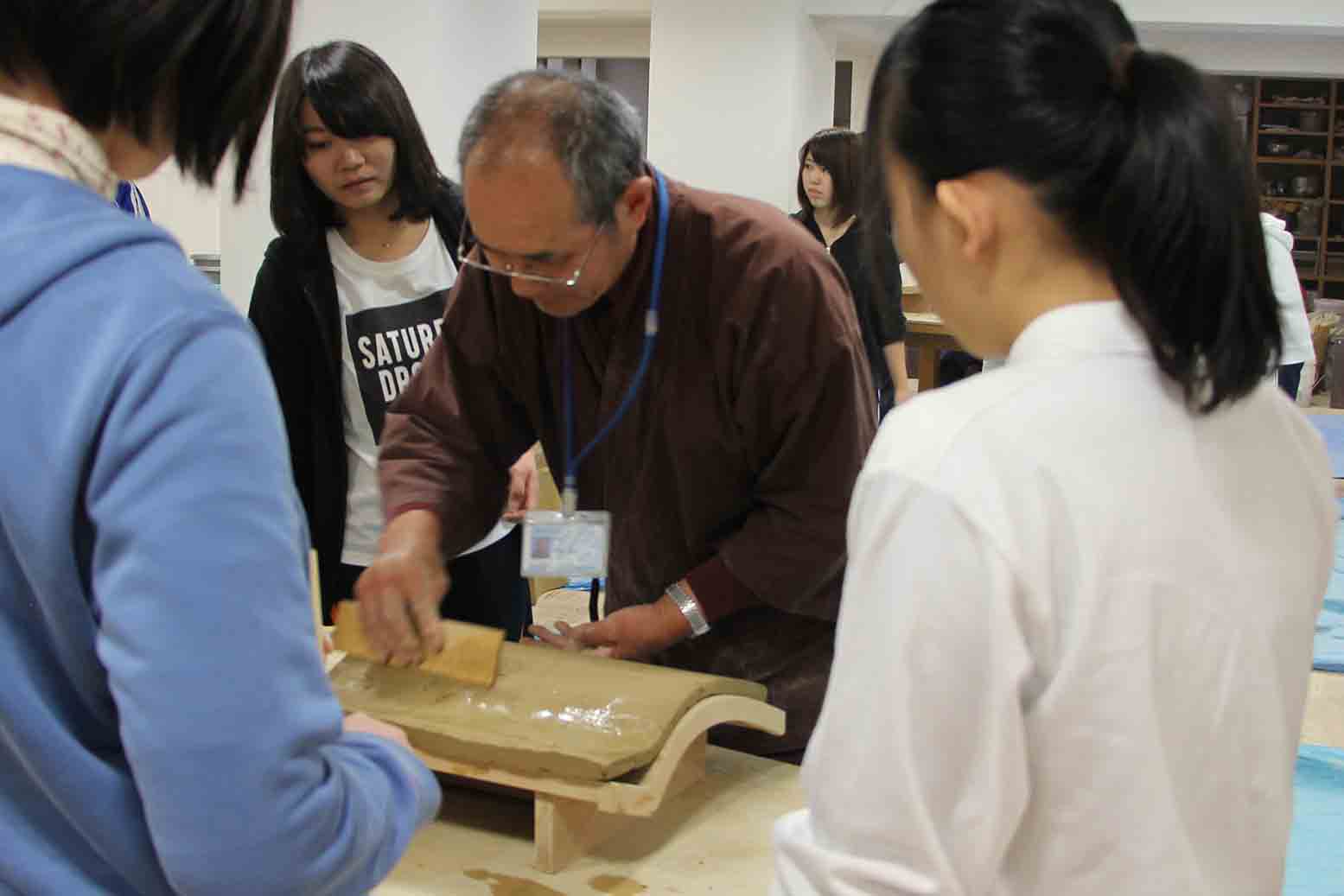}
\end{tabular}
\caption{Beating (left) and stroking (right) in making handmade pantiles in demonstration class at
  Department of Architecture, Mukogawa Women's University.}
\label{fig:sec5.fig6}  
\end{figure}
%
\subsection{Generation of 3D keyline data of pantiles} \label{sec:5.2}
The pantiles we measured had been used in a house built around 1900 in Settsu City, Osaka Prefecture
(Figure \ref{fig:sec5.fig7}). From the characteristic shapes of the pantiles, they were likely used
from the original construction or replaced before the revision of the urban building law in 1924
after the Great Kanto Earthquake. According to the owners, most of the tiles were blown away when
the 2nd Muroto Typhoon hit in 1961, so they were collected and re-roofed. After that, only a few of
the pantiles were replaced before the house was demolished in March 2017.

In our survey before the demolition, we found that the roofs of this house were covered by four
different sizes of pantiles that ranged from 240 to 280 mm in working width. Prior to dismantling
the building, we preserved six rows of pantiles (A to F in Figure \ref{fig:sec5.fig8}) that covered
those four sizes. The pantiles varied in shape due to their handmade nature, so we preserved six
rows instead of only four individual pantiles. We measured 37 pantiles of the C and F with a working
width of 270 mm (the most commonly used on this house), excluding the eave pantiles (C01, F01).
%
\begin{figure}[h]
  \centering
  \begin{tabular}{ccc} \includegraphics[width=6cm]{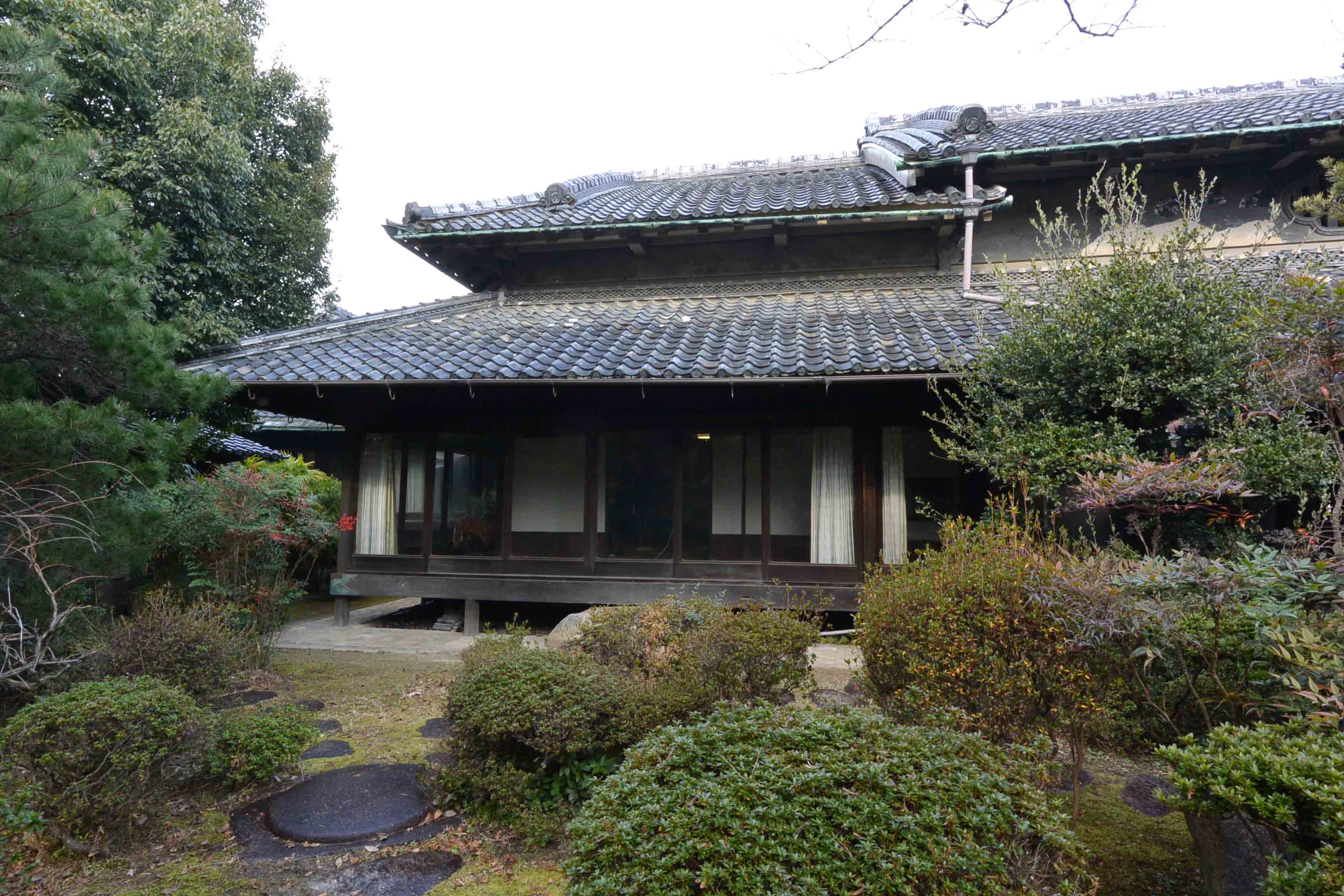} && \includegraphics[width=6cm]{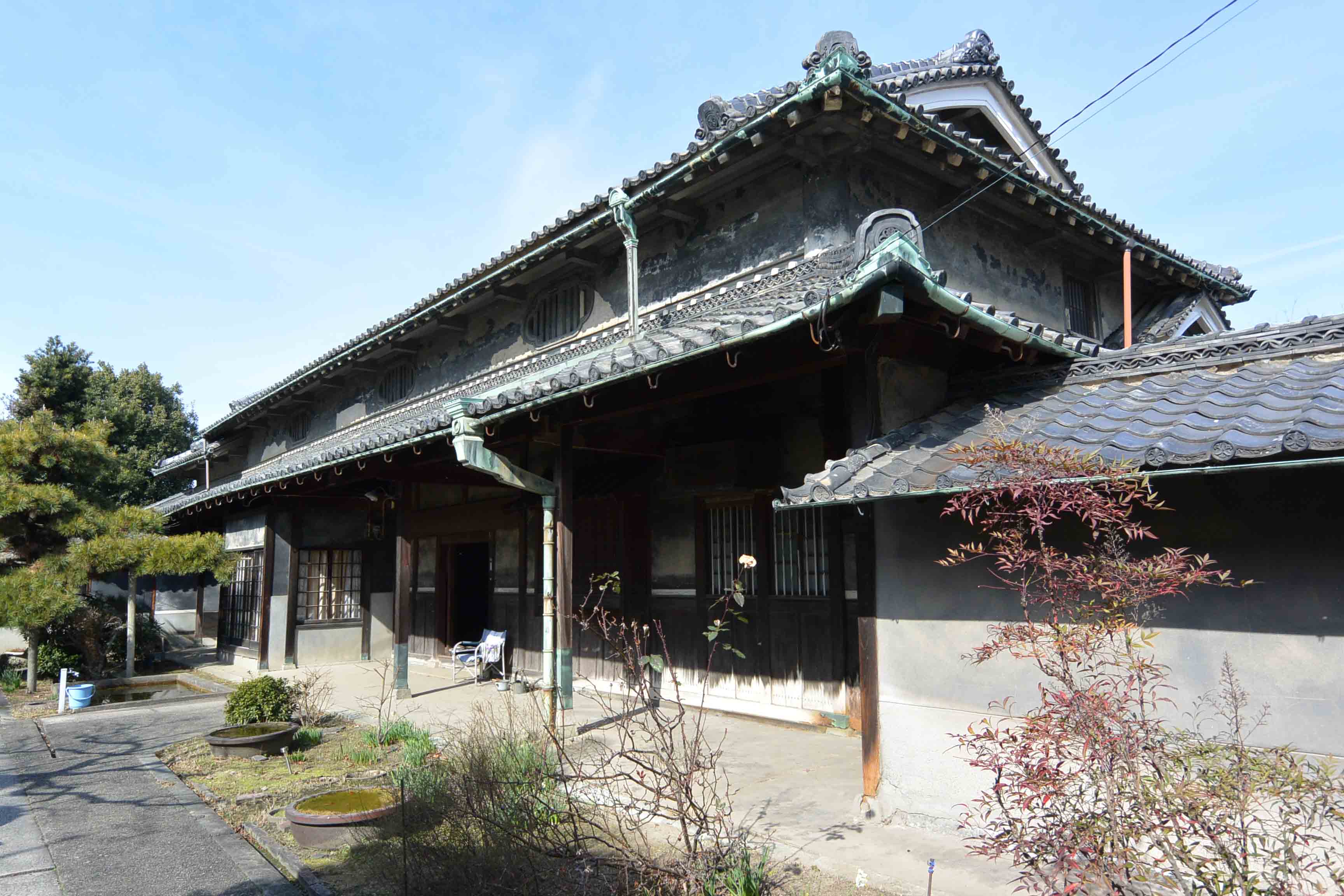}
  \end{tabular}
  \caption{Exterior photos of house with pantile roofs.}
  \label{fig:sec5.fig7}
\end{figure}

\begin{figure}[h]
  \centering \includegraphics[width=14cm]{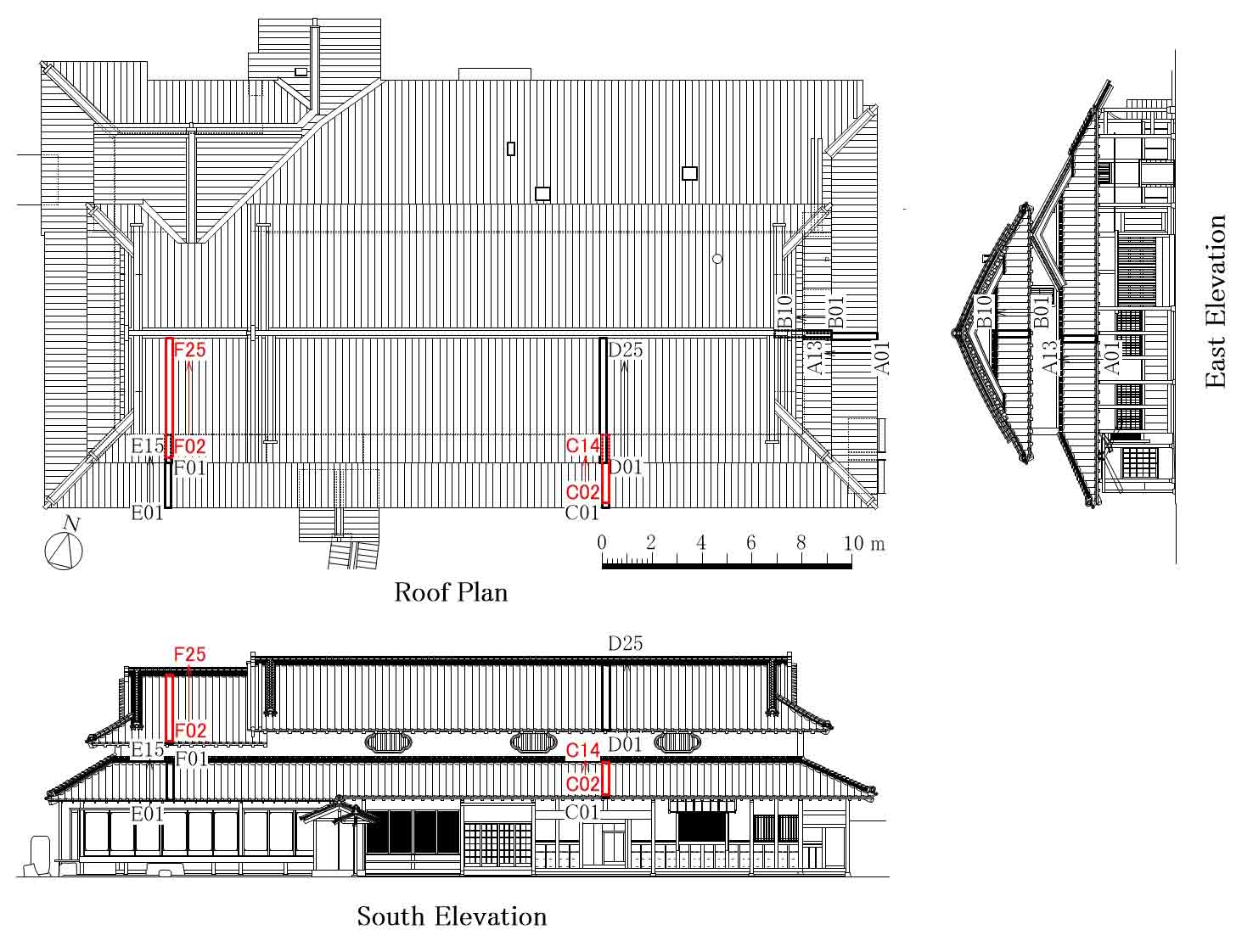}
  \caption{Preserved pantiles (black and red: A-F) and measured pantiles (red: C02-14, F02-25).}
  \label{fig:sec5.fig8}
\end{figure}
%
We used the NextEngine's Ultra HD 3D laser scanner and placed it
and the pantiles as shown in Figure \ref{fig:sec5_09} for the measurements.
%
\begin{figure}[h!]
  \centering \includegraphics[width=14cm]{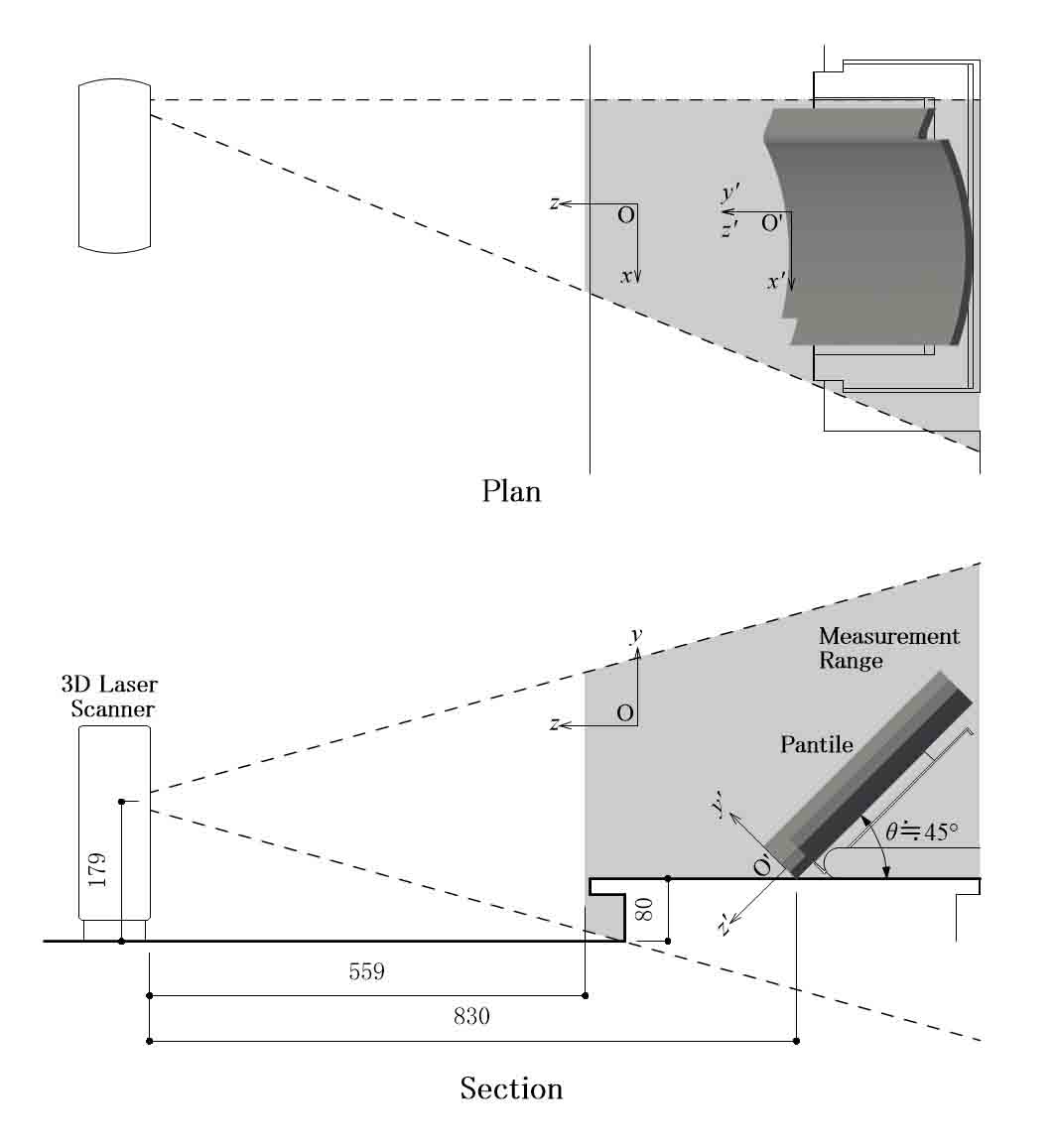}
  \caption{Placement of 3D laser scanner and pantile during measurement.}
  \label{fig:sec5_09}
\end{figure}
%
The mesh data obtained from the 3D laser scanner was read by 3D Systems' RapidWorks 64 4.1.0 reverse
modeling software. We used this software to synthesize and decimate its polygons finer than the
scanner's measurement accuracy (0.3 mm) and then automatically heal incorrect data and fill holes in
the meshes.

The two front edges are the 3D keylines of the pantile that could be observed when pitched. The
lower edge could be generated by extracting the outer boundary curve of the mesh with RapidWorks.
The upper edge could be generated by extracting the curve network from the mesh data with
RapidWorks. Figure \ref{fig:sec5_10} shows an example of the two 3D keylines (upper and lower edges)
generated. Each of these 3D keylines form an open polygon.
%
\begin{figure}[h!]
  \centering \includegraphics[width=6cm]{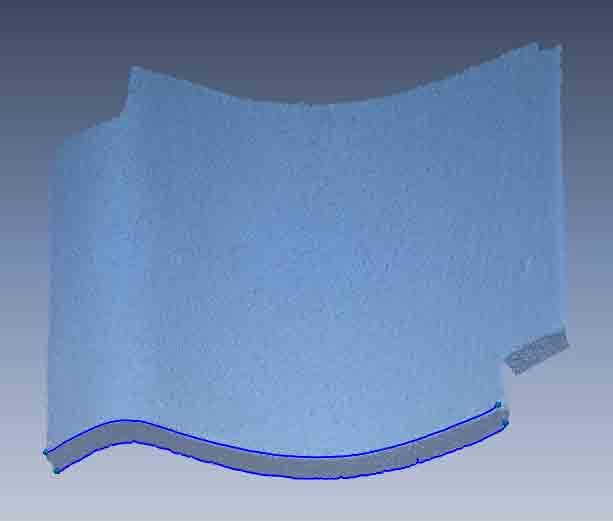}
  \caption{Example of 3D keylines generated (C07).}
  \label{fig:sec5_10}
\end{figure}
%
\subsection{Generation and conversion of 2D keylines} \label{sec:5.3}
%
\paragraph{Plane fitting of 3D keyline by principal component analysis}\hfill\\
In order to generate 2D keyline data from the 3D keylines obtained from the scanning data, we need
plane fitting of the 3D data. This can be done by projecting the points in $\mathbb{R}^3$ on the
plane which minimizes the sum of squared distances from the points. As is well-known, such plane is
constructed by applying the principal component analysis of the so-called the covariance matrix.
More concretely, let $p_1, p_2, \dots, p_m\in \mathbb{R}^3$ be three-dimensional column vectors that
represent spatial coordinates of a point cloud consisting of $m$ points. Their center of
gravity $\bar{p}$ is given by
\begin{equation} \label{eq:centerg}
  \bar{p} = \frac{1}{m} \sum_{k=1}^m p_k.
\end{equation}  
We consider the covariance matrix $C$ defined by
\begin{displaymath}
C = \frac{1}{m} \sum_{k=1}^m (p_k - \bar{p}) \tr{(p_k - \bar{p})},
\end{displaymath}
which is real, symmetric positive semi-definite matrix, and let $e_1, e_2, e_3$ be orthonormal
eigenvectors that correspond to eigenvalues $\lambda_1, \lambda_2, \lambda_3$ ($\lambda_1 \geq
\lambda_2 \geq \lambda_3\geq 0$) of $C$.  Then it is known that the plane $\pi$ that includes
$\bar{p}$ and is parallel to $e_1$ and $e_2$ (i.e., with $e_3$ as the normal vector) minimizes the
sum of the squared distances from $p_1, p_2, \dots, p_m$ (see, for example, \cite{Berkmann-Caelli:1994}). 
Therefore, plane fitting is obtained by projecting $p_1, p_2, \dots, p_m$ onto $\pi$. 

Following this idea, the $xyz$-coordinates of the point sequence consisting of all the vertices of
the open polygon that constitutes one 3D keyline are converted into the $x'y'z'$-coordinate system,
where the center of gravity of the point sequence is the origin $O'$ and the orientations of $e_1$,
$e_2$, and $e_3$ are the $x'$, $y'$ and $z'$ axes, respectively. The orientations of $e_1$, $e_2$,
and $e_3$ are determined to be those of the $x'$, $y'$ and $z'$ axes shown in Figure
\ref{fig:sec5_09}, respectively. When converting the coordinate $p_k$ in the $xyz$-coordinate system
into the coordinate $p'_k$ in the $x'y'z'$-coordinate system,
\begin{displaymath}
p'_k = \tr{(e_1\,e_2\,e_3)}(p_k - \bar{p}),
\end{displaymath}
is satisfied because $(e_1\,e_2\,e_3)$ is an orthogonal matrix.
%
\paragraph{Equilateral open polygon approximation of 2D keyline}\hfill\\
The 2D keylines obtained by projecting the 3D keylines onto the $x'y'$ planes in the previous
paragraph (simply referred to as the 2D keylines before approximation) can be approximated by
equilateral open polygons using the following procedure.

First, let the end point on the negative side of the $x'$ axis of the 2D keyline before
approximation (simply referred to as the left end) be the starting point $A_1$ after
approximation, and let the point on the keyline whose distance from $A_1$ is $r$ (= 1.0 mm) be
$A_2$.  Next, let the point on the same keyline where the distance from $A_2$ is $r$ and does not
return to the left end ($A_1$) be $A_3$.  In the same way, repeat the operation to let the point
whose distance from $A_k$ is $r$ and does not return to $A_{k-1}$ be $A_{k+1}$ (Figure
\ref{fig:sec5_11}).  However, if there are multiple points that satisfy this condition, let the
point on the keyline closer to the right end (the end point on the positive side of the $x'$ axis)
be $A_{k+1}$ (Figure \ref{fig:sec5_12}).  When the only point that has a distance $r$ from $A_n$ is
the point that returns to $A_{n-1}$, $A_n$ is the stop point of the 2D keyline after approximation
(Figure \ref{fig:sec5_13}).  All the points from the start point $A_1$ to the stop point $A_n$ are
connected on the sides to make the 2D keyline after approximation.  The 2D keyline after
approximation is an equilateral open polygon with $n$ vertices, $n-1$ sides, and $r(n-1)=n-1$ mm in
total length.
%
\begin{figure}[h!] \centering
  \begin{minipage}[t]{0.25\textwidth} \centering \includegraphics[width=5cm]{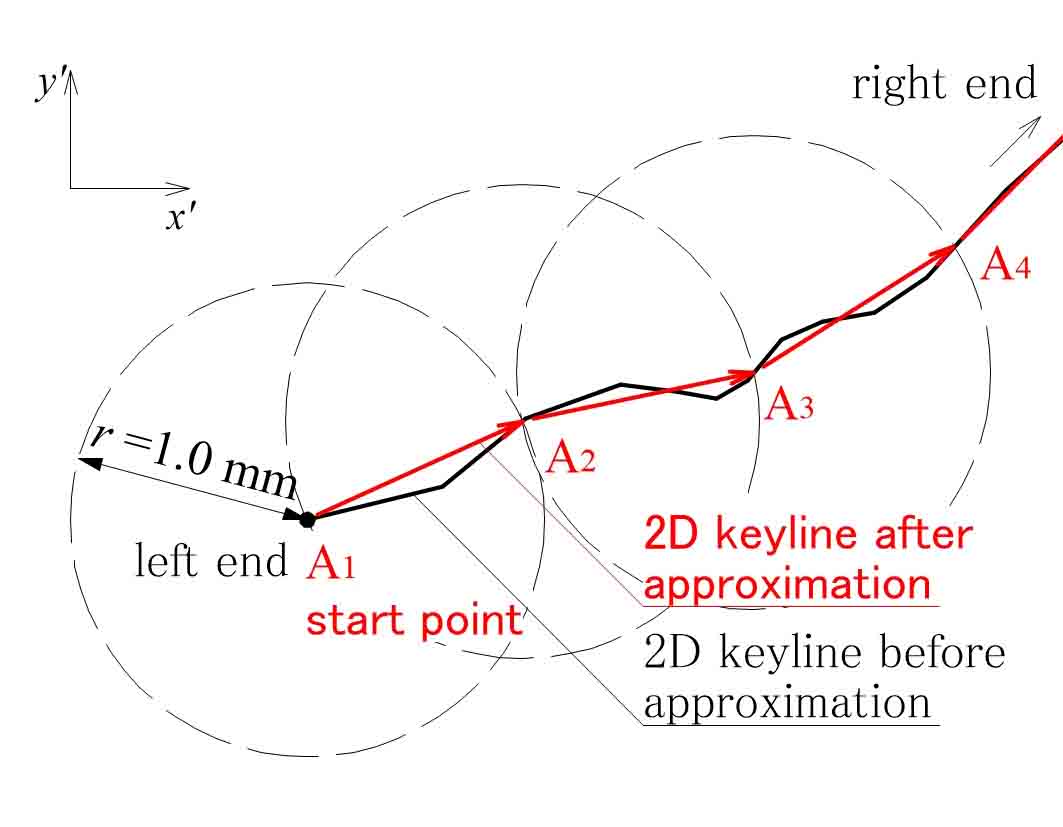}
    \caption{Procedure for equilateral open polygon approximation of 2D keyline (C07 lower
keyline).}
    \label{fig:sec5_11}
  \end{minipage} \qquad
  \begin{minipage}[t]{0.25\textwidth} \centering \includegraphics[width=5cm]{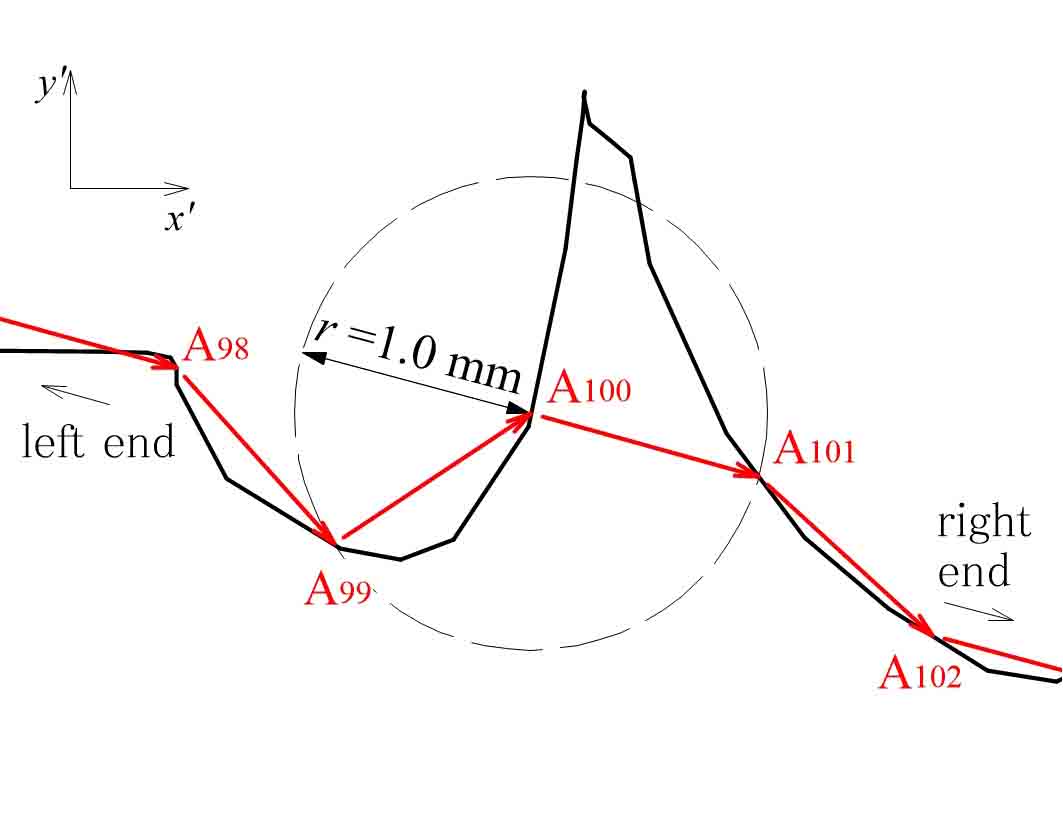}
    \caption{Method of determining vertex of equilateral open polygon when there are multiple points
whose distance is r (C07 lower keyline).}
    \label{fig:sec5_12}
  \end{minipage} \qquad
  \begin{minipage}[t]{0.25\textwidth} \centering \includegraphics[width=5cm]{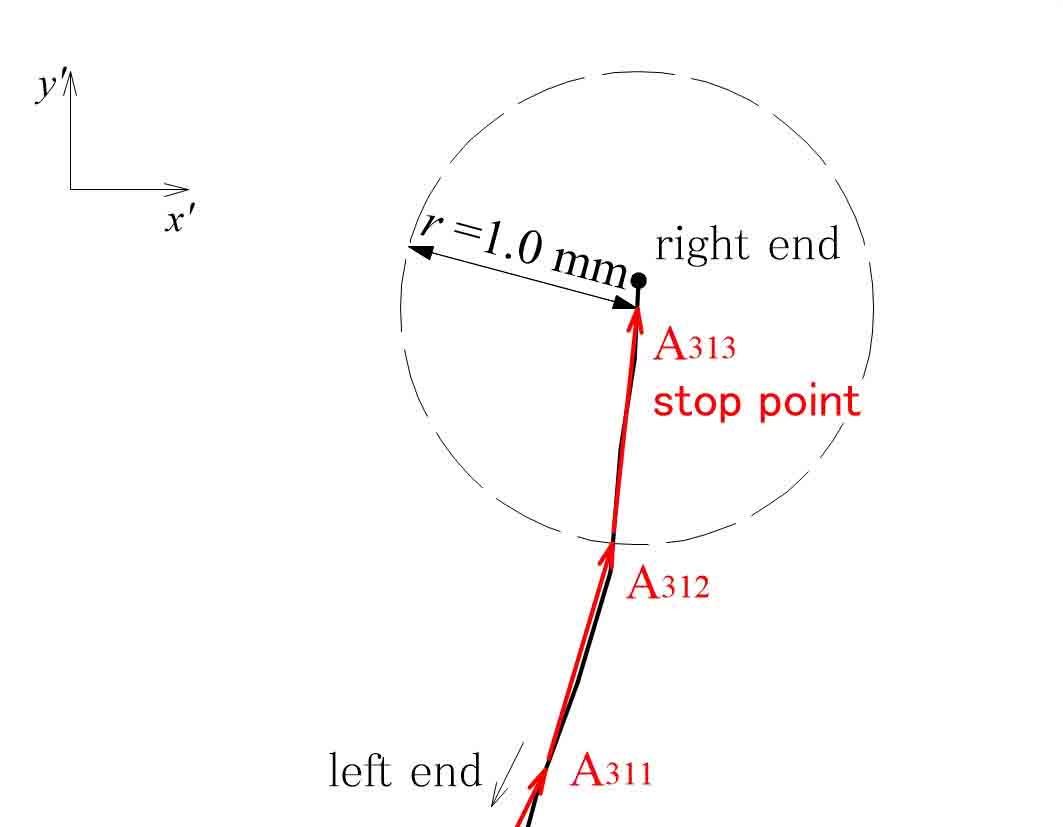}
    \caption{Method of determining stop point of equilateral open polygon (C07 lower keyline).}
    \label{fig:sec5_13}
  \end{minipage}
\end{figure}
%
\paragraph{Coordinate conversion of 2D keylines after approximation by principal component analysis}\hfill\\
When measuring the pantiles as shown in Figure \ref{fig:sec5_09}, we place them one by one on the
table by hand. Therefore, the position and rotation angle of each pantile is slightly different. To
compare the curves of the keylines of the measured pantiles, we have to eliminate the effects of the
positions and rotation angles and convert them into a coordinate system determined by only the
keyline shape. This effect is already eliminated by conversion from the $xyz$-coordinate system into
the $x'y'$-coordinate system, where the center of gravity of the point sequence consisting of all
the vertices of the 2D keyline is $O'$, the orientation of its first principal component is the
$x'$ axis, and the orientation of its second principal component is the $y'$ axis. However, the
number and position of the points that compose the point sequence have been changed due to the
equilateral open polygon approximation. Therefore, the center of gravity of the point sequence
consisting of all vertices of the approximated 2D keyline does not generally coincide with $O'$, nor
does the orientation of its first principal component coincide with the $x'$ axis, nor does the
orientation of its second principal component coincide with the $y'$ axis. Because the relationship
between the shape of the keyline and the coordinate system has been lost, the coordinate system must
be converted again into one determined solely by the shape of the keyline.  Therefore, by performing
the principal component analysis again on the point sequence consisting of all the vertices of the
2D keyline after approximation, we convert the coordinates into the $x''y''$-coordinate system with
the center of gravity of the point sequence as $O''$ and the first principal component as the $x''$
axis.  In the $x'y'$-coordinate system, let $q_1,q_2,\dots,q_n$ be the two-dimensional column
vectors that represent the plane coordinates of the $n$ points of the 2D keyline after
approximation. Then their center of gravity $\bar{q}$ is given by
\begin{displaymath}
\bar{q} = \frac{1}{n}\sum_{k=1}^n q_k. 
\end{displaymath}
Let $f_1,f_2$ be normalized eigenvectors that correspond to the eigenvalues $\mu_1,\mu_2$
($\mu_1 \geq \mu_2$) of covariance matrix $D$ defined by
\begin{displaymath}
 D = \frac{1}{n} \sum_{k=1}^n (q_k - \bar{q}) \tr{(q_k - \bar{q})}.
\end{displaymath}
Here, the directions of $f_1$ and $f_2$ are determined so that the diagonal components of
$(f_1\, f_2)$ are all positive.  When converting the coordinate $q_k$ in the $x'y'$-coordinate
system into the coordinate $q'_k$ in the $x'y'z'$-coordinate system,
\begin{displaymath}
 q'_k = \tr{(f_1\, f_2)} (q_k - \bar{q}),
\end{displaymath}
is satisfied because $(f_1\, f_2)$ is an orthogonal matrix.

In the following, the $x''y''$-coordinates obtained by this coordinate conversion will be redefined
as $xy$-coordinates.
%
\paragraph{Estimation of inflection point of approximate curve of 2D keyline}\hfill\\
The approximate curve of the 2D keyline of the pantile has one global inflection point and is
clearly asymmetric across the inflection point.  In general, it is difficult to formulate such a
curve in terms of a single elastica.  Therefore, we estimate the global inflection point of the
approximate curve, then we divide the curve on both sides of the inflection point and approximate
them with different elastic curves.  We estimate the inflection point using a method inspired by the
Ramer-Douglas-Peucker (RDP) algorithm \cite{douglas,ramer}, which was devised to simplify
an open polygon with many vertices by thinning out the vertices (Figure \ref{fig:sec5_14}).  We call
this the RDP method:
\begin{enumerate}
  \item Among the vertices of the open polygon, let the point farthest from the line segment
    connecting the two end points be 3.  Let the end point on the side with the inflection point
    (viewed from point 3) be 2, and let the end point on the side without the inflection point be 1.
  \item Among the vertices of the open polygon between points 2 and 3, let the point farthest from
    line segment 2-3 be 4.
  \item Among the vertices of the open polygon between the points $P$ ($P=3,4,5,\dots$) and $P+1$,
    repeat the calculation to let the point farthest from the line segment $P$($P+1$) be $P+2$.
    However, the numbers of point $P$ and point $P+1$ are switched and the next calculation is
    performed if point $P+1$ is the upper side of line segment ($P-1$)$P$ and point $P+2$ is also
    the upper side of line segment $P$($P+1$), or, conversely, if point $P+1$ is the lower side of
    line segment ($P-1$)$P$ and point $P+2$ is also the lower side of line segment $P$($P+1$).
  \item When points $P+1$ and $P+2$ become adjacent vertices on the open polygon, the calculation is
    terminated and the inflection point is estimated to be $P+2$.
\end{enumerate}
%
\begin{figure}[h!]
  \centering \includegraphics[width=12cm]{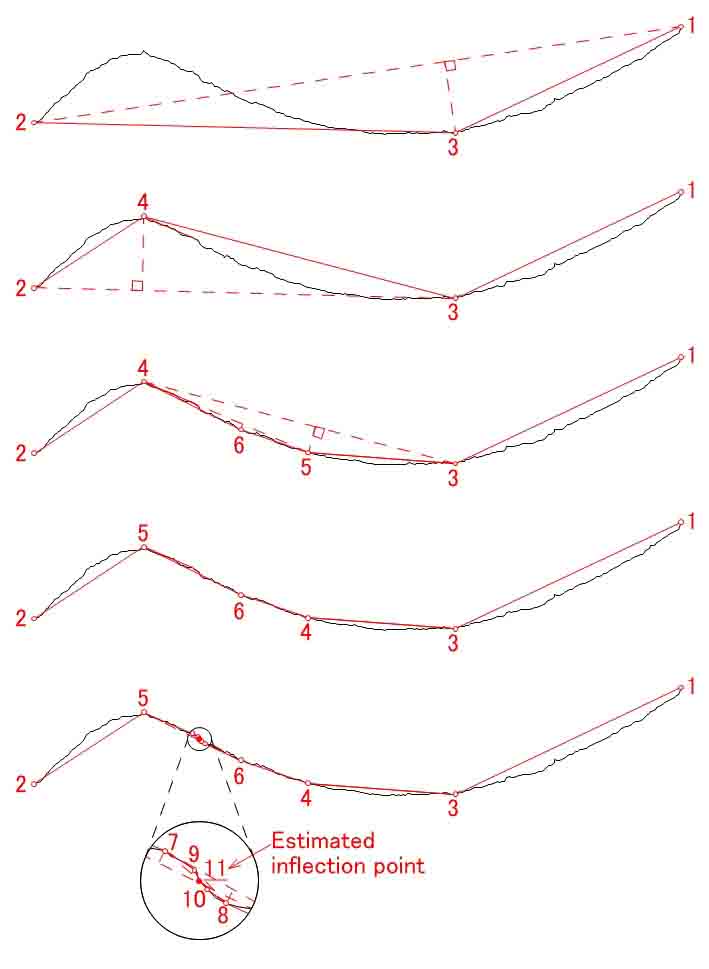}
  \caption{Example of inflection point estimation by RDP method (C07 lower keyline). In the first three curves we apply steps (1), (2) and (3), respectively. In the fourth and fifth curves, following step (3), we switch point 4 to point 5, and point 7 to point 11, respectively. Finally, in the augmented image at the bottom, following step (4), the inflection point is estimated to be point 11.}
  \label{fig:sec5_14}
\end{figure}
%
Because the RDP algorithm is intended to thin out the vertices, its calculation is terminated when an
error falls within a certain range.  However, the RDP method in this study is meant to estimate the
inflection points.  Therefore, we repeat the calculation only in the section where the inflection
point is estimated to exist until it agrees with the original open polygon, and we estimate the
converged point to be the inflection point.
%
\subsection{Approximation of 2D keyline by discrete elastica} \label{sec:5.4}
For the right (valley) sides of the inflection points of the 37 lower keylines (Figure
\ref{fig:sec5_15}) of the 37 pantiles, we tried to approximate the discrete elastica by the method
already described.  Among $p=(x_0,y_0,h,\phi,z,q,k)$ shown in \eqref{eqn:param}, $(x_0,y_0 )$ were
set as the inflection point to allow the discrete elastica to pass through the inflection point
after approximation, and the remaining five parameters were optimized.  An example of approximation
of a keyline of a pantile (C07 lower keyline) by discrete elastica is shown in Figure
\ref{fig:sec5_16}, the calculation results of the 37 approximated discrete elasticae are shown in Figure
\ref{fig:sec5_17}, and the calculation results of the parameters $h$, $\phi$, $z$, $q$, $k$ are shown in
Table \ref{table:results}.
%
\begin{figure}[h!]
  \centering \includegraphics[width=12cm]{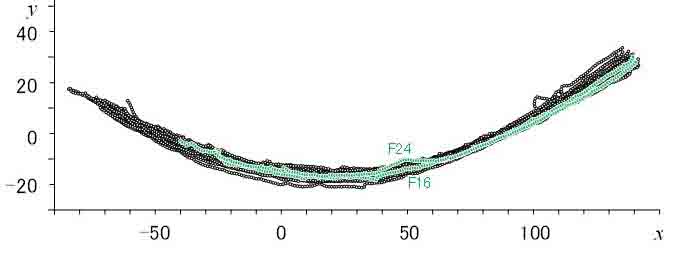}
  \caption{Right (valley) sides of 2D lower keylines for calculation.}
  \label{fig:sec5_15}
\end{figure}

\begin{figure}[h!]
  \centering \includegraphics[width=12cm]{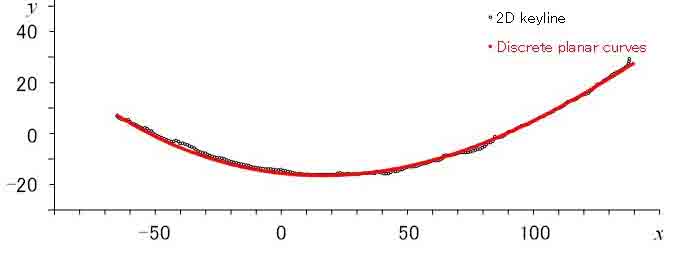}
  \caption{Example approximation by discrete elastica (C07 lower keyline).}
  \label{fig:sec5_16}
\end{figure}

\begin{figure}[h!]
  \centering \includegraphics[width=12cm]{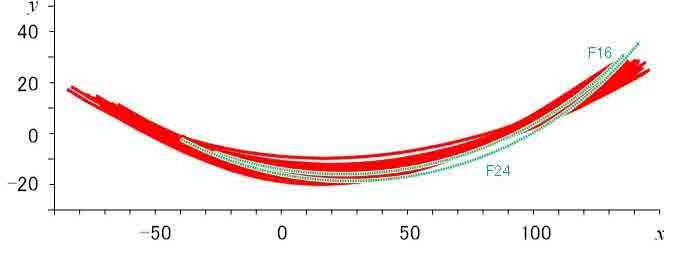}
  \caption{Calculation results of approximated discrete elasticae of 2D lower keylines.}
  \label{fig:sec5_17}
\end{figure}
%
\begin{table}[ht]
  \centering
  \resizebox{\textwidth}{!}{%
  \begin{tabular}{ll|c|cc|c|c|c}
    \toprule \multicolumn{2}{c|}{
    \multirow{2}{*}{Lower keylines right (valley) side}} & $h$ & \multicolumn{2}{c|}{$\phi$} & \multirow{2}{*}{$z$} & \multirow{2}{*}{$q$} & \multirow{2}{*}{$k$}\\
                                                         && [mm] & [rad] &[grad] &&&\\
    \midrule \multirow{2}{*}{{C02--14, F02--15, 17--23, 25}} & mean & $0.964$ & $-0.168$ & $-9.623$ & $9.001\times10^{-3}$& $5.919$ & $0.353$\\
                                                         &std. dev.& $6.564\times10^{-3}$ & $2.078\times10^{-3}$ & $0.119$ & $1.263\times10^{-3}$& $1.789\times10^{-3}$& $1.235\times10^{-2}$\\
    \hline
    \multicolumn{2}{l|}{F16} & $0.945$ & $2.045$ & $117.192$ & $67.36$ & $5.226$ & $27.43$\\
    \hline
    \multicolumn{2}{l|}{F24} & $1.007$ & $2.478$ & $141.986$ & $121.8$ & $28.28$ & $27.39$\\
    \bottomrule
  \end{tabular}
  }
  \caption{Calculation results for $h$, $\phi$, $z$, $q$ and $k$ (mean and standard
    deviation). Because the values for F16 and F24 were extremely different from the others, the
    means and standard deviations were calculated for the 35 keylines except for F16 and F24, and
    the values for F16 and F24 were written separately.}
  \label{table:results}
\end{table}
%
As Table 1 shows, the 35 key lines (except for F16 and F24) show very little variation in the
calculated results of the parameters, and the discrete elasticae are similar in both shape and
rotation angle.  The $k$ values are close to $0.3$, and the shapes are close to sine curves.
However, the values of $\phi$, $z$, $q$ and $k$ for F16 and F24 are extremely different from the
others. The $k$ values exceed 10, and the shapes are close to arcs.

Figure \ref{fig:sec5_18} shows the 37 translated discrete elasticae shown in Figure
\ref{fig:sec5_17} with the inflection point $(x_0, y_0 )$ at the origin.  The variations of the
discrete elasticae are larger than that in Figure \ref{fig:sec5_17}, suggesting that a certain
number of variations and errors may be included in the estimations of the inflection points.  In F16
and F24 in particular, the estimated inflection points are located closer to the right (valley)
sides. Therefore, we can assume that the areas around the inflection points where the curvature is
small are largely omitted, and the discrete elasticae are approximated to be close to the arcs.  The
effect the local unevenness of the keyline has on the estimation of the inflection point should be
examined.
%
\begin{figure}[h!]
  \centering
  \includegraphics[width=12cm]{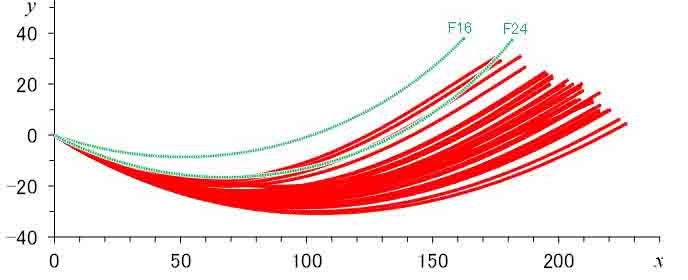}
  \caption{Result of translating discrete elasticae in Figure 17 so that each inflection point is at
    origin. Calculation results of F16 and F24 are clearly quite different from others.}
  \label{fig:sec5_18}
\end{figure}
%
In conclusion, we found that 35 of the 37 lower keylines of the handmade pantiles could be
approximated by discrete elasticae with a very small variation on the right (valley) side of the
inflection point.  However, the errors of the estimated positions of the inflection points may
affect the accuracy of the approximations.
%
\section*{Acknowledgements} 
The authors would like to thank Professor David Brander for encouragement. They also express their
thanks to Professors Nozomu Matsuura, Jun-ichi Inoguchi, Kenjiro T. Miura, Satoshi Kanai and Dr.
Masahisa Asada for fruitful discussions. They would like to thank Mr. Konosuke Onishi, Mr. Yunosuke
Onishi, Associate Professor Toshikazu Inoue, and Mr. Daisuke Mitsumoto for their cooperation in the
housing survey. This work was initiated by the 2018 IMI Joint Use Research Program Short-Term Joint
Research No.20180008, and supported by JSPS Kakenhi JP16H03941, 17K00741, 20K12520, 21K03329 and JST
CREST Grant Number JPMJCR1911.
%

\end{document}